\documentclass{article}

\usepackage{watkins}

\title{Proper Learning of Shallow All-to-All Quantum Circuits}
\author[1,2]{Steven Kordonowy\thanks{skordono@ucsc.edu. Contributions were made entirely during author's internship at JPMorganChase.}}
\author[1]{Jacob Watkins\thanks{jacob.a.watkins@jpmchase.com}}
\affil[1]{Global Technology Applied Research, 
JPMorganChase, New York, NY 10001, USA}
\affil[2]{University of California, Santa Cruz, CA 95060, USA}

\date{August 20, 2026}

\usepackage{xpatch}
\xapptocmd{\appendices}{%
    \crefalias{section}{appendix}%
    \crefalias{subsection}{appendix}%
}{}{\PatchFailed}

\usepackage[figure,boxed,linesnumbered]{algorithm2e}
\DontPrintSemicolon
\SetKwInOut{Input}{Input}
\SetKwInOut{Output}{Output}
\SetKwComment{Comment}{/* }{ */}
\SetKw{Parameters}{Parameters : }
\SetKwFunction{FactorizeFrontGate}{FactorizeFrontGate}
\SetKwFunction{ForwardLearn}{ForwardLearn}
\IncMargin{1em}
\let\oldnl\nl
\newcommand{\nlnonumber}{\renewcommand{\nl}{\let\nl\oldnl}}

\newcommand{\backvec}[1]{\reflectbox{$\vec{\reflectbox{$#1$}}$}}

\usepackage{quantikz}

\begin{document}

\maketitle

\begin{abstract}
    This work considers a variation on the problem of learning shallow quantum circuits. Given query access to the circuit, as well as knowledge of its gate layout, we consider the task of learning the specific gates used in the circuit, producing an operationally-equivalent circuit matching this structure. Building on recent work for learning Haar random brickwork circuits~\cite{fefferman2024anti}, we identify a meta-algorithmic framework for learning broad classes of circuits based on iterative local gate inversions at the front and back of the circuit. We apply these techniques to study random, all-to-all, two-local circuits, and provide analytical and numerical evidence that this ensemble undergoes a sharp learnability transition at depth $d^* \sim \log_2 n + \log_2\log_2 n$ in the large size limit, based on an analysis of lightcone growth. These results have implications for recently proposed quantum cryptographic schemes based on the difficulty of circuit learning, though there are important distinctions with respect to our setting that suggest avenues for future study.
\end{abstract}

    \section{Introduction} \label{sec:introduction}

Many problems arising in the quantitative sciences present themselves in a forward direction: given a description of a process, determine the resulting effects. Examples include computing the output of some Boolean circuit given its description, or predicting the motion of a dynamical system given initial conditions. While such problems can be difficult, inverse problems are typically much harder, if not altogether intractable. Examples include the determination of a Boolean circuit given input-output pairs, the inference of a differential equation given data on its solution, or the factorization of a product of primes. Despite their difficulty to solve in general, inverse problems are of substantial practical and academic interest. In fact, within the field of cryptography, this hardness becomes a feature rather than a liability, forming the basis for the security of cryptographic primitives such as one way functions that are useful in key exchange, signatures, and other protocols. 

Inverse problems are frequently cast in the language of ``learning" or inference, and are ubiquitous in quantum information and computation. Examples include learning the Hamiltonian of a system given its thermalized (Gibbs) states or time-evolved observables~\cite{bairey2019learning,bakshi2024structure,gu2024practical}. In discrete-time settings, one may wish to learn a unitary quantum circuit $C$ given information about its output states, or black-box access to its operation $U_C$. Without additional assumptions, this problem appears very difficult in general.

Due to the small memory and large physical error rate of modern quantum devices, there has been active interest in understanding the properties and capabilities of relatively \emph{shallow} quantum circuits. By shallow, we simply mean low depth $d$ relative to the number of qubits $n$ (formalized in various was, such as $d$ constant or polylogarithmic in $n$ as $n$ grows asymptotically). One interesting thread of inquiry in this direction has been the study of learning shallow quantum circuits. Prior work has identified several settings of interest, from producing an approximate output unitary~\cite{huang2024learning} to producing some given output state~\cite{landau2025learning, kim2024learning}, and providing algorithms in each case with various computational complexities, ancilla requirements, and depth of the output circuit. 

Besides purely academic interest, the question of learnability of shallow circuits has implications for recently proposed quantum cryptography protocols. Such proposals are founded on the hardness of learning quantum circuits by an adversary given information such as output states~\cite{fefferman2025hardness} or their classical shadows~\cite{niroula2026digital}. These protocols use descriptions of quantum circuits as private keys, and publish quantum (\cite{fefferman2025hardness}) or classical (\cite{niroula2026digital}) data resulting from these circuits as public keys. In order for the proposed protocols to be secure, an adversary must not be able to infer some valid quantum circuit which matches the public data, within timescales commensurate with the cryptographic protocol itself. 

One general learning task is to find a function whose output matches some initial data set. The learning algorithm is \emph{proper} if the function it find matches the structure of the true problem function (according to some domain-specific requirements); otherwise, the learning algorithm is \emph{improper}. As observed in~\cite{fefferman2025hardness}, existing shallow circuit algorithms are improper with respect to a family of circuits with given depth and qubit parameters. By this, we mean that available learning algorithms, given only output state or query access to $C$, produce a representation $C'$ that is either (a) larger depth, (b) uses ancilla qubits, or both. In the context of circuit-based cryptographic protocols, such learned representations could be easily detected and filtered by the receiving party, and thus do not threaten the protocol's security. This paper is therefore interested in proper circuit learning algorithms.

To the authors' knowledge, the first and perhaps only existing work on the proper learning of shallow quantum circuits comes from~\citeauthor{fefferman2024anti}~\cite{fefferman2024anti}. There, as a corollary of their results on anti-concentration bounds on Haar random circuits, the authors derive algorithms for learning brickwork circuits with gates taken from certain discretizations of the Haar ensemble. Because the learner knows the brickwork structure of the circuit, and because of the author's results therein which imply the ability to detect changes to input states to $C$ via tomography, the authors show that the learner using ``local inversions" can properly learn these brickwork circuits out to $O(\log n)$ depth in $\poly(n)$ time. While the authors specifically analyze the one-dimensional brickwork case, they note that the learning algorithm applies to higher geometric dimension and a ``wide class of architectures with good lightcone properties." 

Thus, the work of~\cite{fefferman2024anti} makes important strides towards understanding settings where proper circuit learning is achievable. At the same time, it introduces many questions. For one, while the authors note that any Haar circuit with ``good lightcones" may suffice, this lightcone structure may become phenomenologically richer when one drops structural assumptions such as geometric locality, while retaining interaction locality. Additionally, one might wonder to what extent can the local inversion techniques be applied to gates beyond the Haar ensemble. While mixing bounds for single-qubit induced channels may be harder to prove rigorously outside simple, standard gate ensembles, a designer of a circuit-based quantum cryptographic protocol (or their adversary) may be satisfied with empirical observations of the information scrambling a circuit induces.

\subsection*{Overview of present work and results}
This work seeks to better understand the capabilities of local inversion protocols for proper learning of unitary shallow circuits, and also outline a more general framework for utilizing such protocols. One of our main lines of inquiry is the analysis of single-qubit lightcones for general interaction-local circuits. Even in the absence of geometric locality, the fact that gate interactions restrict to a small subset of qubits can be a powerful and informative feature about how causal information spreads in a circuit. We identify \emph{causal boundaries} as a key ingredient for general lightcone-based learning, and derive some simple graph theoretical results that connect this to inversion of outer gates in a circuit.

After introducing the necessary concepts and terminology, we turn to the question of algorithmic implementation. We argue that, in addition to the ``forward learning" approaches considered in~\cite{fefferman2024anti}, a learner may also have access to ``backward" learning by using the backward lightcones. For general gate layouts, we provide examples where iterative forward-backward learning is strictly more powerful than a forward-only approach. We identify three essential criteria for a local factorization learning protocol to succeed.

Finally, we apply our framework to discuss the learnability of highly non-local (in the geometric sense) ensemble: random all-to-all circuits with 2-qubit gates. Using a combination of rigorous proofs, heuristic derivations, and numerics, we argue that this ensemble has good lightcones out to depth
\begin{equation*}
    d^*(n) = \log_2 n + \log_2 \log_2 n - \log_2 \log_2 e + o(1)
\end{equation*}
(asymptotically almost surely), forming a sharp threshold at large $n$. Combining these results with the Haar random anti-concentration of~\cite{fefferman2024anti}, this suggests that random, all-to-all ensembles of (suitably discretized) 2-qubit Haar gates can be learned, asymptotically, out to depth precisely $d^*(n)$ using the local factorization technique. Interestingly, unlike the brickwork setting, it is the lightcone structure, and not the mixing lower bounds, that limits the depth of efficient learnability as a function of $n$. This logarithmic dependence appears consistent with known scrambling results for Haar-random circuits~\cite{brown2012scrambling,dalzell2022random}. It also nearly matches speed limits in lightcone growth imposed by causality. The extra factor of $\log_2 \log_2 n$ can be explained by the fact that a uniformly random matching makes no purposeful attempt to connect qubits across the lightcone.

Taken together, the framework and results introduced herein more clearly elucidate the requirements, capabilities, and limitations of local factorization protocols for proper learning of unitary quantum circuits.  

\subsection*{Paper Organization}
After a brief preliminary overview (\Cref{sec:preliminaries}) on graphical notions for unitary circuits, the paper consists of three main sections, and concludes with discussion in~\Cref{sec:discussion}. \Cref{sec:causality} provides some fundamental notions regarding causality in circuits in, such as lightcones and ``pivot'' gates. Then, in~\Cref{sec:algorithm}, the meta algorithm for learning circuits by local inversions is discussed, along with the requisite criteria for success. Finally, in~\Cref{sec:random_circuit_ensemble}, we apply these notions to the study of learning random circuits with all-to-all connectivity, and derive a transition threshold at which such circuits admit effective learning protocols. 
    \section{Notation and Preliminaries} \label{sec:preliminaries}

    Commonly used subsets of the integers $\bb{Z}$ include the positive integers $\bb{Z}_+$, the natural numbers $\bb{N} = \bb{Z}_+ \cup \{0\}$, the first $n$ positive integers $[n] \coloneqq \{1,\ldots, n\}$, and the first $n$ natural numbers $[n]_0 \coloneqq \bb{Z}_n = \{0,1,\ldots,n-1\}$. The notation $k\vert n$ means $k$ divides $n$. Let $\cal{U}(N)$ denote the group of unitary operations on an $N$-dimensional vector space (so $N = 2^n$ for an $n$-qubit unitary). The asymptotic notations $O, \Omega, \Theta, o$ are used in their standard way throughout this work, with $f \lesssim g$ used instead of $f = O(g)$ when $g$ is a large, complicated expression. $\widetilde{O}$ is $O$ with polylogarithmic factors suppressed, usually in $n$. In the context of probabilities, an event $E$ occurs \emph{almost surely} if it occurs with probability one. Let $E_n$ be an event parametrized by $n$. We say that $E_n$ occurs \emph{asympotically almost surely} (a.a.s.) if $\Pr(E_n) = 1 - o(1)$. 

    \subsection*{Circuit connectivity}
        For our purposes, a (unitary) quantum circuit $C$ can be thought of as a labeled, acyclic, directed graph, whose vertices represent gates and edges represent qubits at a particular step in the circuit. Each vertex has a label from some set $\cal{G}$ which specifies the particular unitary operation applied. Because we are interested in invertible circuits and gates, the set of input and output edges through each gate are the same. Thus, the working computational space remains unchanged throughout, and we can assign a label $q_1, q_2, \ldots, q_n$ to each edge in the circuit. This collection $Q = \{q_i\}_{i\in[n]}$ denotes the set of qubits. In principle, one could also work with qudits of internal dimension $d> 2$ and much of the analysis will be the same. The input and output edges for $C$ are ``hanging," or may be considered connected to special input and output vertices. Figure \ref{fig:circuit_graph_schematic} illustrates the basic correspondence between circuit graphs as presented and their usual representation.
        \begin{figure}
            \centering
            \includegraphics[width=0.4\linewidth]{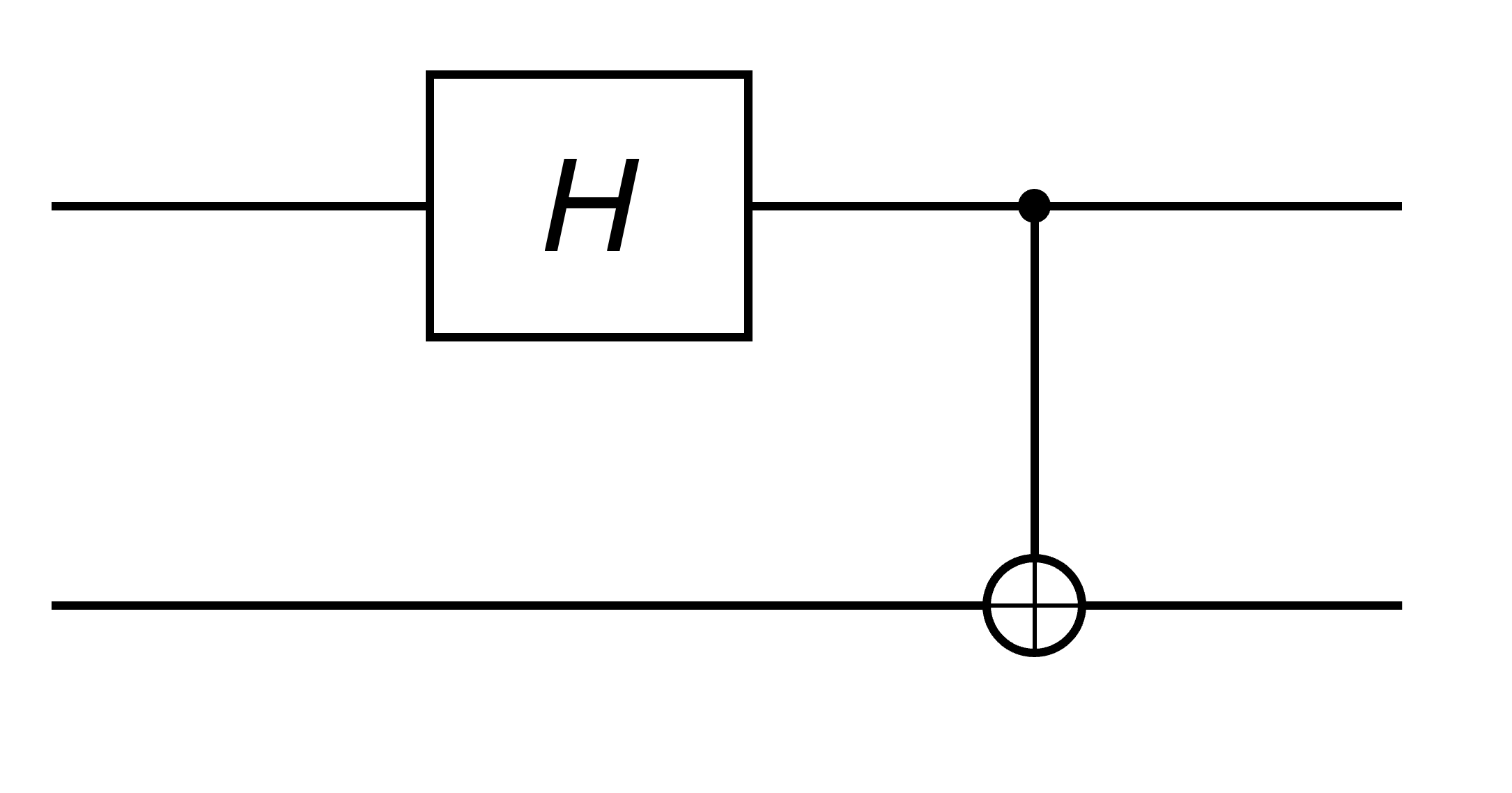}
            \includegraphics[width=0.4\linewidth]{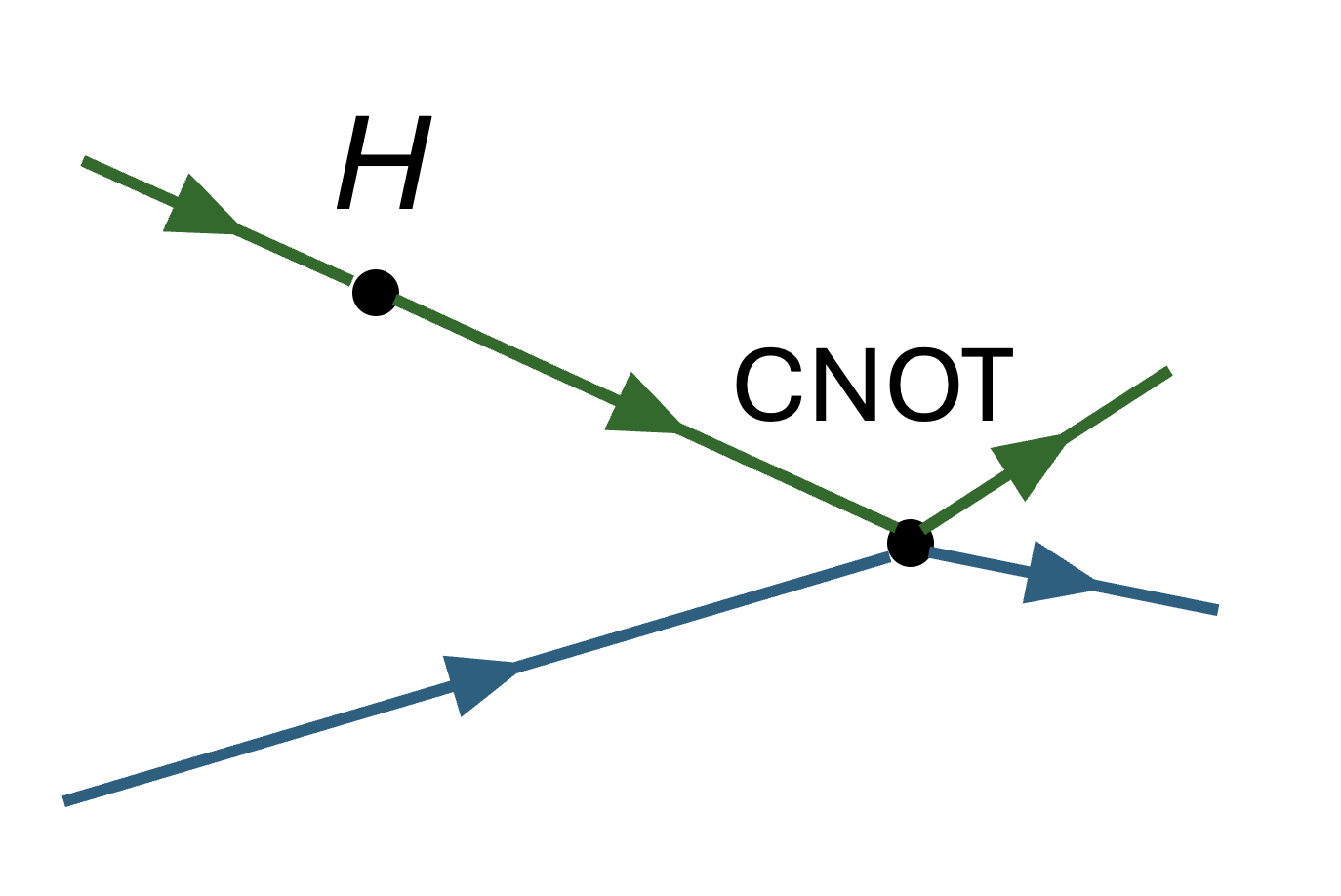}
            \caption{Example correspondence between standard quantum circuit notation and directed graph representation. Colors here are used to label distinct qubits. Each unitary gate is represented as a labeled vertex with equal input and output degrees.}
            \label{fig:circuit_graph_schematic}
        \end{figure}
        Every circuit $C$ induces a unitary map $U_C$ on the state space $\cal{S} = \bb{C}^{2^n}$ of the qubits. The circuit $C$, and its subcircuits, also induce operations on \emph{subsystems} which are quantum channels, which will often be given the label $\cal{E}_C$. For a gate $G$, let $\partial G\subseteq Q$ denote the qubits incident on $G$.
        
        The distance between two edges $\dist(e, e')$ is the length of any shortest path from $e$ to $e'$ in $C$ assuming a path exists (else $\dist = +\infty$), with valid paths following directed edges. A path $q\in Q$ to $q' \in Q$ is simply any edge-path in this graph with terminal edges given the corresponding qubit labels. Overloading notation, the distance $\dist(q, q')$ between $q$ and $q'$ is the length of any shortest path from edges labeled $q$ and $q'$ qubits assuming a path exists, else $+\infty$. Note that $\dist$ is not symmetric in general due to directedness of $C$. The (maximum) \emph{depth} of a circuit $C$ is the length of a longest path in $C$. 
        
        Let $d_q$ be the depth of $C$ with respect to $q \in Q$, i.e., the length of the path along qubit $q$ from input to output. There are exactly $d_q$ gates in the path of $q$. Having different ``depths" for each qubit, though sometimes necessary, can also be more difficult for analysis. A \emph{layered} circuit is one where the gates are naturally partitioned into layers, such that all paths from input to gates in layer $\ell$ are length $\ell$. In principle, one could always add identity gates to a generic circuit to make it layered; however, this approach is inconvenient for our purposes, as we wish to treat idling (or near idling) on separate terms from operations which induce noticeable effects. 
        
        We index qubit layers starting from zero and gate layers from $1$. Let $q(\ell)$ denote the edge corresponding to qubit $q$ at depth $\ell$. Negative $\ell$ indexes qubits from the end of $C$: $q(-1)$ is qubit $q$ at output, $q(-2)$ is one layer further back, etc. Similarly, let $G_\ell(q)$ denote the $\ell$th gate acting on qubit $q$, for $\ell = 1, 2$, etc. As with qubits, we use negative $\ell$ for indexing gates from the back of the circuit.

    \section{Qubit lightcones and causal influence} \label{sec:causality}
    From one perspective, the defining property of shallow circuits is that influence between qubits is restricted: a local patch of qubits does not have influence over, and is not influenced by, every other qubit. This idea is central to most existing shallow circuits learning protocols, including ours. Thus, we take this section to carefully define some concepts related to causality and influence, with respect to graph connectivity.

    Path connectedness to and from a given qubit in a circuit graph $C$ is captured by the notion of lightcones, borrowing physics terminology. Recall that $q(\ell)$ denotes qubit $q$ at depth $\ell$.
    \begin{definition}[Lightcones]\label{def:LC_CS}
        For $\ell \in \bb{Z}$, the forward (backward) lightcone $\vec{L}_{\ell,d}(q)$ ($\backvec{L}_{\ell,d}(q)$) is the set of qubits in the smallest subgraph of $C$ containing all paths from (to) wire $q(\ell)$ with distance at most $d$. When $d$ is the full circuit depth, we drop $d$ in the above expressions.
    \end{definition}
    \noindent For some readers, this definition may appear unusual, as it only considers the influenced qubit set and not a subgraph. However, because this work only requires the influenced set of qubits, and not the subgraph structure, we opt for familiar and mnemonic terminology. Observe that these lightcones are naturally nested by depth, and can be built up recursively and efficiently provided the circuit is polynomially sized. While the concept of  a lightcone is frequently invoked in settings of geometric locality, due to restrictions on causal influence, we remark that a similar restriction may arises merely from imposing interaction locality (in our setting, restricting the degree of each vertex gate).

    In order to learn $C$, we probe the causal structure in the circuit as exhibited by the collection of qubit lightcones. A learning agent is given black box access to $U_C$, and though they do not know the gates themselves, they are given significant information about the circuit layout. The boundary of a lightcone is the set of newly influenced qubits after the addition of a gate. This provides the key mechanism for our learning protocol.
    \begin{definition} \label{def:causal_boundary}
        The forward \textbf{causal boundary} of $q$ at $\ell \in \bb{Z}$ is defined by the set difference of forward lightcones across adjacent layers.
        \begin{equation*}
            \partial \vec{L}_\ell(q) \coloneqq \vec{L}_{\ell-1}(q)\setminus \vec{L}_\ell(q)
        \end{equation*}
        Similarly, the backwards causal boundary is defined by $\partial\backvec{L}_\ell(q) \coloneqq \backvec{L}_\ell(q) \setminus \backvec{L}_{\ell-1}(q)$.
    \end{definition}
    In addition to the above ``layer-changing" definition of causal boundary, one can equally consider ``depth-changing" causal boundary, where, at fixed layer, one deepens the lightcone via the next layer of gates. These alternates do not add additional information in many cases of interest, though in~\Cref{sec:random_circuit_ensemble} we will find them slightly more convenient to analyze. See~\Cref{app:other_lightcone_defs} for further discussion. On the other hand, the forward and backwards causal boundaries in~\Cref{def:causal_boundary} provide independent and useful information, as we will see below.
    
    When the causal boundary is not empty, this implies the corresponding gate $G_\ell(q)$ strictly increases lightcone size. When $G$ is at the edge of the circuit, this is a detectable effect, and forms the basis for our learning protocol. We provide some terminology for such gates.
    \begin{definition}
        An outer gate in $C$ is simply a gate at the front or back of the circuit. An outer gate $G$ that has some qubit $q \in \partial G$ with nonempty causal boundary is called a \textbf{pivot gate}. 
    \end{definition}
    \noindent This definition applies symmetrically to forward or backward lightcones. The importance of pivot gates for circuit learning may be gleaned from the observation that nontrivial causal boundaries imply gates which are ``critical'' for connecting a pair of qubits.
    \begin{lemma}[Lightcones and influence testing] \label{lem:causal_boundary_iff_gate_remove}
        $q'\in \partial \vec{L}_1(q)$ iff $q$ is connected to $q'$ in $C$, but removing gate $G_1(q)$ will disconnect $q$ and $q'$. Similarly, $q \in \partial \backvec{L}_{-1}(q')$ iff $q$ connects to $q'$ but not when gate $G_{-1}(q')$ is removed.
    \end{lemma}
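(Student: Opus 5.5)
Note first that the definition as printed reads $\partial \vec{L}_\ell(q) = \vec{L}_{\ell-1}(q)\setminus \vec{L}_\ell(q)$. For $\ell=1$ this is $\vec{L}_0(q)\setminus\vec{L}_1(q)$. Here $\vec{L}_0(q)$ is the set of qubits reachable from the input wire $q(0)$, and $\vec{L}_1(q)$ is the set reachable from the wire $q(1)$, which is the wire just after $G_1(q)$. So $q'\in\partial\vec{L}_1(q)$ means $q'$ is reachable from $q(0)$ but not from $q(1)$. The plan is to unpack both sides into statements about directed paths in the circuit graph and match them. The backward statement will follow by the same argument applied to the reversed graph, which swaps forward and backward lightcones and front and back outer gates.

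\textbf{Step 1: path structure out of $q(0)$.} The edge $q(0)$ enters only the gate $G_1(q)$. Every directed path from $q(0)$ therefore passes through $G_1(q)$ and then leaves along one output wire $p(1)$ with $p\in\partial G_1(q)$, or it stops at $q(0)$ itself. Consequently
\[
\vec{L}_0(q)=\{q\}\cup\bigcup_{p\in\partial G_1(q)}\vec{L}_1(p).
\]
This is the key structural identity, and it follows directly from the graph definition since in-degree equals out-degree at each gate.

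\textbf{Step 2: interpret ``removing $G_1(q)$''.} Deleting the vertex $G_1(q)$ means the wire of each $p\in\partial G_1(q)$ passes straight through, so $p(0)$ is joined directly to $p(1)$. In the modified circuit $C'$, the set reachable from $q$'s input is exactly $\{q\}\cup\vec{L}_1(q)$, computed in $C$, because all gates after $G_1(q)$ are unchanged. Thus ``$q$ connected to $q'$ in $C$ but not in $C'$'' means $q'\in\vec{L}_0(q)$ and $q'\notin\{q\}\cup\vec{L}_1(q)$. Since $q\in\vec{L}_1(q)$ trivially (the wire $q(1)$ itself is there), this is exactly $q'\in\vec{L}_0(q)\setminus\vec{L}_1(q)=\partial\vec{L}_1(q)$. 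Both directions of the iff come from this single equality of sets, so no separate converse argument is needed.

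\textbf{Main obstacle: conventions, not mathematics.} The real work is fixing the convention for ``connected'' and for ``removing a gate'': wires are rejoined rather than cut, and connectivity is measured from the input wire $q(0)$. I also need to confirm that lightcones are taken with full depth, so that $\vec{L}_1(q)$ computed in $C$ agrees with reachability in $C'$; this holds because $C'$ differs from $C$ only at $G_1(q)$, which lies upstream of every wire $p(1)$. Once these conventions are stated, Step 1 and the set equality are immediate. The backward statement uses the same argument with edge directions reversed: $\backvec{L}_{-1}(q')\setminus\backvec{L}_{-2}(q')$ is the set of qubits that reach the output wire $q'(-1)$ but do not reach the wire just before $G_{-1}(q')$.
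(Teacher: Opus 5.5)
Your proof is correct and is essentially the paper's argument: both hinge on the fact that once $G_1(q)$ is deleted, the qubits reachable from $q$ are exactly those in $\vec{L}_1(q)$, since acyclicity leaves everything downstream of $q(1)$ unchanged. The paper proves the two implications separately by path arguments, whereas you obtain both at once from a single set identity, with the removal and connectivity conventions stated explicitly. One correction: your Step 1 identity is false when $G_1(q)$ is not the first gate on a partner qubit $p\in\partial G_1(q)$ (the correct term is $\vec{L}_j(p)$ with $G_1(q)=G_j(p)$, and in Step 2 the rejoined wire is $p(j-1)\to p(j)$, not $p(0)\to p(1)$), but Step 2 never uses that identity, so you can simply delete Step 1.
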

    \begin{proof}
        For brevity, we only prove the equivalence for forward lightcones; an entirely parallel argument for backward lightcones is left to the reader. They may find that drawing a picture clarifies the essential idea.
        
        ($\implies$) If $q' \in \partial L_1(q)$, then $q'$ is in $\vec{L}_0(q)$ but not in $\vec{L}_1(q)$. Let $P(q,q')$ be the set of paths from $q$ to $q'$, which is nonempty. Because $q' \notin \vec{L}_1(q)$ no path $p \in P(q,q')$ starts at $q(j)$ for $j\geq 1$; otherwise such a path could be extended along the $q$ wire to start at $q(1)$. Thus, every $p \in P(q,q')$ starts at $q(0)$, and upon passing through $G_1(q)$ does not touch any $q(j)$. Consequently, removing gate $G_1(q)$ disconnects every existing path from $q$ to $q'$.

        ($\impliedby$) Suppose $q$ path connects to $q'$ in $C$, but not when $G_1(q)$ is removed. Then all paths $p$ from $q$ to $q'$ do not touch $q(j)$ for $j \geq 1$, so $q' \notin \vec{L}_1(q)$. But $q,q'$ are connected, so they must connect at $q(0)$. This implies $q' \in \vec{L}_0(q)$. Altogether, $q' \in \vec{L}_0(q) \setminus \vec{L}_1(q) \equiv \partial \vec{L}_1(q)$.
    \end{proof}
    \noindent The point of the above is that lightcones provide valuable information about the causal structures of gates, and what effects their removal has. Suppose that one can freely modify circuit inputs and freely measure circuit outputs (e.g., perform process tomography). When a shallow quantum circuit $C$ is accessed via black-box queries to $U_C$, pivot gates $G$ can be learned via a local inversion algorithm. By this we mean that a trial inverse $G'^\dagger$ can be implemented adjacent to $G$ (by appending it to the $U_C$ query). The successful inversion, or more properly, factorization, can be detected by observing whether the causal influence between $q$ and $q'$ is negligible. \Cref{fig:learning_schematic} gives a schematic of this process, while also illustrating lightcones for generic circuit layouts. 
    \begin{figure}
        \begin{subfigure}{0.5\textwidth}
            \includegraphics[width=\textwidth]{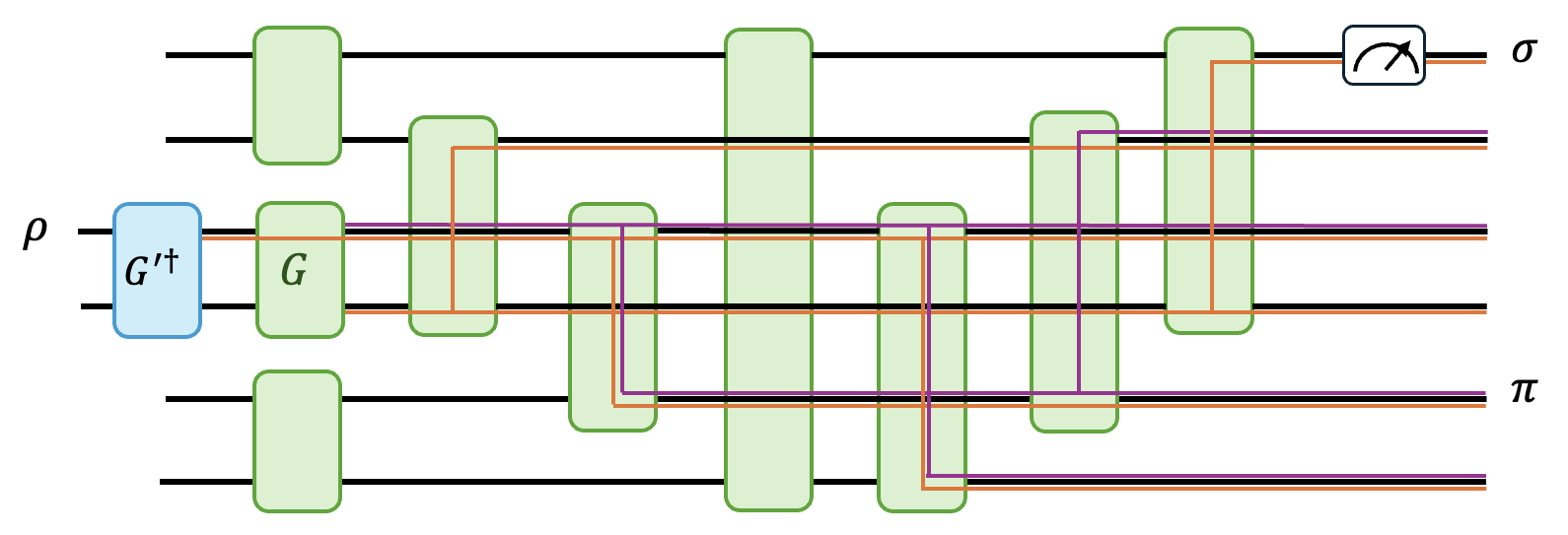} 
            \caption{Forward learning schematic.}
        \end{subfigure}
        \begin{subfigure}{0.5\textwidth}
            \includegraphics[width=\textwidth]{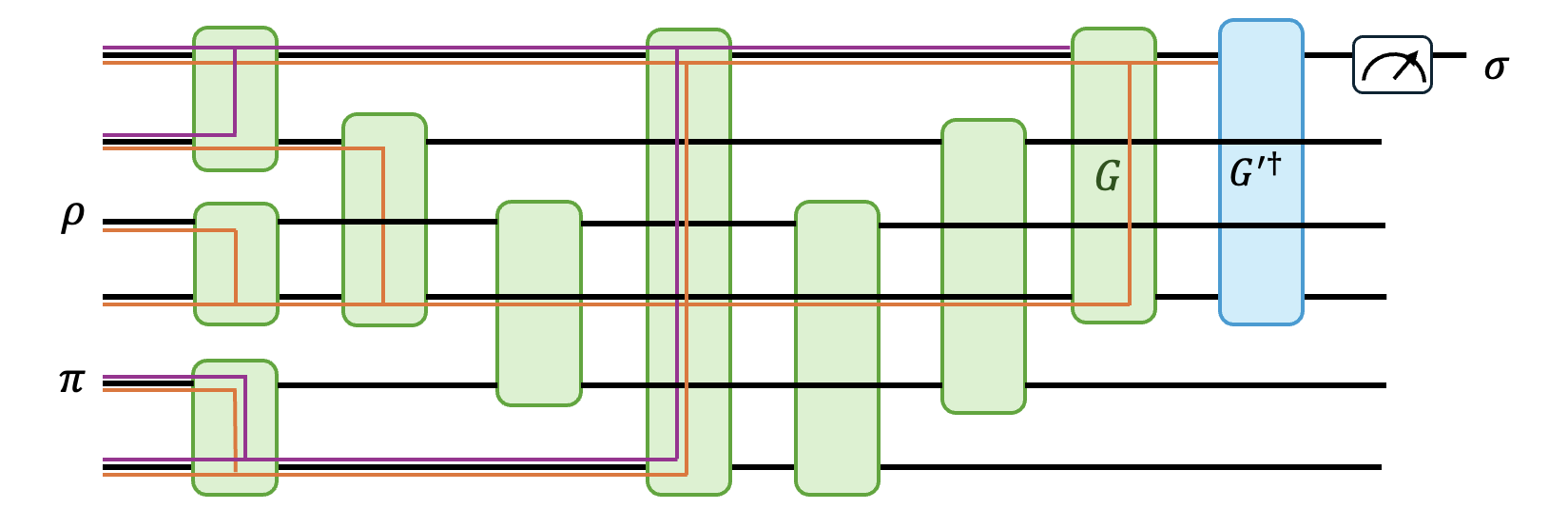}
            \caption{Backward learning schematic.}
        \end{subfigure}
        \caption{Schematic of learning algorithm based on a qubit's (a) forward and (b) backwards lightcone, shown on the same 6-qubit circuit. Green boxes indicate unitary 2-qubit gates. Colored lines indicate lightcones originating from $\rho$ (forward) or $\sigma$ (backward) in the presence of $G$ (orange) or with $G$ removed (purple). In both schemes, gate $G$ is chosen for local inversion via trial gate $G'$. The input-output pair of qubits $(\rho, \sigma)$ is used as for testing successful inversion, namely, whether $G G'^\dagger$ is equal to two single-qubit gates. By comparison, qubit $\pi$ is not useful as a pair with $\rho$ (a) or $\sigma$ (b) because the absence or presence of $G$ in the circuit does not change its connectedness to the other qubit.}
        \label{fig:learning_schematic}
    \end{figure}
    The learning framework and associated conditions for success are described in greater detail in~\Cref{sec:algorithm}.
    
    In some simple circuit layouts, the distinction between front and back pivots may be of less relevance. We illustrate this with the following observation.
    \begin{proposition}\label{prop:front_back_pivot}
        Let $C$ be a circuit of maximum depth $D$, and let $q, q'$ be qubits such that $\dist(q, q') = D$. Then there exists $G_f$ and $G_b$ that are front and back pivot gates, respectively.
    \end{proposition}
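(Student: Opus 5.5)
The plan is to take a shortest path from $q$ to $q'$ of length $D$ and show that its first and last gates are pivot gates, using \Cref{lem:causal_boundary_iff_gate_remove} to convert ``removing the gate disconnects $q$ from $q'$'' into a nonempty causal boundary.

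\textbf{Setting up the path.} Fix a path $p$ from $q$ to $q'$ of length $D = \dist(q,q')$. Since $D$ is the maximum depth, no path in $C$ is longer than $D$. The path $p$ therefore has the maximum possible length, so it must start at the input edge $q(0)$ and end at the output edge $q'(-1)$. If it started at some $q(j)$ with $j\ge 1$, prepending the wire segment from $q(0)$ would give a path longer than $D$, which is impossible. The same argument applies at the output end.

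\textbf{Front pivot.} Take $G_f = G_1(q)$, which is an outer (front) gate.
\begin{itemize}
\item Suppose some path from $q$ to $q'$ avoided $G_1(q)$. It would have to start at some $q(j)$ with $j\ge 1$, that is, after $G_1(q)$, and have length at least $\dist(q,q') = D$.
\item Extending that path backward along the $q$ wire through $G_1(q)$ to $q(0)$ would add at least one edge, giving a path of length greater than $D$. This contradicts maximality.
\item Hence every path from $q$ to $q'$ passes through $G_1(q)$, and removing $G_1(q)$ disconnects $q$ from $q'$.
\end{itemize}
By \Cref{lem:causal_boundary_iff_gate_remove}, $q'\in\partial\vec{L}_1(q)$, so this boundary is nonempty and $G_f$ is a pivot gate.

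\textbf{Back pivot.} The argument is symmetric with $G_b = G_{-1}(q')$. Any path from $q$ to $q'$ avoiding $G_{-1}(q')$ would end strictly before that gate. Extending it forward along the $q'$ wire through $G_{-1}(q')$ would give a path longer than $D$. The backward half of \Cref{lem:causal_boundary_iff_gate_remove} then gives $q\in\partial\backvec{L}_{-1}(q')$, so $G_b$ is a back pivot gate.

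\textbf{Main obstacle.} The delicate step is the path-length bookkeeping.
\begin{itemize}
\item First, $\dist(q,q')$ is defined as a \emph{shortest} path between edges labeled $q$ and $q'$. I need that every path has length at least $D$, so that any extension strictly exceeds $D$. This holds because the shortest path already has length $D$ and all paths are bounded by $D$, so all $q$-to-$q'$ paths have length exactly $D$.
\item Second, I must check that the extension along the wire through the removed gate is a legitimate directed path that strictly increases length. This requires $q$ to actually pass through some gate. If $q$ met no gate, then $\dist(q,q')$ would be finite only for $q=q'$, and one should rule out or treat this degenerate case separately, for instance by assuming $D\ge 1$.
\end{itemize}
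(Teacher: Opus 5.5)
Your proof is correct and follows the paper's argument: maximality of $D$ forces every $q$-to-$q'$ path to run from $q(0)$ to $q'(-1)$, so removing $G_1(q)$ or $G_{-1}(q')$ disconnects the pair. You also make explicit the appeal to \Cref{lem:causal_boundary_iff_gate_remove}, which the paper leaves implicit, and the $D\ge 1$ caveat, which the paper omits. The one assertion you leave unjustified (as does the paper) is that $G_1(q)$ is genuinely a front gate, i.e.\ its other input wire carries no earlier gate. This follows from the same extension trick: an earlier gate on that wire would yield a path entering $G_1(q)$ and then continuing along your path, of total length exceeding $D$; the symmetric argument applies to $G_{-1}(q')$.
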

    \begin{proof}
        By definition of $\dist(q, q') = D$, there exists a path of length $D$ connecting $q$ and $q'$, and there are no shorter paths. Any such path necessarily spans the entire circuit from input to output, else $D$ is not the maximum depth. Thus all such paths connect the input $q$ edge to the output $q'$ edge. Hence removing either gate $G_1(q)$ or $G_{-1}(q')$ will disconnect the qubits. Thus, there are front and back pivot gates in $C$.
    \end{proof}
    \noindent For example, in brickwork circuits, the lightcone structure is sufficiently simple that pivot qubits can be identified by ``straight line" of length equal to the circuit depth. Such circuits always have front and back pivots, even as outer gates are removed. 
 
    \section{Learning via local inversion} \label{sec:algorithm}

We now consider general classes of algorithms for learning quantum circuits using the notions of causality and influence introduce above. We use the phrase ``classes" of algorithms to emphasize that, up to implementation details, the local inversion strategy should admit broad applicability across various gate families and circuit layouts. In~\Cref{sec:random_circuit_ensemble}, we will specialize our setting to illustrate our framework and make precise claims. However, such specificity, done too early, would likely obscure the generic flavor of these techniques, which may be applied heuristically and without theoretical guarantees. 

As already mentioned, our problem is a circuit learning task in which knowledge of the circuit layout of $C$ is given, and one is required to output a representation circuit $C'$ such that, not only is $U_C \approx U_{C'}$ (according to some suitable metric such as diamond distance), but $C$ and $C'$ have identical structure. One might understand this as learning within a restricted hypothesis class of circuits.
\begin{problem}[Circuit Learning given gate layout] \label{prob:black_box_learning}
    Given query access to a circuit $C$ via its unitary $U_C$, a classical description of $C$'s circuit graph, and the set of possible gate labels $\cal{G}$, output $C'$ such that $U_{C'} \approx U_C$, the graph of $C'$ matches $C$, and the gates of $C'$ are drawn from the same family $\cal{G}$.
\end{problem}
\noindent The requirement of knowing $C$'s gate layout may seem presumptuous. In certain cases, however, this quite reasonable. For example, if we are promised that $C$ is 1D brickwork (a commonly studied setting), then the layout is essentially fully specified. For other highly-structured families, there may be only a small number of possible choices. Even in the absence of an explicit circuit architecture, it may be possible to perform a heuristic trial-and-error protocol to find the support of some gate.

Our proposal for solving~\Cref{prob:black_box_learning} generalizes those of \citeauthor{fefferman2024anti}~\cite{fefferman2024anti}, whose work is, to our knowledge, the first to consider \emph{proper} learning of shallow quantum circuits. The strategy we propose is also briefly outline in Supplement II of~\cite{niroula2026digital}, though in less detail. Given the gate layout of $C$, the learner's strategy is to identify the location of a pivot gate $G$, prepend (or append, for $G$ in back) to $C$ some trial factorizing gate $G'^\dagger$, and test if $q'$ is no longer causally influenced by $q$. That is, one seeks to induce an approximate factorization $G'^\dagger G \approx U_1 \otimes U_2$ for $U_1, U_2$ acting along some bipartition of $\partial G$, which disconnects $q$ from $q'$. This event is detected through single-qubit tomography between an input qubit $q$ and output $q'$: for distinct choices of initial state $\rho_1, \rho_2$ at input $q$, perhaps with strategically chosen input states on the remaining qubits, one checks whether the corresponding output states $\sigma_1, \sigma_2$ are either essentially the same or distinct. Testing factorization can, depending on the setting, be done near perfectly or with small errors; we will discuss these possibilities below. \Cref{fig:single_factorization} provides pseudocode for protocol described above, termed \texttt{FactorizeFrontGate}. We focus on front gate factorization, both for concreteness and because of an asymmetry between forward and backward pivot factorization induced by the tomography protocol.
\begin{figure}
    \centering
    \begin{algorithm}[H]
        \nlnonumber\FactorizeFrontGate \;
        \Input{Black-box query access to $U_C$; gate family $\cal{G}$ for $C$; $k$ input/output qubits $\partial G = (q, q^c)$ for front pivot $G$, corresponding to input/output pair $(q,q')$.}
        \Output{Gate $G' \in \cal{G}$ such that $G'^\dagger G \approx U_1\otimes U_2$.}
        \nlnonumber\Parameters{Factorization precision $\epsilon_f$, tomography precision $\epsilon_t$.}
        \BlankLine
        \For{$G' \in \cal{G}$}{
            $U \gets  U_C G'^\dagger$\;
            \For{$(\sigma, \sigma^c) \in \{X,Y,Z\}\times \{I,X,Y,Z\}^{\otimes {k-1}}$}{
                \BlankLine
                initialize $Q \gets \ket{0}^{\otimes n}$\;

                \For{$b \in \{0,1\}^{k-1}$}{
                    \tcp*[l]{Fix other input qubits according to $b$}
                    \For{$j \in [k-1]$} {
                        \eIf{$\sigma_j^c \neq I$}{
                            $q^c_j \gets (I(\pm)^{b_j}\sigma^c_j)/2$}
                            {$q^c_j \gets I/2$}
                    }
                    \tcp*[l]{Toggle $q$ and measure effect}
                    set $q \gets (I+\sigma)/2$\;
                    $\pi^+ \gets$ SingleQubitTomography$(q', U, Q, \epsilon_t)$\;
                    \BlankLine
                    set $q \gets (I-\sigma)/2$  \Comment*[f]{Flip pivot qubit $q$}\;
                    $\pi^- \gets$ SingleQubitTomography$(q', U, Q, \epsilon_t)$\;
                    \BlankLine
                    \If{$\dist(\pi^+, \pi^-) < \epsilon_f$}{
                        return $G'$\;
                    }
                }
            }
        }
        null return\tcp*[l]{Protocol fails if this point reached}
    \end{algorithm}
    \caption{Heuristic protocol for reducing a front pivot gate to a product of gates with smaller support, via local factorizations. The meta-algorithm iterates over a set of trial factorizations $G'^\dagger$ until one of them appears to breaks the connection between $q$ and $q'$ induced by $G$. Detecting this breaking is done through the SingleQubitTomography protocol on the output, which tests whether $q'$ is affected by changes in the state of $q$ and the other input qubits $q^c$ to $G$.}
    \label{fig:single_factorization}
\end{figure}

Assuming successful factorization under the \texttt{FactorizeFrontGate} protocol, one wishes to iterate on this process and continue reducing $C$. Eventually, if $C$ is reduced to the identity, an inverse circuit $C'^\dagger$ to $C$ is obtained, and therefore $C'$ is our candidate solution. However, without additional assumptions about how to handle the factorized gate, it may not be possible to proceed. In the spirit of simplicity, we delay these considerations momentarily, and assume the factorized gate is effectively ``removed" from the circuit. Under this assumption, and with knowledge of the circuit graph of $C$, one continues to identify pivot gates and perform local factorizations until there are no more pivots. In general, the requisite tomography should get easier as the circuit becomes more shallow. This iterative procedure is, for concreteness, presented as pseudocode only for forward-only learning in~\Cref{fig:forward_learn}, and labelled \texttt{ForwardLearn}.
\begin{figure}
    \centering
    \begin{algorithm}[H]
        \nlnonumber\ForwardLearn \;
        \Input{Black-box query access to $U_C$. Description of gate layout for $C$.}
        \Output{Description of circuit $C'$ (circuit graph with labelled gates) with same circuit graph as $C$ such that $U_C \approx U_{C'}$.}
        \nlnonumber\Parameters{Factorization precision $\epsilon_f$, tomography precision $\epsilon_t$.}
        \BlankLine
        $\texttt{Compress}(C)$\tcp*[l]{Group together consecutive gates, if exist}
        Compute all forward causal sets for $C$.\;
        partial\textunderscore only $\gets$ false\;
        \tcp*[l]{Iterate over gates in $C$, front to back.}
        \For{$G$ in $C$(front:back)} {
            \If{$G$ is not pivot in $C$} {
                partial\textunderscore only $\gets$ true\;
                continue\tcp*[;]{Continue to invert what is possible} 
            }
            Determine pivot pair $(q,q')$ for $G$, and $\{q,q^c\} = \partial G$\;
            $G' \gets \FactorizeFrontGate(C, \partial G, q, q')$\;
            $C$.prepend($G'^\dagger$) and $C'$.append($G'$)\;
            \texttt{HandleFactors};
        }
        Invert trivial parts of remaining circuit (if applicable, e.g. single-qubit rotations).\;
        \BlankLine
        \If{partial\textunderscore only} {
            print: Warning, only partial inversion possible.\;
        }
        \BlankLine
        \texttt{Decompress}($C'$) \tcp*[l]{Arbitrarily decompose any grouped gates from \texttt{Compress}}
        return $C'$\;
    \end{algorithm}
    \caption{Forward-only learning protocol for a unitary circuit $C$, given oracle access and the circuit graph. After compressing sequential gates and computing the causal sets of each qubit, one iterates the \texttt{FactorizeFrontGate} protocol with associated estimation parameters $(\epsilon_t, \epsilon_f)$ chosen according to requirements. With each successful inversion of, $G$ is removed from $C$'s circuit graph, and the unitary $G'$ is added to the result $C'$. Final decompression consist of choosing arbitrary sequence for $G'$ to match the layout of uncompressed $C$. We assume each \texttt{FactorizeFrontGate} is successful, and \texttt{HandleFactors} is the suitable method for converting a factorization into an effective removal of the gate. Adapted from Fig. 8 of.}
\label{fig:forward_learn}
\end{figure}
Observe the inclusion of submodules \texttt{Compress} and \texttt{Decompress}, which are used to handle cases where two consecutive gates $G_1, G_2$ (or more) are present. Such gates naively defeat a local inversion protocol, since $G_2$ does not change the lightcone. Indeed, it is not clear how \emph{any} protocol could distinguish $G_1, G_2$ from some other sequence $G_1', G_2'$ such that $G_2 G_1 = G_2' G_1'$ within the query framework. Thus, there is no hope for learning $C$ exactly in such cases. Failing this, we instead attempt to factorize the composite gate $G_1 G_2$. \texttt{Compress} merely identifies such compositions as a single gate. When such composite gates are successfully learned, one then splits it into into two or more gates, arbitrarily within the available set $\cal{G}$, to match the original gate layout. The same idea can be extended to longer strings of gates. 
    
\subsection*{Implementation details and conditions for success}
    At a high level, for successful implementation of the above learning schematic, we identify three requirements.
    \begin{itemize}
        \item Good lightcone structure: At every stage of the algorithm, there exists a pivot gate. 
        \item Good signal propagation: One must be able to \emph{detect changes} at the output of the circuit when local changes are made at the front of the circuit.  
        \item Handle factorizations: Once a gate has been successfully factored, one needs to be able to proceed to the next iteration successfully.
    \end{itemize}
    The first of these criteria is perhaps the easiest to formalize. 
    \begin{definition}[Good lightcone structure] \label{mydef:good_lightcones}
        A circuit $C$ has \textbf{good lightcone structure} if there exists an ordering $G_1, \ldots, G_m$ of all the gates of $C$, such that $G_j$ is a pivot gate for $C$ when all $G_i$ are removed for $i < j$. We say circuit $C$ has \textbf{good forward (backward) lightcone structure} if such an ordering exists, and $G_j$ is always a front (back) pivot gate.
    \end{definition}
    \noindent An example of circuits with good lightcones include one-dimensional brickwork circuits with $d < n$, because the forward (or backward) lightcones strictly grow across each layer. More generally, geometrically local circuits which are invariant under (discrete) spatial and temporal translations in a region $R$ of Euclidean space, will typically have good lightcones out to a depth determined by the interaction radius and size of the system. However, our considerations need not be limited to geometrically local circuits.
    
    When learning is restricted to only front or back inversions, one simple way to verify good lightcone structure is to compute the forward, or backwards, lightcones at the beginning, since these will not change during the protocol (only removed as gates are popped off). However, learning from both the front and back is strictly more powerful, and can modify lightcones dramatically at intermediate stages. We can show that this indeed the case with a simple example.
    \begin{proposition} \label{prop:front_back_front}
        There exists a unitary quantum circuit $C$ with a back pivot but no front pivots. Additionally, upon removing a back pivot gate, this circuit obtains front pivots. 
    \end{proposition}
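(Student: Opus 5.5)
We would prove \Cref{prop:front_back_front} by giving an explicit small circuit and checking the lightcone conditions directly, using \Cref{lem:causal_boundary_iff_gate_remove} as the test. That lemma says an outer gate $G$ at the front is a pivot exactly when some qubit $q\in\partial G$ has a $q'$ that $q$ reaches in $C$ but no longer reaches once $G$ is deleted; the back case is the mirror image. Since the gate labels do not enter the definitions, only the directed graph matters. We read ``outer gate at the front (back)'' as a gate all of whose input (output) wires are circuit inputs (outputs), i.e.\ a gate that can actually be peeled off.

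The planned example has four qubits $a,b,c,d$ and five two-qubit gates in three layers:
\begin{equation*}
    \text{layer 1: } A=(a,b);\qquad \text{layer 2: } B=(a,c),\ E=(b,d);\qquad \text{layer 3: } F=(a,d),\ H=(b,c).
\end{equation*}
The idea is that the front of this circuit is redundant, while the back is not. The first layer is a single gate. After it, every qubit still passes through two full ``mixing'' layers, so each forward lightcone saturates regardless of $A$.

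The back is different: $c$ and $d$ idle in layer~1. So a back wire, traced one step before its last gate, sees too little of the circuit. Since no two gates on the same pair are adjacent, \texttt{Compress} does nothing, and the layout is a genuine instance.

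The checks run in the following order.
\begin{enumerate}
    \item The only front outer gate is $A$. We compute $\vec{L}_1(a)$. From $a(1)$ the wire enters $B$, giving $\{a,c\}$. Then $a$ enters $F$, adding $d$, and $c$ enters $H$, adding $b$. Hence $\vec{L}_1(a)=\{a,b,c,d\}=\vec{L}_0(a)$.
    \item Symmetrically, from $b(1)$ we pass through $E$ to $\{b,d\}$, then $F$ and $H$ add $a$ and $c$. Hence $\partial\vec{L}_1(b)=\emptyset$ as well, so $A$ is not a front pivot.
    \item The back outer gates are $F$ and $H$. For $F$, consider qubit $a$. The wire $a(-1)$ is directly connected to $d$ through $F$, so $d\in\backvec{L}_{-1}(a)$. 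Tracing $a$ one step earlier, $a(-2)$ exits $B$ and so sees $\{a,c\}$, and through $A$ it sees $b$; but $d$ is never reached, because $d$'s only earlier gate $E$ feeds $b$ and $d$ forward, not into $a$'s past. Thus $d\in\partial\backvec{L}_{-1}(a)$ and $F$ is a back pivot. This proves the first claim.
    \item Delete $F$. The front outer gate is still $A$. Now $\vec{L}_0(a)$ still contains $d$, via $a\to A\to b\to E\to d$. However, from $a(1)$ the path goes $B\to\{a,c\}\to H\to b$ and never reaches $d$, because $E$ precedes $H$. So $d\in\partial\vec{L}_1(a)$, and $A$ has become a front pivot.
\end{enumerate}

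The lightcone computations themselves are routine path-chasing. The main obstacle is the definitional one: which gates count as outer gates. Under the looser reading, any gate that is first on at least one of its qubits counts as a front gate. Then $B$, as the first gate on $c$, could already be a front pivot, which would spoil the ``no front pivots'' claim.

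We would handle this by stating the strict convention explicitly. If the looser reading is required, we would pad the construction so that $c$ and $d$ each receive an early gate in layer~1, re-chosen so that each qubit's first gate is followed by two saturating layers. We would also insert a step before the back layer so that one back wire still misses a qubit. The cleanest variant to try first is to keep the present circuit and add a single layer-1 gate $(c,d)$ together with a fourth layer $(a,b)$ only, and then recheck the four conditions above by the same path enumeration.
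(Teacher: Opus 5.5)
Your proof is correct and is essentially the paper's argument: exhibit a four-qubit, five-gate circuit whose only front gate has empty forward causal boundary on both of its qubits, find a back gate with nonempty backward causal boundary, and check by path-chasing that deleting it makes the front gate a pivot. The paper uses $A(0,1)$, $B(0,2)$, $C(1,3)$, $D(0,1)$, $E(2,3)$ and deletes $E$; your circuit, where $d\in\partial\backvec{L}_{-1}(a)$ initially and $d\in\partial\vec{L}_1(a)$ once $F$ is deleted, works just as well. Your strict reading of ``outer gate'' is exactly what the paper's own example relies on: there $B$ is the first gate on qubit $2$ with $\partial\vec{L}_1(2)=\{0,1\}$, so it would otherwise be a front pivot. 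Hence no fallback is needed, and the padding you sketch would in fact fail, because the extra layer-1 gate on $(c,d)$ makes every wire entering a last gate already see all four qubits, which leaves no back pivot.
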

    \begin{proof}
        Consider the following unitary circuit of four qubits and five gates.
        \begin{center}
        \begin{quantikz}[transparent]
            \lstick{0} & \gate[2]{A}  & \gate[3, label style={yshift=0.2cm}]{B}  &               & \gate[2]{D} & \\
            \lstick{1} &              & \linethrough & \gate[3, label style={yshift=0.2cm}]{C}   &             & \\
            \lstick{2} &              &              &  \linethrough & \gate[2]{E} & \\
            \lstick{3} &              &              &               &             & 
        \end{quantikz}
        \end{center}
        Consider gate $A$. It is not a pivot gate, because the forward lightcones of qubits $0$ and $1$ at input are full, with or without $A$. However, upon removing gate $E$, qubit 3 is on the causal boundary of qubit 0 at input. Thus, $A$ is a pivot when $E$ is removed. Additionally, $E$ is a pivot gate in $C$ because qubit 2 is in the backwards light cone of 3 (from output) only if $E$ is present. 
    \end{proof}
    \noindent For a slightly more interesting example than that of the above proof, see~\Cref{fig:forward-back-forward-example}. This circuit does not have good forward or backward lightcone structure, but does have good lightcone structure when considering a sequence of pivots alternating from the front and back. For iterative forward-backwards learning, verifying good lightcone structure seems to admit no analytical shortcuts. However, one can resort to simulating the full process of removing pivots, a process that is classically efficient for polynomially-sized circuits.
    \begin{figure}
    \begin{subfigure}{\textwidth}
        \centering
        \includegraphics[width=0.8\textwidth]{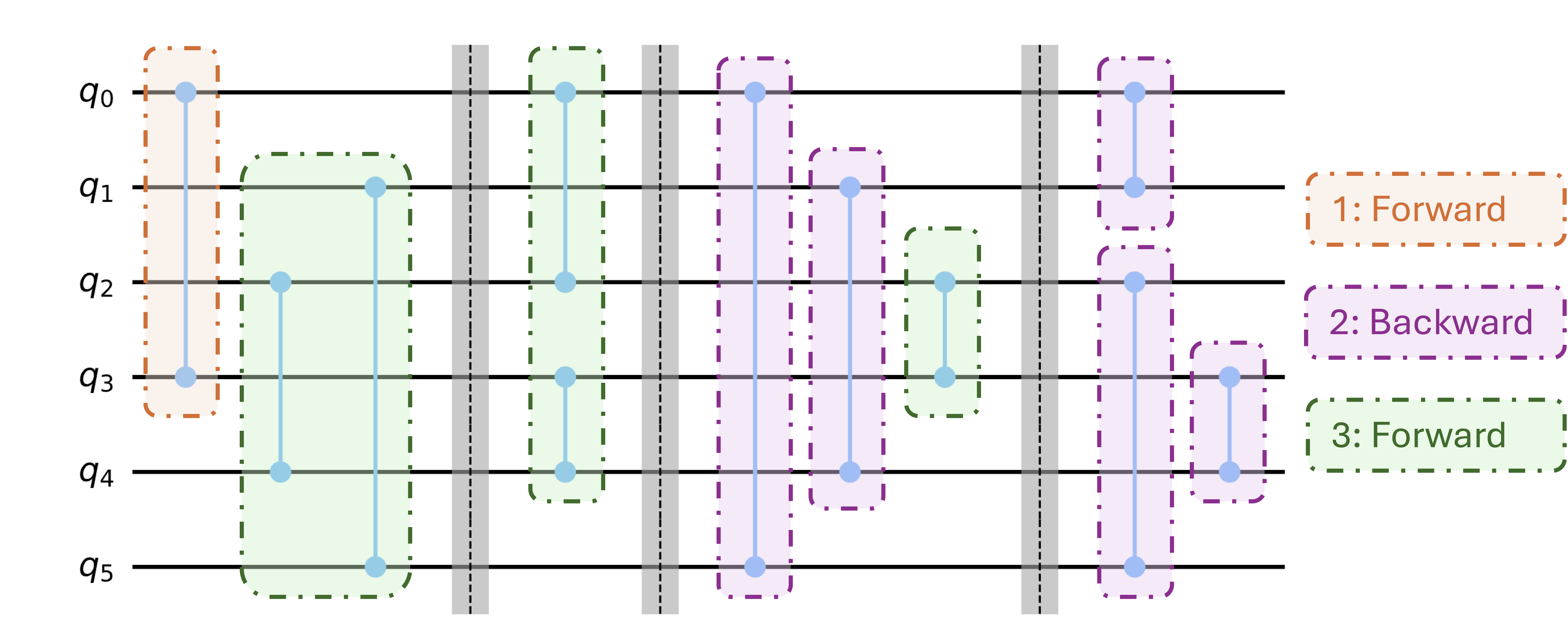}
        \caption{Forward-first iterative learning. On the first forward pass (orange), only the first $(q_0, q_3)$ gate is a pivot. On a subseqent backwards pass (purple), the five gates in purple can be eliminated, but no more. The remaining gates can now be removed on a final forward pass (green).}
    \end{subfigure}

    \bigskip
    
    \begin{subfigure}{\textwidth}
        \centering
        \includegraphics[width=0.8\textwidth]{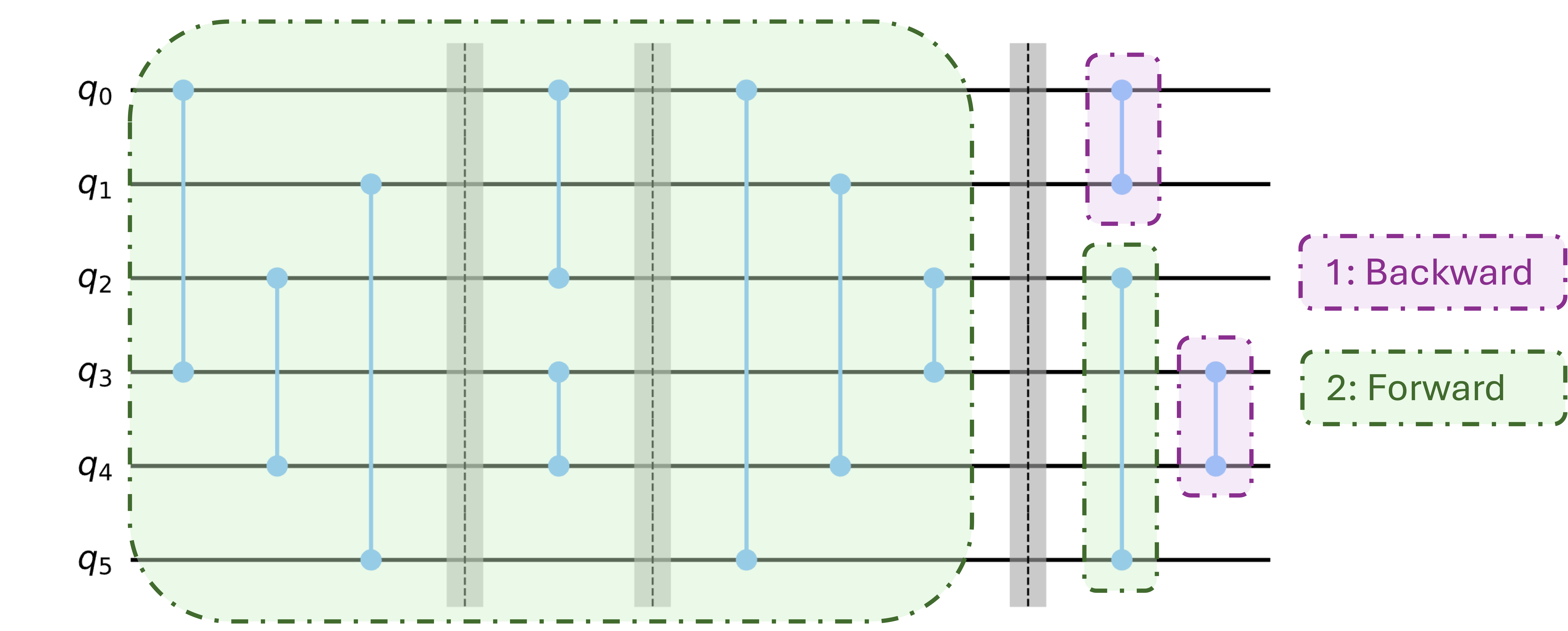}
        \caption{Backward-first iterative learning. A backwards pass (purple) eliminates the two gates in purple, but no others. On a subsequent forward pass (green), all remaining gates can be removed. }
    \end{subfigure}
    \caption{Example of a 6 qubit, 4 layer circuit with 2-local gates that cannot be learned by forward-only or backward-only pass using single-gate local inversions. Here, ``learning" means there is always an available pivot gate (outer gate with good lightcones), and we tacitly assume this allows for some removal procedure. Learning proceeds in one direction until no more pivots are present on that side (front or back), and at that point one switches to learning the other direction. Although a single forward or backward pass cannot clear the circuit, it can be learned using alternations of forward and backward lightcone protocols. See subcaptions for additional details.}
    \label{fig:forward-back-forward-example}
\end{figure}

    We now discuss the second identified property for learnability. Even if $C$ has good lightcone structure, it is clearly necessarily that, when qubits $q,q'$ are causally connecting in the sense of paths in $C$, the corresponding causal \emph{effects} according to the unitary gates must be measurable. For example, if all gates are close to identity, or the circuit is deep, detecting tiny changes in input of a single qubit may become intractible.
    
    We quantify the ability of a circuit to propagate changes at input to the output through a lower bound on mixing for single-qubit channels induced by the circuit.
    \begin{definition} \label{def:good_signal_propagation}
        A quantum channel $\cal{E}$ on a Hilbert space $\cal{H}$ is $\bm{L}$\textbf{-invertible} with respect to operator norm $\norm{\cdot}$ if there exists some $L \in \bb{R}_+$ such that for all input states $\rho, \sigma \in D(\cal{H})$,
        \begin{equation*}
            \norm{\cal{E}(\rho) - \cal{E}(\rho')} \geq L \norm{\rho - \rho'}.
        \end{equation*}
    \end{definition}
    \noindent Observe that $\cal{E}$ is invertible iff it is $L$-invertible for some $L>0$. For unitarily invariant $\norm{\cdot}$, unitary channels are 1-invertible. Since quantum channels are contractive, $L \in [0,1]$ for any norm. One think of $L$ a lower bound on mixing from the perspective of physical processes.

    This invertibility definition above quantifies the ability to distinguish output states of a channel given distinct inputs. For learning shallow circuits, we ask that $C$ possess $L$-invertible single-qubit reduced channels, for $L$ ``not too small."
    \begin{definition}
        Let $C$ be a unitary quantum circuit. We say that $C$ has \textbf{signal propagation} $L(d) \in [0,1]$ if, for every pair of qubits $q,q' \in Q$ distance $d$ away in $C$, the quatum channel $\cal{E}_C$ from input $q$ to output $q'$ induced by $C$, with other qubits in some given state, is $L(d)$ invertible. We say that $C$ has \textbf{good signal propagation} if $L(d)$ is strictly positive for $d < \infty$.
    \end{definition}
    \noindent Of course, $L(\infty) = 0$. We are most interested in the scaling of $L(d)$, and naturally we will have $L(d)$ tend to zero as $d \rightarrow \infty$. The slower $L$ decays, the more tractable learning is by allowing measurements to easily detect signals at input, hence connectivity. Demonstrating $L$ invertibility must be done for specific cases or ensembles. For example, the following result from~\citeauthor{fefferman2024anti} shows that, with high probaility, Haar random unitaries admit at most exponential signal decay with respect to the Frobenius norm.
    \begin{theorem}[Thm 1.1 of~\cite{fefferman2024anti}]
        Let $C$ be a random quantum circuit with a fixed architecture, where each gate is a $k$-qubit independent Haar random unitary. Let $(q,q')$ be a pair of input and output qubits that are depth $d$ apart in the circuit. Arbitrarily fix the inputs to $C$ except the qubit $\rho$, and let $\cal{E}_C$ be the channel that maps $q$ to $q'$ under this circuit. Then for every $\gamma > 0$, with probability at least $1-\gamma$ over $C$ the following holds: For every two single-qubit states $\rho$ and $\rho'$,
        \begin{equation*}
            \norm{\cal{E}_C(\rho) - \cal{E}_C(\rho')}_\rm{F} \geq (2^{-d}\gamma)^{c_k}\norm{\rho - \rho'}_F
        \end{equation*}
    where $c_k > 0$ is a constant that depends only on $k$, and $\norm{\cdot}_F$ is the Frobenius norm.
    \end{theorem}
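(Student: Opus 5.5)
The plan is to follow the anti-concentration route of~\cite{fefferman2024anti}, reducing the invertibility bound to lower-bounding a polynomial in the Haar gate entries.

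\textbf{Step 1: reduce to traceless directions.} Since $\cal{E}_C$ is linear and trace preserving, $\cal{E}_C(\rho)-\cal{E}_C(\rho')=\cal{E}_C(\rho-\rho')$, and $\rho-\rho'$ is a traceless Hermitian operator. So it suffices to lower bound the smallest singular value of the $3\times 3$ real matrix $M_C$ representing $\cal{E}_C$ on the Pauli basis $\{X,Y,Z\}$. This is the Bloch-vector action; the identity component only produces a shift.

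\textbf{Step 2: polynomial structure.} The entries of $M_C$ are polynomials of degree at most $2$ in the entries of each gate and their conjugates. Only the gates in the causal region between $q$ and $q'$ matter, roughly the lightcone intersection, and there are at most $O(d)$ relevant layers along a shortest path. I would bound $\sigma_{\min}(M_C)^2 \ge \det(M_C^T M_C)/\norm{M_C}^4$, using $\norm{M_C}\le 1$ by contractivity. Then $f(C)=\det(M_C^TM_C)$ is a polynomial of degree $O(d)$ in the gate entries.

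\textbf{Step 3: anti-concentration.} I would apply a Carbery--Wright-type inequality adapted to the Haar measure: $\Pr[|f|\le \epsilon\,\mathbb{E}|f|]\lesssim \deg(f)\,\epsilon^{1/\deg f}$. This requires two ingredients:
\begin{itemize}
\item a lower bound on $\mathbb{E}f$ of the form $2^{-O(d)}$;
\item a way to handle $\deg f=O(d)$ so that the tail becomes polynomial in $\gamma$.
\end{itemize}
The cleaner way to get both is to proceed gate by gate along the path. Haar invariance lets one condition on the other gates and argue that each gate passes a signal with a constant-factor contraction, except with probability $\gamma/d$. One would then union bound and multiply the per-gate factors to get $(2^{-d}\gamma)^{c_k}$.

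\textbf{Main obstacle.} The hardest step is establishing that the second-moment or determinant lower bound per gate does not collapse when the other inputs are fixed arbitrarily, and that the errors compose multiplicatively rather than additively. I would try to compute $\mathbb{E}_{\rm Haar}\norm{\cal{E}(\sigma)}_F^2$ via Weingarten calculus for a single $k$-qubit gate, which gives an explicit constant depending only on $k$. I would then use independence of the gates to chain these bounds along a length-$d$ path. The constant $c_k$ would absorb both the per-gate contraction and the anti-concentration exponent.
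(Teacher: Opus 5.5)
The paper gives no proof of this statement. It is quoted from \cite{fefferman2024anti} and used only as the signal-propagation input to Theorem~\ref{thm:learning_forward_general}, so your proposal has to stand on its own, and as written it has a genuine gap in how the randomness is aggregated over the circuit.

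Your Step~1 and the determinant reduction are fine. Since $\norm{M_C}\le 1$, you even have $\sigma_{\min}(M_C)\ge\abs{\det M_C}$, and a single polynomial is the right way to get uniformity over $\rho,\rho'$.

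However, $\det M_C$ is not of degree $O(d)$. It has degree at most $3$ in each gate and in its conjugate, but every gate in the causal region between $q$ and $q'$ enters, which is already $\Theta(d^2)$ gates for 1D brickwork.

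More seriously, neither a one-shot bound $\Pr\lesssim D\epsilon^{1/D}$ nor your per-gate union bound can produce the stated shape. Take the single-gate transfer from the incoming to the outgoing path wire. Near a non-routing gate (e.g.\ the identity when the path switches wires) this transfer is $O(\epsilon)$, an event of Haar probability at least $c\,\epsilon^{m}$. So ``constant-factor contraction except with probability $\gamma/d$'' is impossible as $\gamma\to0$. The per-gate threshold must be $(\gamma/d)^{\Theta(1)}$, and multiplying $d$ of them gives $(\gamma/d)^{\Theta(d)}$. The theorem instead has $\gamma^{c_k}$ with $c_k$ independent of $d$.

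That shape comes from multiplying negative \emph{moments}, not probabilities:
\begin{itemize}
\item At each of the $d$ steps, prove a conditional bound $\bb{E}[X_i^{-\alpha}\mid\text{rest}]\le K_k$. This is what Haar anti-concentration for a bounded-degree polynomial in one gate gives, for $\alpha$ below its exponent.
\item Chain these by Fubini to get $\bb{E}[\prod_i X_i^{-\alpha}]\le K_k^{d}$.
\item Apply Markov once: $\Pr[\prod_iX_i\le t]\le K_k^{d}t^{\alpha}$, i.e.\ $t=(K_k^{-d}\gamma)^{1/\alpha}$.
\end{itemize}

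Second, you never define the per-gate factors $X_i$, and along a path they do not exist. After the first gate, the dependence on $\rho$ sits on all $k$ output wires and their correlations. Branches that leave the path can reconverge at $q'$ through off-path gates. So $\cal{E}_C$ is not a composition of single-qubit channels, and $\det M_C$ is not a product of independent per-gate determinants. Your single-gate Weingarten computation of $\bb{E}\norm{\cal{E}(\sigma)}_F^2$ does not repair this. It is a second moment in a fixed direction, not for the determinant you anti-concentrate, and a second-moment lower bound alone gives no small-ball estimate.

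One way to close the gap:
\begin{itemize}
\item Fix a path $g_1,\dots,g_d$ from $q$ to $q'$ and condition on all off-path gates. Treat $f=\det M_C$ as a polynomial of bounded degree in each $g_i$ separately.
\item Peel the $g_i$ one at a time with the single-gate anti-concentration. This gives $\bb{E}_{\mathrm{path}}\abs{f}^{-2\alpha}\le K_k^{d}\,(\bb{E}_{\mathrm{path}}f^2)^{-\alpha}$.
\item For the needed lower bound $\bb{E}_{\mathrm{path}}f^2\ge C_k^{-d}$, uniform over the off-path gates and the fixed inputs: take each $g_i$ to be the wire permutation that carries the path wire through. This makes $\cal{E}_C$ the identity, so $\abs{f}$ attains $1$ whatever the other gates are.
\item Combine this with equivalence of the sup norm and the mean for nonnegative bounded-degree polynomials on $\cal{U}(2^k)$, applied one gate at a time.
\end{itemize}
A single Markov inequality then yields the claimed $(2^{-d}\gamma)^{c_k}$.
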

    \noindent Although our definition of good signal propagation does not include a probabilistic factor $\gamma$, it is relatively easy to incorporate probability into our consideration. We choose not to for the sake of simplicity.  

    When a quantum circuit $C$ has good signal propagation, it is straightforward to experimentally test whether two qubits $q, q'$ are connected from input to output: simply toggle $q$ between $\ket{0}$ and $\ket{1}$, and perform tomography on $q'$ to accuracy $O(L(d))$. The challenge that arises is when, in the process of trial factorizations, an approximate factorizing gate $G'$ does not fully succeed, yet passes scrutiny of the tomography check. To handle this, we borrow analysis from~\cite{fefferman2024anti} to characterizes the relationship between factored gates and tomographic influence. A \emph{pseudometric} $d$ on a set $M$ satisfies all the properties of a distance function on $M$, except that $d(a,b) = 0$ may not imply $a = b$. Consider the pseudometric space $(\cal{U}(2^k), d_\otimes)$ defined by
    \begin{equation} \label{eq:product_pseudometric}
        d_\otimes(G, G') \coloneqq \min_{U_1, U_2} d_\diamond(G, (U_1\otimes U_2) G')
    \end{equation}
    where the $\min$ is taken over all possible $U_1\otimes U_2$ over some bipartition of the $k$ input qubits, and 
    \begin{equation*}
        d_\diamond(\cal{E}, \cal{F}) \coloneqq \max_{\rho} \norm{(\cal{E}\otimes I_n) \rho - (\cal{F}\otimes I_n)\rho}_1
    \end{equation*}
    is the usual diamond distance for channels, with $\norm{\cdot}_1$ being the trace norm. In particular, 
    \begin{equation*}
        d_\otimes (G,G') = 0
    \end{equation*}
    iff $G G'^\dagger = U_1 \otimes U_2$ for some $U_1, U_2$ acting on disjoint subsystems. This gives us a natural way to measure how far we are from a successful factorization.
    
    In the simple $k = 2$ setting, $d_\otimes = 0$ implies factorization into single-qubit gates. The following lemma from~\cite{fefferman2024anti}, relates a gap promise on the distance from product gate ($d_\otimes \geq \epsilon$) to an ability to detect influence via tomography.
    \begin{lemma}[Contrapositive of Lemma 6.5 of \cite{fefferman2024anti}] \label{lem:tensor_bound_to_tomography}
        Suppose $U \in \cal{U}(4)$ satisfies $d_\otimes(U, I\otimes I) \geq \epsilon$. Then there exist $\sigma_1 \in \{X, Y, Z\}$ and $\sigma_2 \in \{I, X, Y, Z\}$ such that
        \begin{equation*}
            \norm{\Tr_1(U \sigma_1 \otimes \sigma_2 U^\dagger)}_F \geq \frac{\epsilon^2}{400}
        \end{equation*}
        where $\norm{\cdot}_F$ is the Frobenius norm.
    \end{lemma}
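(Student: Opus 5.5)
We prove the contrapositive of the statement. Suppose that for every $\sigma_1 \in \{X,Y,Z\}$ and $\sigma_2 \in \{I,X,Y,Z\}$ we have $\norm{\Tr_1(U \sigma_1\otimes\sigma_2 U^\dagger)}_F < \delta$, where $\delta = \epsilon^2/400$. We will construct single-qubit unitaries $U_1, U_2$ with $d_\diamond(U, U_1\otimes U_2) < \epsilon$, which gives $d_\otimes(U, I\otimes I) < \epsilon$ by \eqref{eq:product_pseudometric}. Here $\Tr_1$ traces out qubit $1$ and retains qubit $2$. The hypothesis therefore says that the conjugation channel $\Phi(A) = UAU^\dagger$ sends every operator of the form $\sigma_1\otimes\sigma_2$ with $\sigma_1$ traceless to an operator with small support on qubit $2$.

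The first step is to expand $U(\sigma_1\otimes\sigma_2)U^\dagger$ in the Pauli basis $\{P\otimes P'\}$. The partial trace $\Tr_1$ keeps exactly the coefficients with $P = I$ (up to a factor $2$). So the hypothesis says that the $I\otimes P'$ components of $\Phi(\sigma_1\otimes\sigma_2)$ are $O(\delta)$ for every $\sigma_1 \ne I$.

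Next, use that $\Phi$ is a unitary, hence orthogonal, map on the $16$-dimensional real space of Hermitian operators, with respect to the normalized Hilbert--Schmidt inner product. Consider the subspace $W = \mathrm{span}\{I\otimes P'\}$. The hypothesis says $\Phi$ maps the $12$-dimensional complement spanned by $\{\sigma_1\otimes\sigma_2 : \sigma_1\neq I\}$ to vectors almost orthogonal to $W$. Orthogonality of $\Phi$ then forces $\Phi^{-1}(W)$ to be close to $W$. Equivalently, $\Phi(W)$ lies within $O(\delta)$ of $W$, since the matrix of $\Phi$ is orthogonal and its off-diagonal block has norm $O(\delta)$, so the other off-diagonal block does too.

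It follows that $U(I\otimes P')U^\dagger \approx I\otimes R(P')$, where $R$ is an approximate orthogonal map on traceless single-qubit operators. Because $\Phi$ also preserves products, $R$ is close to an automorphism of the qubit Pauli algebra. Polar decomposition or rounding then gives a unitary $U_2$ with $\norm{U(I\otimes P')U^\dagger - I\otimes U_2P'U_2^\dagger} = O(\delta)$.

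Set $V = (I\otimes U_2^\dagger)U$. Then $V$ approximately commutes with every $I\otimes P'$. Averaging $V$ over conjugation by the qubit-$2$ Paulis, i.e. applying the twirl $\frac14\sum_{P'}(I\otimes P')V(I\otimes P')$, gives an operator $A\otimes I$ that is $O(\sqrt\delta)$-close to $V$ in normalized Frobenius norm. This bound comes from the commutator estimates. Taking the polar part of $A$ yields a unitary $U_1$ with $\norm{V - U_1\otimes I}_F/2 = O(\sqrt\delta)$.

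Finally, convert to diamond distance. The phase-optimized diamond distance between unitaries is bounded by a constant times the operator norm, which on $4\times 4$ matrices is at most a constant times the Frobenius norm. This gives $d_\diamond(U, U_1\otimes U_2) \le c\sqrt\delta$, and with $\delta = \epsilon^2/400$ and suitable constants this is $<\epsilon$, so the constant $400$ is consistent with this square-root loss.

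The main obstacle is making the constants explicit and tight enough to reach exactly $400$. In particular, the step passing from the near-invariance of $W$ under $\Phi$ to an actual unitary $U_2$ requires a stability argument for approximate $*$-automorphisms of $M_2$. The cleanest route I would try first is to avoid that step: apply the twirling argument directly with the Choi matrix, using the fact that $\Phi$ is close to a product channel on the relevant blocks. If the constants still fall short, I would simply cite Lemma 6.5 of~\cite{fefferman2024anti}, since the statement is its contrapositive.
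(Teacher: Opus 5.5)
The paper gives no argument of its own for this lemma. It imports Lemma 6.5 of~\cite{fefferman2024anti} and takes the contrapositive, so your fallback is exactly the paper's route. Your main argument is a genuinely different, self-contained proof, and its skeleton is sound. Let $O_{(Q,Q'),(P,P')} = \tfrac14\Tr[(Q\otimes Q')U(P\otimes P')U^\dagger]$ be the real orthogonal matrix of $\Phi$. Then $\|\Tr_1\Phi(P\otimes P')\|_F^2 = 8\sum_{Q'}O_{(I,Q'),(P,P')}^2$, so the hypothesis gives $\|B\|_F^2 < \tfrac32\delta^2$ for the block $B$ sending $W^\perp$ into $W$. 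Taking traces in $AA^{T}+BB^{T} = A^{T}A+C^{T}C$ (both equal the identity on $W$) gives $\|C\|_F=\|B\|_F$, which is your ``other off-diagonal block'' step.

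The step you defer to a stability theorem for approximate $*$-automorphisms is much easier than you fear. It does, however, hide one point you must actually argue. Write $\Phi(I\otimes P') = I\otimes R(P') + E_{P'}$ with $E_{P'}\in W^\perp$ and $R(P')=\vec a_{P'}\cdot\vec\sigma$. Since $\Phi(I)=I$, you have $A = 1\oplus A_3$ with $A_3^{T}A_3 = \mathrm{id}-C_3^{T}C_3$, where $C_3$ collects the traceless columns of $C$. So the polar factor $O_3$ of $A_3$ lies within $\|C\|^2=O(\delta^2)$ of $A_3$. But $O_3$ is a priori only in $O(3)$, and an improper rotation (e.g.\ $P'\mapsto P'^{T}$) is implemented by no unitary $U_2$. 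Product preservation is exactly what excludes this, and you should say so explicitly. Project $\Phi(I\otimes X)\Phi(I\otimes Y)=i\,\Phi(I\otimes Z)$ onto $W$: the cross terms vanish because their $\Tr_1$ vanishes, leaving $R(X)R(Y)=iR(Z)+O(\delta^2)$. This means $\vec a_X\times\vec a_Y=\vec a_Z+O(\delta^2)$, hence $\det A_3>0$, $O_3\in SO(3)$, and $U_2$ exists.

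Your accounting of the loss is also off. The twirl bound is linear in $\delta$, not $O(\sqrt\delta)$, because
$\|V-\mathcal{T}(V)\|_F\le\tfrac14\sum_{P'}\|V(I\otimes P')V^\dagger - I\otimes P'\|_F$.
With $\sum_{P'}\|E_{P'}\|_F^2 = 4\|C\|_F^2<6\delta^2$, this is at most about $1.06\,\delta$ plus $O(\delta^2)$. Rounding $A'\otimes I$ to its polar factor (the nearest unitary to it) costs a factor $2$. The bound $d_\diamond\le 2\|\cdot\|_{op}\le2\|\cdot\|_F$ costs another factor $2$. Altogether $d_\otimes(U,I\otimes I)\le c\,\delta$ with $c<5$ once $\delta\le 1/100$.

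Since diamond distance never exceeds $2$, only $\epsilon\le 2$ matters. There $\delta=\epsilon^2/400\le 1/100$ and $c\,\epsilon^2/400<\epsilon$, so the constants are no obstacle and no citation is needed. Comparing the two: the citation keeps the paper short but inherits the quadratic $\epsilon^2/400$ from the source. Your route, with the orientation check written out, is elementary and proves a stronger statement: $\max_{\sigma_1,\sigma_2}\|\Tr_1(U\sigma_1\otimes\sigma_2U^\dagger)\|_F$ is bounded below by $d_\otimes(U,I\otimes I)$ divided by an absolute constant.
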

    \noindent In learning applications, $\sigma_1$ is a difference of Pauli eigenstates, while $\sigma_2$ represents pieces of a full qubit state in the Pauli basis (identity included). Thus,~\Cref{lem:tensor_bound_to_tomography} translates to a lower bound on the change in qubit 2 on output. This change can subsequently propagate through a circuit with good signal propagation, and be detected at output. 

    Let's now return to the question of approximate factorizations. The simplest resolution is to implement a gap promise so that any unsuccessful factorization $G G'^\dagger$ is far from a factorization. This allows for simple hypothesis testing to reliably decide which case is true. To formalize this, we define the notion of ``well-spaced" discrete sets with respect to a pseudometric.
    \begin{definition}
        Let $(M,d)$ be a pseudometric space. A discrete subset $S \subseteq M$ is $\mathbf{\epsilon}$\textbf{-spaced} if for all $s, s' \in S$, $d(s,s') < \epsilon$ implies $s = s'$. We say $S$ is \textbf{well-spaced} if it is $\epsilon$-space for some $\epsilon >0$. 
    \end{definition}
    \noindent Well-spacedness amounts to asserting that the induced subspace $(S, d_\otimes)$ is a proper metric space, with positive minimum distance between all elements. By taking $(\cal{G}, d_\otimes)$ to $\epsilon$-spaced, we can ensure that either (a) $G G'^\dagger = I$, in which case, changes in $q$ produce no detectable change in $q'$, or (b) $d_\otimes(G G'^\dagger, I) \geq \epsilon$. With sufficiently precise tomography of the output, these scenarios are distiguishable, and allow for rigorous learning guarantees. 
    \begin{theorem} \label{thm:learning_forward_general}
        Let $C$ be a $2$-local, depth $D$ circuit with good forward lightcones structure (\Cref{mydef:good_lightcones}) and signal propagation $L(d)$ (\Cref{def:good_signal_propagation}) for some (nonincreasing) $L:\bb{Z}_+ \rightarrow \bb{R}_+$. Suppose the gate set $\cal{G}$ is $\epsilon$-spaced under the pseudometric $d_\otimes$.  Then~\Cref{prob:black_box_learning} can be solved with 
        \begin{equation*}
            O\left(\frac{\abs{\cal{G}}nD}{L(D)^2 \epsilon^4}\right)
        \end{equation*}
        queries to $U_C$.
    \end{theorem}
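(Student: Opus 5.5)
The proof runs \ForwardLearn{} along the pivot ordering supplied by good forward lightcone structure, and bounds the query cost of each call to \FactorizeFrontGate. Fix an ordering $G_1,\dots,G_m$ as in \Cref{mydef:good_lightcones}, so each $G_j$ is a front pivot once $G_1,\dots,G_{j-1}$ have been removed. A $2$-local depth-$D$ circuit has $m \le nD/2 = O(nD)$ gates. This gives the factor $nD$.

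At stage $j$ we have effective access to $U_C G_1'^\dagger\cdots G_{j-1}'^\dagger$, obtained by prepending the learned inverses. By induction, this equals the residual circuit $C_j$ (with $G_1,\dots,G_{j-1}$ replaced by single-qubit factors that \texttt{HandleFactors} absorbs into neighbouring gates, which leaves the layout unchanged). Let $(q,q')$ be the pivot pair for $G_j$, with $q'\in\partial\vec L_1(q)$ in $C_j$. By \Cref{lem:causal_boundary_iff_gate_remove}, every $q\to q'$ path passes through $G_j$ and then leaves it on the other qubit $q^c\in\partial G_j$. The path from the output $q^c$ wire of $G_j$ to $q'$ has length at most $D$. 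Since $L$ is nonincreasing, the induced channel from $q^c$ to $q'$ is $L(D)$-invertible.

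For each trial $G'\in\cal G$ we form $V=G_jG'^\dagger$. There are two cases.
\begin{itemize}
\item If $d_\otimes(G_j,G')=0$, then $V=U_1\otimes U_2$, and by the disconnection statement of \Cref{lem:causal_boundary_iff_gate_remove} the state on $q'$ is independent of the input on $q$. Every test then gives $\pi^+=\pi^-$ up to tomography error.
\item Otherwise $\epsilon$-spacing gives $d_\otimes(V,I\otimes I)\ge\epsilon$. Then \Cref{lem:tensor_bound_to_tomography} supplies Paulis $\sigma_1,\sigma_2$ with $\|\Tr_1(V\sigma_1\otimes\sigma_2V^\dagger)\|_F\ge\epsilon^2/400$.
\end{itemize}
The inner loops of \FactorizeFrontGate{} range over exactly these $(\sigma,\sigma^c)$ and the sign choices $b$. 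By linearity, the input states toggled on $q$ and set on $q^c$ reproduce this operator as a signed combination of the outputs. Its image at $q'$ therefore differs by at least $L(D)\epsilon^2/400$ in Frobenius norm. Setting $\epsilon_f = \Theta(L(D)\epsilon^2)$ and $\epsilon_t=\epsilon_f/4$ separates the two cases.

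Single-qubit tomography to Frobenius accuracy $\epsilon_t$ costs $O(1/\epsilon_t^2)=O(1/(L(D)^2\epsilon^4))$ copies, with only $O(1)$ settings per trial gate. With a logarithmic overhead to union-bound the failure probability over all $|\cal G|\,nD$ tests (absorbed or made explicit), the total query count is $O(|\cal G|nD/(L(D)^2\epsilon^4))$. The returned $G'$ then has $d_\otimes(G_j,G')=0$. After all stages the learned circuit has $U_{C'}=U_C$ exactly, up to the trivial single-qubit remainder, and the same layout after \texttt{Decompress}.

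The main obstacle is the signal-propagation step. The definition of $L(d)$ is stated for the channel from an input qubit, with the other inputs fixed to given states. Here we need it for the channel from the intermediate wire $q^c$ after $G_j$, in a residual circuit that contains the single-qubit factors $U_1,U_2$, and where the "other" inputs are correlated through $V$. Multiqubit inputs on $\partial G_j$ also mean the relevant perturbation is a traceless operator, not a difference of states. I would handle this by treating the subcircuit after $G_j$ as a circuit in its own right, with $q^c$ as an input at distance $\le D$ from $q'$. I would then decompose $\Tr_1(V\sigma_1\otimes\sigma_2V^\dagger)$ into a difference of two single-qubit states (rescaled) and apply $L(D)$-invertibility. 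Stated explicitly, this amounts to assuming the hypothesis "signal propagation $L$" is hereditary under removing front gates and absorbing single-qubit unitaries. I would make that assumption explicit rather than derive it.
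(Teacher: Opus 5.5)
Your proposal is correct and follows essentially the same route as the paper. The shared steps are: the pivot ordering; $\epsilon$-spacing combined with \Cref{lem:tensor_bound_to_tomography}; propagating the resulting perturbation on $q^c$ to $q'$ through the residual circuit with $L(D)$, using that $q$ after $G_j$ is causally cut off from $q'$; and $O(1/(L(D)^2\epsilon^4))$ tomography per trial over $O(|\mathcal{G}|nD)$ trials. The only small difference is the order of two steps: the paper first applies the triangle inequality to the four $q^c$ output states to pick one sign $b$, so $L$-invertibility acts on a genuine difference of states and matches the per-$b$ test in \texttt{FactorizeFrontGate}, whereas you apply $L(D)$ to the rescaled traceless operator and leave that factor-of-two step implicit (the paper also uses the hereditary signal-propagation assumption tacitly and ignores the logarithmic confidence overhead you point out).
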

    \begin{proof}
        Consider $C$ at some (possibly intermediate) stage of circuit learning problem where some number of front pivots have been successfully removed. By good forward lightcone structure, there exists some front pivot gate $G \in C$ that is critical for connecting some pair of qubits $(q,q')$ in the circuit. We apply the \texttt{FactorizeFrontGate} protocol for $k = 2$, using inputs $\{q, q_2\} = \partial G$. The protocol proceeds by iterating over $G'$ until a successful inverse of $G$ is found. Let us verify that, with suitable choice of parameters ($\epsilon_f, \epsilon_t$) in \texttt{FactorizeFrontGate}, a correct inverse will be identified.

        Suppose $G' \neq G$. Then, because $\cal{G}$ is $\epsilon$-spaced by assumption, $d_\otimes(G, G') \geq \epsilon$. Hence, by \Cref{lem:tensor_bound_to_tomography}, our \texttt{FactorizeFrontGate} protocol will, during the Pauli iterations, find some $\sigma_1 \in \{X,Y,Z\}$ and $\sigma_2\in \{I, X, Y, Z\}$ such that 
        \begin{equation*}
            \norm{\Tr_1(G G'^\dagger \sigma_1 \otimes \sigma_2 G' G^\dagger )}_F \geq \frac{\epsilon^2}{400}.
        \end{equation*}
        Suppose $\sigma_2 = I$. Then our protocol sets $\rho_2 = I/2$, and toggles between $\rho_1^{\pm} = (I \pm \sigma_1)/2$. Let $\rho_2^{\pm}$ be the corresponding state of $q_2$ following application of $U \coloneqq G G'^\dagger$. We have
        \begin{align}
        \begin{aligned}
            \norm{\rho_2^+ - \rho_2^-}_F &= \norm{\Tr_1(U \rho_1^+\otimes(I/2) U^\dagger) - \Tr_2(U \rho_1^- \otimes (I/2) U^\dagger)}_F \\
            &= \norm{\frac12 \Tr_1(U (\sigma_1 \otimes I) U^\dagger)}_F \\
            & \geq \frac{\epsilon^2}{800}.
        \end{aligned}
        \end{align}
        Now consider the case $\sigma_2 \in \{X,Y,Z\}$. Let $\rho_2^{++}$ be the output state on $q_2$ after $U$ with initial state $(I+\sigma_1)/2)$ and $(I+\sigma_2)/2$ on qubits $q_1, q_2$ respectively 
        \begin{equation}
            \rho_2^{+-} \coloneqq \Tr_1\left(U \frac{I+\sigma_1}{2}\otimes \frac{I-\sigma_2}{2}U^\dagger\right).
        \end{equation}
        Define $\rho_2^{++}, \rho_2^{-+}, \rho_2^{--}$ analogously. We now show that either
        \begin{equation} \label{eq:one_or_other}
            \norm{\rho_2^{++} - \rho_2^{-+}}_F \geq \frac{\epsilon^2}{800}\quad \text{or} \quad \norm{\rho_2^{+-} - \rho_2^{--}}_F \geq \frac{\epsilon^2}{800}.
        \end{equation}
        Indeed, a direct calculations shows that $(\rho_2^{++} - \rho_2^{-+}) - (\rho_2^{+-} - \rho_2^{--}) = \Tr_1( U \sigma_1 \otimes \sigma_2 U^\dagger)$. Hence,
        \begin{align}
        \begin{aligned}
            \frac{\epsilon^2}{400} &\leq \norm{\Tr_1 U \sigma_1\otimes \sigma_2 U^\dagger)}_F \\
            &= \norm{(\rho_2^{++} - \rho_2^{-+}) - (\rho_2^{+-} - \rho_2^{--})}_F \\
            &\leq 2 \max\left\{\norm{\rho_2^{++} - \rho_2^{-+}}_F, \norm{\rho_2^{+-} - \rho_2^{--}}_F\right\}
        \end{aligned}
        \end{align}
        which implies Eq.~\eqref{eq:one_or_other}. Let $b \in \{0,1\}$ denote the sign of the Pauli eigenstate for $q_2$, at input, such that the lower bound holds, i.e.
        \begin{equation}
            \rho_2 = \frac{I + (-1)^b \sigma_2}{2}.
        \end{equation}

        Thus, \texttt{FactorizeFrontGate} eventually reaches a configuration, defined by $\sigma_1, \sigma_2$, and $q_2$ Pauli eigenstate given by $b \in \{0,1\}$, such that, when toggling $q$ between states $\rho_1^\pm \coloneqq (I\pm\sigma_1)/2$ with certain initial $q_2$ and applying $U$, $q_2$ differs by at least $\epsilon^2/800$. Let us now denote the associated states of $q_2$ after $G G'^\dagger$ by $\rho_2^+, \rho_2^-$. These states then pass through the rest of the circuit $C \setminus G$. Let $\cal{F}_{C\setminus G}$ be the single qubit channel from $q_2$ to $q'$ induced by $C$ without gate $G$. Note that $\cal{F}$ does not depend on the state of $q$ after $G$, by causality. Similarly, let $\cal{E}_{C \cup G'^\dagger}$ be the single-qubit channel from $q$ to $q'$ induced by $C$ with $G'^\dagger$ added. We have the relation
        \begin{equation*}
            \cal{E}_{C\cup G'^\dagger}(\rho_1) = \cal{F}_{C\setminus G}(\rho_2)
        \end{equation*}
        where, $\rho_1$ is the input state of $q_1$ and $\rho_2$ is the state of $q_2$ following $U$. Having given a lower bound on the ``effect" on $\rho_2$ by changing $\rho_1$, we can now lower bound the change in output. Let $\pi^\pm \coloneqq \cal{E}_{C\cup G'^\dagger}(\rho_1^\pm)$. Then
        \begin{align}
        \begin{aligned}
            \norm{\pi^+ - \pi^-}_F &= \norm{\cal{E}_{C\cup G'^\dagger}(\rho_1^+)-\cal{E}_{C\cup G'^\dagger}(\rho_1^-)}_F \\
            &= \norm{\cal{F}_{C\setminus G}(\rho_2^+) - \cal{F}_{C\setminus G}(\rho_2^-)}_F \\
            &\geq L(D) \norm{\rho_2^+ - \rho_2^-}_F \\
            &\geq L(D) \frac{\epsilon^2}{800}
        \end{aligned}
        \end{align}
        where, in the 3rd line, we used the definition of good signal propagation and the maximum circuit depth $D$. 

        On the other hand, if $G'^\dagger = G$, then for any fixed state on $q_2$ we have $\norm{\pi^+ - \pi^-}_F = 0$. Thus, by performing single-qubit tomography within accuracy $1 - O(L(d) \epsilon^2)$, one can distinguish precisely between these two cases. The cost of the single-qubit tomography is $O(\epsilon^{-4} L(d)^{-2})$ queries to $\pi^\pm$~\cite{nielsen2010quantum}, hence that many calls to $U_C$. This tomography is repeated $O(\abs{\cal{G}} \abs{C})$ times, where $\abs{C} = O(n D)$ is the number of gates. Putting these factors together gives the claimed complexity.
    \end{proof}

\paragraph{Broadening the setting}
    Although~\cref{thm:learning_forward_general} provides a clean setting in which our algorithmic framework is provably correct, we believe that these techniques apply to a wider variety of scenarios. We briefly discuss some aspects of this claim.

    \Cref{thm:learning_forward_general} only applies to forward learning, because our proof requires the ability to precisely set the state of the other input qubit $q_2$. Thus, the nature of black box queries breaks some of the symmetry between forward and backward learning that is provided by lightcones. On the other hand, even when the sequence $G'^\dagger G$ is at the back of $C$, one should expect that, by varying the full collection of input qubits, one ought to be able to change the state of $q_2$ towards the appropriate direction on the Bloch sphere, given the Hilbert space is small. Additionally, if one allows black box access to $U^\dagger$ as well, the symmetry between forward and backward learning becomes complete. Altogether, we view backwards learning as a viable approach, assuming good lightcones and signal propagation. 
    
    We next discuss well-spacedness. By ensuring a gap between gates in $\cal{G}$ and product gates, one can make the learning ``clean" and avoid approximate local inversions. However, while these conditions provide theoretical soundness, it is worth mentioning that approximate learning may be feasible under broader sets of conditions. For example, if $\cal{G}$ is taken from a continuous gate set, one can only expect approximate inversions can be possible. This places greater requirements on the tomography protocol for accuracy. Depending on the application, one might be willing to suffer increased runtimes due to the need to control these imperfections. For example, in the Haar random gate setting, it was observed in~\cite{fefferman2024anti} that handling these errors leads to quasi-polynomial runtimes for learning logarithmic-depth circuits. At present, it remains unclear if this is a fundamental limitation to the method, or if there are clever workarounds that allow for controlling the effects of near-identity gates during the learning process. We conjecture that this is in fact a technical barrier and not fundamental to the practicality of these methods for continuous gate families. 

    In the absence of well-spacedness, one cannot assume a successful factorization $GG'^\dagger \approx U_1\otimes U_2$ is in fact an inversion, and thus a method is needed to handle the remaining factors. For general $k$-local gates, this may be a significant challenge, since the exact structure of the factors is hard to discern. Thus, special additional assumptions may be required. For $k = 2$, the situation is quite simple; one knows the factors are on single qubits. Assuming $\cal{G} = \cal{U}(4)$ or some subgroup, the factors can be obliviously grouped into the next layer of gates. Then, at the last layer of inversion, one can simply learn the single qubit gates via tomography. This approach to the \texttt{HandleFactors} subroutine does not allow for exact circuit learning, but one can still preserve overall structure in the sense of~\Cref{prob:black_box_learning}. 
    \section{Lightcones for Random Circuits} \label{sec:random_circuit_ensemble}

Having shown how structural knowledge of a shallow circuit can generally assist with causal approaches to circuit learning, the present section analyzes this learning protocol for an important class of highly-connected circuits: random, all-to-all quantum circuits. Our main concern will be the onset of full lightcone coverage at large $n$ because, as illustrated by~\Cref{cor:lightcone_grows}, this exactly characterizes the transition to poor lightcone structure for this ensemble. We will derive a sharp transition barrier in this limit, in the style of a phase transition, with order parameter $d = d(n)$ relating the depth $d$ to the qubit count $n$. Though our results are proven in the asymptotic setting, we expect the ideas expressed in these results to hold qualitatively at smaller circuit sizes. In particular, we prove the following theorem.
\begin{theorem}[Informal]
    For a random, all-to-all, layered circuit, if the depth $d$ is greater than $d^*$, where
    \begin{equation*}
        d^* \approx \log_2 n + \log_2 \log_2 n,
    \end{equation*}
    then the lightcone of any qubit is fully saturated (a.a.s.) under mild assumptions.
\end{theorem}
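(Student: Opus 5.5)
We model the ensemble as a layered circuit in which each layer applies 2-qubit gates along an independent, uniformly random perfect matching of the $n$ qubits; $n$ is even, and the ``mild assumptions'' are that every qubit is acted on in every layer and that the layers are independent. Fix a qubit $q$ and let $S_t$ be the set of qubits in its forward lightcone after $t$ layers, with $|S_0|=1$. Because layers are perfect matchings, $S_{t+1}=S_t\cup M_{t+1}(S_t)$. Here $M_{t+1}(S_t)$ is the image of $S_t$ under the new matching. So $|S_{t+1}|=2|S_t|-X_t$, where $X_t$ is the number of matching edges with both endpoints in $S_t$.

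The plan has two phases. In the first, doubling phase, $|S_t|=s$ is small compared with $n$. Conditional on $S_t$, we have $\mathbb{E}[X_t]=\binom{s}{2}/(n-1)\approx s^2/2n$. Hence $|S_t|=2^t(1-o(1))$ until $2^t\approx n/\log n$ or so. We control $X_t$ with second moments or concentration for statistics of random matchings (switching arguments or a martingale bound), then union bound over layers.

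In the second, saturation phase, the complement $U_t=[n]\setminus S_t$ matters. Write $u_t=|U_t|/n$. A qubit in $U_t$ stays outside iff its partner is also in $U_t$, which happens with probability about $u_t$. So $u_{t+1}\approx u_t^2$ in expectation, with concentration while $|U_t|$ is large.

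Combining the phases gives a deterministic recursion. With $x_t=|S_t|/n$, we get $x_{t+1}\approx 1-(1-x_t)^2$, i.e. $1-x_{t+1}=(1-x_t)^2$. Let $x_{t_0}=c$ be a small constant, reached at $t_0=\log_2(cn)+O(1)$. Then $1-x_{t_0+k}=(1-c)^{2^k}\approx e^{-c2^k}$. Full saturation means the expected number of uncovered qubits $n e^{-c2^{k}}$ drops below $1$, which requires $c2^k\ge \ln n$, i.e. $k\approx\log_2\ln n-\log_2 c$. Adding the two phases cancels the $\log_2 c$ terms and gives $d\approx\log_2 n+\log_2\ln n+O(1)=\log_2 n+\log_2\log_2 n-\log_2\log_2 e+o(1)$, matching $d^*$. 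The cancellation is exact to leading order if we use the exact map $x\mapsto 2x-x^2$ throughout, with $x_0=1/n$. Indeed $1-x_t=(1-1/n)^{2^t}\approx e^{-2^t/n}$, and setting $n e^{-2^t/n}=1$ gives $2^t=n\ln n$, i.e. $t=\log_2 n+\log_2\ln n$, cleanly.

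So the key step is to prove that $|U_t|$ tracks $n(1-1/n)^{2^t}$ closely enough. We get the upper direction (full coverage by $d^*+\omega(1)$, or by $d^*$ in the sharp regime) from first-moment bounds, via a union bound over qubits. A fixed qubit $v\ne q$ stays outside $S_t$ with probability at most roughly $\prod_s(1-|S_s|/(n-1))$. After the doubling phase this is bounded by $\exp(-\sum_s|S_s|/n)$, and $\sum_s|S_s|/n\approx 2^{t}/n\cdot(1+o(1))$ once $S$ is large. Then $\mathbb{P}(S_d\neq[n])\le n\,\mathbb{E}\exp(-\cdots)\to 0$ when $2^d\ge(1+\delta)n\ln n$. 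A final union bound over all $n$ starting qubits $q$ costs another factor $n$ and shifts the threshold only by $O(1/\log n)$ in $\log_2$. We treat it carefully so that it is absorbed into the $o(1)$. The backward lightcones follow by symmetry, since the reversed circuit has the same distribution.

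The main obstacle is the concentration needed to make the heuristic recursion rigorous. In particular, we must bound the multiplicative error in $|S_t|$ accumulated over the roughly $\log_2 n$ doubling steps, keeping it $1-o(1)$ so that the leading constant in $2^t\approx n\ln n$ is preserved. We would first try to bound the collision counts $X_t$ by Markov's inequality, using $\sum_t \mathbb{E}X_t/|S_t|\lesssim\sum_t 2^t/n=O(\epsilon)$ up to the time $|S_t|=\epsilon n$, and then switch to Chernoff-type bounds for the complement once it is linear in $n$.
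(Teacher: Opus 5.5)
Your heuristic core is the paper's. The map $x\mapsto 2x-x^2$, solved as $1-x_t=(1-x_0)^{2^t}$, with saturation declared when $n(1-x_t)<1$, is exactly the $\tilde s$ analysis of Lemma~\ref{lem:tilde_s_solution} and Eq.~\eqref{eq:lightcone_transition_asymptotics}. Your Markov control of $\sum_t X_t/|S_t|$ in the doubling phase, using $|S_t|\le 2^t$, is a reasonable and more elementary substitute for the paper's linearized Chebyshev recursion. There is a small slip: $|S_{t+1}|=2|S_t|-2X_t$, since each internal edge uses two lightcone qubits. Your later recursion already uses the correct form.

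The gap is in the step you designate as the rigorous route to the upper bound. You combine $\prod_{s<t}(1-|S_s|/(n-1))\le\exp(-\sum_{s<t}|S_s|/n)$ with $\sum_{s<t}|S_s|/n\approx 2^t/n$, and the second estimate is false past the doubling phase. Because $|S_s|\le\min(2^s,n)$, one has deterministically $\sum_{s<t}|S_s|/n\le 2+\max(0,t-\log_2 n)$. At $2^t\approx n\ln n$ this equals $\log_2\ln n+O(1)$, not $\ln n$. Your bound on $\Pr(v\notin S_t)$ is therefore no better than order $(\ln n)^{-\log_2 e}$, and $n$ times it diverges. The inequality $1-y\le e^{-y}$ discards exactly the doubly exponential decay that comes from $1-x_s$ being tiny. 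Used as written, the argument certifies saturation only at depth about $\log_2 n+(1+\delta)\ln n$, which has the wrong leading constant.

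The quantity you actually need to control is the product of complement fractions $\prod_{s<t}u_s$, with $u_s=|U_s|/n$. In mean field this is $\exp(-\sum_s 2^s/n)$, and it telescopes through $u_{s+1}\approx u_s^2$ to $u_t/u_0\approx e^{-2^t/n}$. That telescoping needs multiplicatively accurate tracking of $|U_s|$ throughout the saturation phase. This includes the sublinear regime $|U_s|\ll n$, where relative errors in $\log u_s$ double at every layer. So the first-moment bound does not bypass the concentration problem; it reduces to it.

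The paper simply assumes $|s_\ell-\bar s_\ell|=\widetilde{O}(1/\sqrt n)$ (its ``mild assumption'') and then closes with two extra layers in Theorem~\ref{thm:main_lightcone_coverage}. A rigorous version of your plan needs the same endgame: concentration down to $|U_s|=\widetilde{O}(\sqrt n)$, followed by two applications of Markov's inequality to $\bb{E}[|U_{s+1}|\mid U_s]=|U_s|(|U_s|-1)/(n-1)$.

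Separately, the union bound over all $n$ starting qubits gives $n^2e^{-2^d/n}$, so it requires $2^d\gtrsim 2n\ln n$. That costs a full layer, not $O(1/\log n)$. Your sharp claim, for every qubit simultaneously with $o(1)$ error, therefore does not follow. The paper claims only a single lightcone with $O(1)$ slack, and notes that its Chebyshev bounds are too weak for the whole circuit.
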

\noindent The ``mild assumptions" consist of certain concentration results on the lightcone growth parameter, as well as approximations which make certain analytical calculations tractable. Both the assumptions and the result on $d^*$ are validated by direct numerical simulation. The leading $\log_2 n$ term appears to reflect the fact that lightcones can grow at most by a factor of 2 in each layer. The $\log_2 \log_2 n$ correction indicates that random, uncoordinated gate placement via random pairs is not maximally efficient at information scrambling.

Let's begin our analysis with a formal definition of our ensemble of random, layered circuits.
\begin{definition} \label{def:random_layered_architecture}
    Let $n,k$ be positive integers, and let $n = q k + r$ be their Euclidean quotient with remainder $r$. A $k$-local, random, layered circuit architecture of depth $d$ on $n$ qubits is built as follows: sample $d$ uniformly random partitions of the $n$ qubits into $q$ sets of $k$ qubits, with one remainder set of $r$ qubits. Form a circuit consisting of $d$ layers with k-local gates connecting qubits according to these partitions in each layer.
\end{definition}
\noindent As aside, another natural definition of an all-to-all random circuit architecture comes from removing the ``layered" qualifier: take $N$ gates, assign $k$ input/output qubits uniformly at random, then composed sequentially. One expects such an ensemble to possess similar properties to that defined above, provided $n$ is sufficiently large compared to $k$. 

As we subsequently consider learning over more general ensembles of circuits, we will encounter a rather annoying, but interesting, phenomenon that is especially relevant for 2-local circuits. It turns out that such circuits have a peculiarly high probability that a layered circuit has consecutive gates on the same set of qubits. Whereas for circuits coming from 3-local (or higher) ensembles have a vanishingly low probability of this occuring.
\begin{proposition} \label{prop:2_local_repeat}
    Let $P$ and $Q$ be $k$-regular partitions of $[n]$, with $k\vert n$, drawn uniformly at random. Let $p_k = \Pr(P \cap Q \neq \emptyset)$. Then, for all $k > 2$, $\lim_{n\rightarrow \infty} p_k = 0$. On the other hand, for $k = 2$, $\lim_{n\rightarrow \infty} p_2 = 1 - e^{-1/2} \approx 0.393$.
\end{proposition}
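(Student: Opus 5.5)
By symmetry, fix $P$ to be any $k$-regular partition and let $Q$ be uniformly random. Write $M = n/k$ for the number of blocks. For each block $B \in P$, let $A_B$ be the indicator that $B$ is also a block of $Q$, and set $X = \sum_{B\in P} A_B = \abs{P\cap Q}$. Then $p_k = \Pr(X \geq 1)$. The plan is to handle $k>2$ with a first-moment bound and $k=2$ with a Poisson limit obtained from factorial moments (equivalently, inclusion--exclusion).

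\emph{First moment.} The number of $k$-regular partitions of $[n]$ is $N(n,k) = n!/((k!)^{M} M!)$. A fixed $k$-set is a block of $Q$ with probability
\begin{equation*}
    \frac{N(n-k,k)}{N(n,k)} = \frac{(n-k)!\, k!\, M}{n!} = \binom{n-1}{k-1}^{-1}.
\end{equation*}
Hence
\begin{equation*}
    \mathbb{E}X = \frac{n}{k}\binom{n-1}{k-1}^{-1} = \Theta(n^{2-k}).
\end{equation*}
For $k\geq 3$ this tends to $0$, and Markov's inequality gives $p_k \le \mathbb{E}X \to 0$.

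\emph{Factorial moments.} For $k=2$ we have $\mathbb{E}X = (n/2)/(n-1) \to 1/2$, so we aim to show $X \Rightarrow \mathrm{Poisson}(1/2)$. For fixed $j$, the $j$-th factorial moment $\mathbb{E}[(X)_j]$ sums, over ordered $j$-tuples of distinct blocks of $P$, the probability that all of them lie in $Q$. Distinct blocks of $P$ are disjoint, so this probability equals
\begin{equation*}
    \frac{N(n-2j,2)}{N(n,2)} = \prod_{i=0}^{j-1}\frac{1}{n-2i-1}.
\end{equation*}
The number of ordered tuples is $(M)_j \sim (n/2)^j$. Therefore $\mathbb{E}[(X)_j] \to (1/2)^j$ for every fixed $j$.

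These are exactly the factorial moments of $\mathrm{Poisson}(1/2)$. Since the Poisson distribution is determined by its moments, the method of moments gives $X \Rightarrow \mathrm{Poisson}(1/2)$, and so $\Pr(X=0)\to e^{-1/2}$. Thus $p_2 \to 1-e^{-1/2}$.

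The step needing the most care is the passage from factorial-moment convergence to $\Pr(X=0)$. I would do it directly through the Bonferroni inequalities:
\begin{equation*}
    \Pr(X=0) = \sum_{j}(-1)^j \frac{\mathbb{E}[(X)_j]}{j!},
\end{equation*}
and the truncations of this alternating sum at even and odd $j$ bracket $\Pr(X=0)$. Passing to the limit $n\to\infty$ for fixed $j$ and then letting $j\to\infty$ yields $\sum_j (-1/2)^j/j! = e^{-1/2}$.

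The same factorial-moment computation also recovers the $k>2$ case, since there every term with $j\ge1$ vanishes in the limit.
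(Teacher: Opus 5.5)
Your proof is correct and is essentially the paper's inclusion--exclusion argument: your normalized factorial moments $\bb{E}[(X)_j]/j!$ are exactly the paper's terms $s_{kj}$, and for $k>2$ your Markov bound $p_k \le \bb{E}X = \Theta(n^{2-k})$ is the paper's union bound via $s_{k1}$. The one real difference is how you interchange the limit with the alternating sum for $k=2$: you bracket $\Pr(X=0)$ between fixed-level Bonferroni truncations, which needs only the pointwise convergence $s_{2j}\to 2^{-j}/j!$ for each fixed $j$, whereas the paper splits the full sum at $\lceil\sqrt{m}\rceil$ and so needs asymptotics uniform over $i\le\sqrt{m}$ plus a separate tail estimate. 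Your route is the cleaner of the two.
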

\noindent See~\cref{pf:prop:2_local_repeat} for a proof. As a consequence, we see that for all-to-all layered random circuits with 2-qubit gates, repetitions occur a.a.s. as the depth and qubit count increase. Such repetitions can defeat a naive local inversion algorithm, simply because the lightcone does not change across one part of this gate sequence. However, this can be handled in a simple way: treat the sequence as a single gate and perform the local inversion of the composite, if possible. Thus, for learning over the ensembles we consider, we assume such ``compressions" are carried out to prevent trivial unlearnability. Alternatively, one could modify the ensembles to forbid such repetitions, but we do not take this approach because as they needlessly narrow our perspective. 

While a complete analysis for general $k$ and $n$ would be welcome, we find that $k > 2$ poses significantly more mathematical difficulties than $k=2$ case, and arguably less relevant in both applied and theoretical settings. We thus restrict ourselves to $k = 2$; however, see~\cref{app:k3_difficult} for a brief discussion of the analytical difficulties which arise in the more general setting. For convenience, we will also assume $n$ is even in what follows.

\subsubsection*{Analysis of lightcone growth}

Let us consider how the forward lightcone changes across a layer of a circuit (our analysis will apply equally to backwards lightcones, but we prefer to use concrete language). In our ensemble, a layer of 2-local gates, defines a uniformly random perfect matching of the $n$ qubits in $Q$ (with respect to the complete graph). Given a patch $A \subseteq Q$ of qubits, after adding a layer, the resulting lightcone patch $A' \supseteq A$ grows according to the \emph{edge crossings} in the matching: those edges which go across the partition $\{A, Q\setminus A\}$. This process is repeated layer by layer, until there are no more layers or until all qubits are within the lightcone. We refer to the latter condition as ``full lightcone coverage." We seek to understand the onset of full lightcone coverage, as an indicator of the critical depth at which circuit learning by local inversion becomes infeasible. 

In our random circuit ensemble, starting from some fixed patch $A\subseteq Q$ of qubits of size $S_0 = \abs{A}$, the lightcone $L_\ell(A) \subseteq Q$ at layer $\ell$ is itself a random subset of the qubits. Because of the symmetry of our ensemble under permutations of qubits, any lightcone of given size is equally likely in the random circuit ensemble. It is thus natural to consider, instead, the integer variable $S_\ell(A) \coloneqq \abs{L_\ell}$ which, at any positive depth, will take on a value in $[n]$. Additionally, by the same symmetry, any initial configuration $A$ of given size leads to the same distribution of lightcone sizes $S_\ell$. We thus opt to leave the $A$ dependence implicit, and assign only a (deterministic) initial patch of size $S_0$ at the start of the circuit.

From intuitive considerations, it is clear that $S_\ell$ is non-decreasing and that once $S_\ell = n$ it remains so thereafter with certainty. A moment's reflection also reveals that for any $\ell > 0$, $S_\ell$ is necessarily even. Let us parametrize the growth of $S_\ell$ by the edge crossings random variable $C$.  
\begin{equation} \label{eq:causal_size_update}
    S_{\ell+1} = S_\ell + C(S_\ell).
\end{equation}
As our notation suggests, $C(S_\ell)$ depends only on $S_\ell$, not on previous steps. Thus, $(S_\ell)_{\ell\in \bb{N}}$ forms a Markov chain. We stress that $C$ itself is a random function. In this language, we are interested in the first hitting time $S_\ell = n$. This hitting time corresponds to some random depth $d^*$ of the layered circuit.

We now derive the distribution for $C(S)$ for fixed $S$. This is a general mathematical result on edge crossings over a bipartition, and is likely well known, but we include a proof here for completeness.
\begin{theorem} \label{thm:crossings_distribution}
    Let $n, S\in 2 \bb{Z}_+$ with $S \leq n$. The edge crossings $C(S) \in \bb{N}$, has probability mass function $p(C)$ given by
    \begin{equation*}
        p(C) = \binom{n}{S}^{-1} (n/2)! \frac{2^C}{C! \left(\frac{S-C}{2}\right)! \left(\frac{n-S-C}{2}\right)!}
    \end{equation*}
    for $C \leq S$ and even, else $0$. 
\end{theorem}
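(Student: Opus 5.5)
Fix a subset $A \subseteq Q$ with $\abs{A} = S$. Because the matching is uniform, $p(C)$ is the number of perfect matchings of the $n$ qubits with exactly $C$ edges crossing between $A$ and $Q\setminus A$, divided by the total number of perfect matchings. We compute both counts directly. The total number of perfect matchings of $n$ points is
\begin{equation*}
    M(n) = \frac{n!}{2^{n/2}(n/2)!}.
\end{equation*}

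For the numerator, we count a matching with exactly $C$ crossing edges in three stages. First, choose which $C$ qubits of $A$ are matched across, in $\binom{S}{C}$ ways. Next, choose which $C$ qubits of $Q\setminus A$ receive them, in $\binom{n-S}{C}$ ways. Then pair these two sets bijectively, in $C!$ ways. The remaining $S-C$ qubits of $A$ must be perfectly matched among themselves, and likewise the remaining $n-S-C$ qubits of $Q\setminus A$. This forces $S-C$ and $n-S-C$ to be even; since $S$ and $n$ are even, that is exactly the condition that $C$ is even. It also requires $C \le \min(S, n-S)$; when $C > n-S$ the factorial $((n-S-C)/2)!$ is undefined and should be read as $1/(\text{negative})! = 0$. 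These internal matchings contribute $M(S-C)\,M(n-S-C)$. So the number of matchings with exactly $C$ crossings is
\begin{equation*}
    N(C) = \binom{S}{C}\binom{n-S}{C} C!\, M(S-C)\, M(n-S-C).
\end{equation*}
Each matching is counted exactly once, because the crossing set together with the two internal matchings determines the matching uniquely.

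It remains to simplify the ratio $N(C)/M(n)$. Expanding the binomials gives
\begin{equation*}
    N(C) = \frac{S!\,(n-S)!}{C!\,(S-C)!\,(n-S-C)!}\cdot \frac{(S-C)!}{2^{(S-C)/2}((S-C)/2)!}\cdot\frac{(n-S-C)!}{2^{(n-S-C)/2}((n-S-C)/2)!}.
\end{equation*}
The factors $(S-C)!$ and $(n-S-C)!$ cancel, and the powers of two combine to $2^{-(n-2C)/2} = 2^{C}\,2^{-n/2}$. Dividing by $M(n)$ cancels the $2^{-n/2}$, brings $(n/2)!$ into the numerator, and leaves $S!(n-S)!/n! = \binom{n}{S}^{-1}$. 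This yields exactly the claimed formula.

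There is no real obstacle here; the only care needed is the bookkeeping of parity and the range of $C$. I would end with a remark that summing $p(C)$ over even $C$ gives $1$, either as a sanity check or as a direct consequence of the fact that the counts partition all matchings.
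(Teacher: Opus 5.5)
Your proof is correct and follows the same route as the paper. Both count matchings by choosing the $C$ crossing endpoints on each side ($\binom{S}{C}\binom{n-S}{C}$ ways), pairing them ($C!$ ways), and perfectly matching the leftover qubits on each side, then divide by the total number of pairings and simplify. Your remark that the formula must be read as $0$ when $C > n-S$, where $((n-S-C)/2)!$ has a negative argument, is a small but worthwhile refinement of the stated range $C \le S$.
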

\begin{proof}
    The number of possible pairings (2-partitions) on $n$ items $\cal{N}_{n,2}$ is given by Eq.~\eqref{eq:number_of_partitions}. Under uniform sampling of these pairings,
    \begin{equation} \label{eq:count_pairings}
        p(C) = \frac{n_C(S)}{\cal{N}_{n,2}}
    \end{equation}
    where $n_C(S)$ is the number of pairings on $[n]$ in which $C$ edges cross over the partition $[S], n\setminus[S]$. This can be counted as follows. First, choose $C$ items from $[S]$ and $C$ from $[n]\setminus [S]$. The number of choices is $\binom{S}{C} \times \binom{n-S}{C}$. Next, form pairs across these two sets, which there are $C!$ ways to do. Finally, pair off the remaining $S-C$ elements in $[S]$ among themselves, and the remaining $n - S - C$ elements of $[n] \setminus[S]$ similarly. There are $\cal{N}_{S-C,2} \cal{N}_{n-S-C,2}$ ways to to this. Multiplying across all choices gives $n_C(S)$, thereby $p(C)$ through Eq.~\eqref{eq:count_pairings}.
    \begin{equation*}
        p(C) = \frac{\cal{N}_{S-C,2} \cal{N}_{n-S-C,2}}{\cal{N}_{n,2}} \binom{S}{C} \binom{n-S}{C} C!
    \end{equation*}
    After expanding the $\cal{N}$ via Eq.~\eqref{eq:number_of_partitions} and performing some algebra, this can be reduced to
    \begin{equation*} \label{eq:crossings_prob_simplified}
        p(C) = \binom{n}{S}^{-1} (n/2)! \frac{2^C}{C! \left(\frac{S-C}{2}\right)! \left(\frac{n-S-C}{2}\right)!}
    \end{equation*}
    as claimed.
\end{proof}

The update rule $S_{\ell+1} = S_\ell + C_\ell$, with edge crossings $C_\ell$ specified according to~\cref{thm:crossings_distribution}, completely specifies the sequence $S$ from a probabilistic standpoint. This distribution for $C$ is somewhat complex, but we can compute some of its statistics using a simpler description, based on individual random pairings.
\begin{lemma} \label{lem:crossings_statistics}
     The edge crossings variable $C(S)$ discussed above is, for fixed $S\in[n]$ and $k=2$, a sum of $S$ identically distributed (but not independent) Bernoulli random variables corresponding to whether the $i$th lightcone qubit connects outside the lightcone via some gate. The mean and variance of $C(s)$ are given by
    \begin{equation*}
        \bb{E}C(S) = \frac{S(n-S)}{n-1}, \qquad  \Var(C(S)) = \frac{2S(S-1)(n-S)(n-S-1)}{(n-3)(n-1)^2}
    \end{equation*}
\end{lemma}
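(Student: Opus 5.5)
The plan is to write $C(S)$ as a sum of indicators and compute its first two moments directly; \cref{thm:crossings_distribution} is not needed. By the permutation symmetry of the ensemble, take the lightcone to be $A=[S]$. For $i\in A$ let $X_i$ be the indicator that the partner of $i$ in the uniformly random perfect matching lies in $Q\setminus A$. Each crossing edge has exactly one endpoint in $A$, so $C(S)=\sum_{i\in A}X_i$. Again by symmetry the $X_i$ are identically distributed, and they are clearly dependent, since pairings are correlated. This gives the first claim of \cref{lem:crossings_statistics}.

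For the mean, note that the partner of a fixed qubit $i$ is uniform over the other $n-1$ qubits. Hence $\Pr(X_i=1)=p\coloneqq (n-S)/(n-1)$, and linearity gives $\bb{E}C(S)=Sp=S(n-S)/(n-1)$.

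For the variance, I compute $\bb{E}C^2=\sum_i \bb{E}X_i+\sum_{i\neq j}\bb{E}X_iX_j$, so the only nontrivial input is the joint probability $\Pr(X_i=X_j=1)$ for distinct $i,j\in A$. I condition on $X_i=1$: then $i$ is matched to some outside qubit $o$. Conditional on the pair $\{i,o\}$, the remaining $n-2$ qubits form a uniform perfect matching. The qubit $j$ is among them, and its partner is uniform over the other $n-3$, of which $n-S-1$ are outside $A$. Thus
\begin{equation*}
\Pr(X_i=X_j=1)=\frac{n-S}{n-1}\cdot\frac{n-S-1}{n-3},
\end{equation*}
and therefore
\begin{equation*}
\Var C(S)=Sp+S(S-1)\frac{(n-S)(n-S-1)}{(n-1)(n-3)}-S^2p^2 .
\end{equation*}

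The remaining step is algebra, and it is the only place where care is needed. I would factor out $Sp$ to get
\begin{equation*}
\Var C(S)=Sp\Bigl[1+\tfrac{(S-1)(n-S-1)}{n-3}-\tfrac{S(n-S)}{n-1}\Bigr].
\end{equation*}
Putting the bracket over the common denominator $(n-1)(n-3)$, I expect it to reduce to $2(S-1)(n-S-1)/\bigl((n-1)(n-3)\bigr)$, which yields exactly the claimed formula. As a sanity check, take $n=4$, $S=2$: then $C\in\{0,2\}$ with $\Pr(C=2)=2/3$, so the variance is $8/9$, which matches the formula.

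Finally, I would note the edge cases. The statement implicitly requires $n\ge 4$ so that $n-3>0$, and the argument for $j$ uses $n-2\ge 2$. At $S=n$ both sides vanish. The terms with $S-1$ or $n-S-1$ equal to zero also vanish consistently, so no separate treatment is needed beyond $n\geq 4$.
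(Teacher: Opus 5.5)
Your proposal is correct and follows the paper's proof essentially step for step: you use the same indicator decomposition $C(S)=\sum_{i\in[S]}X_i$, the same marginal $\Pr(X_i=1)=(n-S)/(n-1)$, and the same conditioning argument giving $\Pr(X_i=X_j=1)=\frac{n-S}{n-1}\cdot\frac{n-S-1}{n-3}$. The algebra you anticipate also checks out: over the denominator $(n-1)(n-3)$ the bracket's numerator is $2(S-1)(n-S-1)$, which supplies the step the paper leaves as ``tedious but straightforward.''
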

\begin{proof}    
    Without loss of generality, let $Q =[n]$ be the qubit set and $[S]$ be the present lightcone. For each $i \in [S]$, let $X_i$ denote indicator variable corresponding to whether $i$ connected to some element of $[n]\setminus[S]$ according to the random pairing $P_2$ over $Q$. Then
    \begin{equation} \label{eq:connections_sum}
        C(S) = \sum_{i=1}^S X_i
    \end{equation}
    and by symmetry, the $\{X_i\}_{i\in[S]}$ are identically distributed. Additionally, $\bb{E} X_i = \Pr(X_i = 1)= \frac{n-S}{n-1}$. The result for $\bb{E}C(S)$ follows by linearity.
    
    To compute the variance, we compute the square of the sum from Eq.~\eqref{eq:connections_sum}, and note that $X_i^2 = X_i$.
    \begin{align}
    \begin{aligned}
        \Var(C(S)) &= \bb{E}(C(S)^2) - \bb{E}(C(S))^2\\
        &=\bb{E}\sum_{i=1}^S X_i + \bb{E}\sum_{\substack{i, j=1 \\ i \neq j}}^S X_iX_j -  \left(S\frac{n-S}{n-1} \right)^2 \\
        &= \bb{E}(X_1X_2) S(S-1) + S \frac{n-S}{n-1} \left(1- S \frac{n-S}{n-1}\right)
    \end{aligned}
    \end{align}
    Here we've used that, yet again by symmetry, $\bb{E}(X_i X_j) = \bb{E}(X_1 X_2)$ for any distinct $i,j \in [S]$. This remaining covariance can be computed as follows.
    \begin{align} \label{eq:X_covariance}
    \begin{aligned}
        \bb{E}(X_1X_2) &= \Pr(X_1=1,X_2=1)\\
        &= \Pr(X_1=1|X_2=1)\cdot \Pr(X_2=1) \\
        &=\frac{n-S-1}{n-3}\cdot \frac{n-S}{n-1}
    \end{aligned}
    \end{align}
    Plugging in these terms and performing some tedious but straightforward algebra leads to the claimed expression for the variance.
\end{proof}
\noindent As a simple corollary, for large $n$ one finds $\sigma = \sqrt{\Var(C)} = O(\sqrt{n})$ independently of $S \in [n]$. Thus, the crossing fraction $c = C/n$ concentrates in the large $n$ limit in a manner similar to a sum of independent random variables. This can be explained by the fact that the covariance $\bb{E}(X_i X_j) - \bb{E}X_1 \bb{E}X_j$ is small ($O(1/n))$), so that the total covariance over $O(n^2)$ pairs remains comparable to the variance of $n$ bounded, independent variables $(O(n))$. 

Strong tail bounds on the fluctuations of $C$ would be beneficial for later analysis, especially ones that capture the dependence on $\Var(C)$ since this quantity is small for $S$ near $0$. Ideally, such bounds would be exponential, but we find it difficult to prove this rigorously for the full range of relevant values of $n,S, C$. In a suitable ``intermediate" regime of these parameters, one can derive a central limit approximation of $p(C)$ with a normal distribution (of comparable mean and variance). By this we mean
\begin{equation} \label{eq:CLT_crossings}
    p(C) \approx \frac{2}{\sqrt{2\pi \Var(C)}} e^{-(C - \bb{E}C)^2/(2 \Var(C)^2)}(1 + O(n^{-1/2})).
\end{equation}
for ``typical variation" $C = \bb{E}C + \Theta(\sqrt{n})$. This heuristic, which might be made rigorous with additional attention, provides evidence towards strong tails in general. We refer the interested reader to~\cref{app:central_limit} for the derivation. On the other hand, we \emph{are} able to derive sub-Gaussian tail bounds on $C$ for the full range of $C, S, n$, but with an exponent that does not scale with $\Var(C)$. Sadly, this bound will not be suitable, since we will need to take advantage of the small fluctuations near $S \approx 0$. See~\cref{app:azuma} for a statement and derivation of this bound, which utilizes Azuma's inequality, and may point the way towards rigorous sub-Gaussian bounds with variance dependence.

To maintain rigor as far as reasonably possible, we will make do with a simple yet general \emph{Chebyshev inequality} in later proofs. This allows us to get tight enough concentration for a single qubit lightcone, but is not strong enough to guarantee small fluctuations across an entire circuit. 
Inspired by the concentration of $C(S)$ at large $n$, we continue our analysis by considering a deterministic variant of the lightcone update~\eqref{eq:causal_size_update}, where $C$ is replaced by its average value from Lemma~\ref{lem:crossings_statistics}.

We expect this new sequence to approximate the original random process via a law of large numbers. With $\bb{E}C(S)$ given in Lemma~\ref{lem:crossings_statistics}, let $\Bar{S}_\ell$ denote the (deterministic) sequence defined by the single-step recurrence
\begin{equation} \label{eq:big_s_bar_update}
    \Bar{S}_{\ell + 1} =\Bar{S}_\ell + \bb{E}C(\Bar{S}_\ell) = \Bar{S}_\ell\left(1 + \frac{n-\Bar{S}_\ell}{n-1}\right)
\end{equation}
for given initial condition $\Bar{S}_0\in [n]$. That is, we increase the lightcone size by the expected value each time. Note that this is not an integer. We find it natural to work in terms of the \emph{relative} causal set size $\Bar{s}_\ell \coloneqq \Bar{S}_\ell/n$, which satisfies
\begin{equation} \label{eq:little_s_bar}
    \Bar{s}_{\ell+1} = \Bar{s}_\ell\left(1 + \frac{1-\Bar{s}_\ell}{1-1/n}\right).
\end{equation}
One can check that $\bar{s} = 1$ is a stable fixed point, and that for any $\bar{s}_\ell \in (0,1)$, $\bar{s}_{\ell+1} > \bar{s}_\ell$. We are primarily interested in the behavior for $\bar{s} \leq 1 - 1/n$, since, in our model of lightcone growth, this indicates full lightcone coverage up to rounding to the nearest even integer (recall that $S_\ell$ is an even integer for $\ell > 0$).

Observe that, in the large $n$ limit, the recurrence of Eq.~\eqref{eq:little_s_bar} appears well approximated by the simpler sequence 
\begin{equation} \label{eq:large_n_sequence}
    \Tilde{s}_{\ell+1} = \Tilde{s}_\ell(2 - \Tilde{s}_\ell).
\end{equation}
Although quadratic recurrences do not generally admit closed-form solutions, this one thankfully does. Moreover, it immediately provides a lower bound on the more relevant sequence $\bar{s}_\ell$ of interest.
\begin{lemma} \label{lem:tilde_s_solution}
    The recurrence relation of Eq.~\eqref{eq:large_n_sequence}, given initial value $\tilde{s}_0 \in \bb{R}$, has (unique) solution
    \begin{equation*}
        \tilde{s}_{\ell} =1 - (1-\tilde{s}_0)^{2^\ell}.
    \end{equation*}
    Additionally, let $\bar{s}_\ell$ be the sequence generated by the recurrence relation of Eq.~\eqref{eq:little_s_bar}, and given the same initial condition $\bar{s}_0 = \tilde{s}_0 \in (0,1)$. Let $d_1$ be the first index such that $\bar{s}_{d_1} \geq 1$ Then for any $\ell < d_1$, $\tilde{s}_\ell \leq \bar{s}_\ell$. 
\end{lemma}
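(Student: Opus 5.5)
The plan has two parts: first verify the closed form, then prove the comparison by induction using monotonicity of the update map.

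For the closed form, substitute $u_\ell \coloneqq 1-\tilde{s}_\ell$. Then $1-\tilde{s}_{\ell+1} = 1 - 2\tilde{s}_\ell + \tilde{s}_\ell^2 = (1-\tilde{s}_\ell)^2$, so $u_{\ell+1} = u_\ell^2$. Iterating gives $u_\ell = u_0^{2^\ell}$, which is the claimed formula. Uniqueness is immediate because a first-order recurrence is determined by its initial value, and this holds for any real $\tilde{s}_0$.

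For the comparison, set $f(x) = x(2-x)$ and $g(x) = x\left(1 + \frac{1-x}{1-1/n}\right)$, so that $\tilde{s}_{\ell+1} = f(\tilde{s}_\ell)$ and $\bar{s}_{\ell+1} = g(\bar{s}_\ell)$. The argument rests on two facts.
\begin{enumerate}
\item For $x\in[0,1]$ we have $g(x)\geq f(x)$. Indeed, $1-x\geq 0$ and $\frac{1}{1-1/n}\geq 1$ give $g(x) - f(x) = x(1-x)\left(\frac{1}{1-1/n}-1\right)\geq 0$.
\item $f$ is nondecreasing on $(-\infty,1]$, since $f'(x) = 2-2x \geq 0$ there.
\end{enumerate}

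The induction claim is $\tilde{s}_\ell \leq \bar{s}_\ell$ for all $\ell < d_1$. The base case $\ell=0$ holds with equality. For the step, suppose $\ell+1<d_1$ and $\tilde{s}_\ell\leq \bar{s}_\ell$. By definition of $d_1$, $\bar{s}_\ell<1$. Also $\bar{s}_\ell>0$, because $\bar{s}_0\in(0,1)$ and $g$ maps $(0,1)$ into $(0,\infty)$. Similarly $\tilde{s}_\ell\in(0,1)$, since $u_0\in(0,1)$ forces $u_\ell\in(0,1)$. Then
\[
\tilde{s}_{\ell+1} = f(\tilde{s}_\ell) \leq f(\bar{s}_\ell) \leq g(\bar{s}_\ell) = \bar{s}_{\ell+1},
\]
where the first inequality uses monotonicity of $f$ on $[0,1]$ and the second uses fact 1 with $x=\bar{s}_\ell\in(0,1)$.

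The main obstacle is ensuring $\bar{s}_\ell$ stays in $[0,1]$, where both facts apply; outside that interval $g<f$ can occur. This is why the statement restricts to $\ell<d_1$: every $\bar{s}_j$ with $j<d_1$ lies below $1$, and that is all the inductive step uses. I will keep the indexing careful so that the step from $\ell$ to $\ell+1$ only invokes $\bar{s}_\ell<1$, which holds since $\ell<\ell+1<d_1$.
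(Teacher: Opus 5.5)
Your proposal is correct and matches the paper's proof. The substitution $1-\tilde{s}_\ell$ turns the recurrence into repeated squaring, and the comparison follows by induction from $\bar{s}_{\ell+1} = \bar{s}_\ell(2-\bar{s}_\ell) + \bar{s}_\ell(1-\bar{s}_\ell)/(n-1)$ together with monotonicity of $x(2-x)$ for $x \leq 1$; your explicit check that $\bar{s}_\ell > 0$ (needed for the correction term to be nonnegative) is a detail the paper leaves implicit.
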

\begin{proof}
    A simple change of variables $\Tilde{r}_\ell \coloneqq 1 - \Tilde{s}_\ell$ reveals an elementary quadratic recurrence
    \begin{equation*}
        \Tilde{r}_{\ell+1} = \Tilde{r}_\ell^2
    \end{equation*}
    whose solution is $\Tilde{r}_\ell = \Tilde{r}_0^{2^\ell}$. Writing this in terms of the original $\tilde{s}$ variables gives the claimed solution.
    
    Meanwhile, the $\bar{s}$ recurrence of Eq.~\eqref{eq:little_s_bar} can be expressed as
    \begin{align} \label{eq:s_update_compare}
        \bar{s}_{\ell+1} &= \bar{s}_\ell(2-\bar{s}_\ell) + \epsilon(\bar{s}_\ell)
    \end{align}
    where $\epsilon_n(x) \coloneqq \frac{x(1-x)}{n-1}$ is strictly positive for $x \in (0,1)$. Thus, the increment for $\bar{s}$ is strictly greater than that for $\tilde{s}$. We now prove, by induction, that $\bar{s}_{\ell} \geq \tilde{s}_{\ell}$, for the $\ell\in \bb{N}$ such that $\bar{s}_{\ell-1} < 1$. The base case $\ell = 0$ comes immediately from our assumption that $\tilde{s}_0 = \bar{s}_0$. Next, suppose by induction that $\bar{s}_{\ell} \geq \tilde{s}_{\ell}$ for some $\ell$ such that $\bar{s}_\ell \leq 1$. By the inductive hypothesis, the following chain of equalities and inequalities holds.
    \begin{align*} 
        \bar{s}_{\ell+1} &= \bar{s}_\ell(2 - \bar{s}_\ell) + \epsilon(\bar{s}_\ell) \\
        &> \bar{s}_\ell(2-\bar{s}_\ell) \\
        &\geq \tilde{s}_\ell(2-\tilde{s}_\ell) \\
        &= \tilde{s}_{\ell+1}
    \end{align*}
    Here we've used that, because $\bar{s}_\ell <1$, $\epsilon_n(\bar{s}_\ell)$ is positive, and that $x(2-x)$ is increasing for $x < 1$. By finite induction, we conclude the bound holds for all such $\ell$. 
\end{proof}
\noindent Due to law of large numbers and the similarity between $\bar{s}$ and $\tilde{s}$ at large $n$, one expects that the $\tilde{s}_\ell$ sequence accurately captures the behavior of $s = S/n$, the actual stochastic lightcone size under this ensemble. For example, of particular interest to us is the first layer $d^*$ at which the light cone random variable $S_\ell$ equals $n$; this corresponds to ``full light cone coverage" and indicates a transition between learnability and unlearnability of a circuit by the previously discussed circuit learning protocols. We can hope to understand this by considering the first layer $\tilde{d}^*$ at which $\tilde{s}_{\tilde{d}^*} \geq 1 - 1/n$. This is straightforward given Lemma~\ref{lem:tilde_s_solution}: such a $\tilde{d}^*$ satisfies 
\begin{equation*}
    (1-\tilde{s}_0)^{2^{\tilde{d}^*}} \leq 1/n.
\end{equation*}
Solving the inequality for $\tilde{d}^*$ and taking the smallest integer solution gives
\begin{equation} \label{eq:d_tilde_exact}
    \tilde{d}^* = \left\lceil\log_2\log_2 n - \log_2 \log_2 \frac{1}{1-\tilde{s}_0}\right\rceil.
\end{equation}
Let us now take $\tilde{s}_0 = S_0/n$ for some $n$-independent constant $S_0$. We have
\begin{equation*}
    \log_2\left(\frac{1}{1 - S_0/n}\right) = \frac{S_0}{n\ln 2} + O(1/n^2)
\end{equation*}
which implies
\begin{equation*}
    \log_2 \log_2\left(\frac{1}{1-S_0/n}\right) = -\log_2 n + \log_2 S_0 + \log_2 \log_2 e  + O(1/n).
\end{equation*}
Plugging this into Eq.~\eqref{eq:d_tilde_exact} gives
\begin{equation} \label{eq:lightcone_transition_asymptotics}
    \tilde{d}^* = \left\lceil\log_2 n + \log_2 \log_2 n - \log_2 S_0 - \log_2 \log_2 e + O(1/n)\right\rceil
\end{equation}
where $\log_2 \log_2 e \approx 0.53$. We remark that a generalized expression plausibly holds for all $k\geq2$, with logarithms taken base $k$, but verification of this is left to future work.

If $\tilde{s}$ approximates $S/n$ for large $n$, as a law-of-large-numbers analysis suggests, then our analysis of lightcone growth is complete: we expect $\tilde{d}^* \approx d^*$, and thus Eq.~\eqref{eq:lightcone_transition_asymptotics} gives for the onset of full lightcone coverage. The rest of this section is dedicated to arguing that the large scale correspondence indeed holds. To begin, we start with a comparison between $\tilde{s}$ and $\bar{s}$, the ``averaged" version of $S/n$ as given by Eq.~\eqref{eq:little_s_bar}. We have already shown $\tilde{s}$ lower bounds $\bar{s}$, and now we go further to show they are, in fact, arbitrarily close as $n$ increases (as suggested by comparing their update rule). 
\begin{lemma} \label{lem:finite_size_error}
     Let $\bar{s}_\ell$ and $\tilde{s}_\ell$ be the recurrence relations defined in Eqs.~\eqref{eq:little_s_bar} and~\eqref{eq:large_n_sequence} , respectively, with $\bar{s}_0 = \tilde{s}_0 = S_0/n$ for $S_0 \in \bb{R}_+$ an $n$-independent constant. Then for all $\ell \in \bb{N}$ such that $\tilde{s}_\ell \leq 1 - 1/n$ (i.e., $\ell \in [\tilde{d}^*]_0)$,
    \begin{equation*}
        \abs{\bar{s}_\ell - \Tilde{s}_\ell} \leq \frac{\ell}{2n} + O\left(\frac{\ell}{n^{3/2}}\right).
    \end{equation*}
\end{lemma}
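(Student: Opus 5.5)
The plan is to study the nonnegative gap $e_\ell \coloneqq \bar{s}_\ell - \tilde{s}_\ell$, which is nonnegative by \Cref{lem:tilde_s_solution}, and derive a one-step recurrence for it. With $f(x) = x(2-x)$, Eq.~\eqref{eq:s_update_compare} gives $\bar{s}_{\ell+1} = f(\bar{s}_\ell) + \epsilon_n(\bar{s}_\ell)$, where $\epsilon_n(x) = x(1-x)/(n-1)$. The identity $f(\bar{s}) - f(\tilde{s}) = (\bar{s}-\tilde{s})(2-\bar{s}-\tilde{s})$ then yields
\begin{equation*}
    e_{\ell+1} = e_\ell\,(2 - \bar{s}_\ell - \tilde{s}_\ell) + \frac{\bar{s}_\ell(1-\bar{s}_\ell)}{n-1} \leq 2(1-\tilde{s}_\ell)\, e_\ell + \frac{\bar{s}_\ell(1-\bar{s}_\ell)}{n-1}.
\end{equation*}
The inequality uses $e_\ell \geq 0$ and $\bar{s}_\ell \geq \tilde{s}_\ell$.

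A naive bound $\epsilon_n \leq 1/(4(n-1))$ together with a multiplier of up to $2$ would give exponential growth, which is far too weak. Instead, I would prove a \emph{shape-preserving} inductive bound of the form
\begin{equation*}
    e_\ell \leq \frac{2\ell}{n}\,\tilde{s}_\ell(1-\tilde{s}_\ell)\,(1 + \eta_\ell),
\end{equation*}
where $\eta_\ell$ is a small correction. This shape is natural for two reasons. First, $\tilde{s}(1-\tilde{s})$ is exactly the profile of the forcing term. Second, under the map $\tilde{s}\mapsto\tilde{s}(2-\tilde{s})$ one has $\tilde{s}'(1-\tilde{s}') = \tilde{s}(2-\tilde{s})(1-\tilde{s})^2$, so the homogeneous part $2(1-\tilde{s})\cdot\tilde{s}(1-\tilde{s})$ compares favourably with the propagated profile, since $2 \leq 2(2-\tilde{s})\cdot\tfrac12$. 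Substituting the hypothesis, and replacing $\bar{s}_\ell(1-\bar{s}_\ell)$ by $\tilde{s}_\ell(1-\tilde{s}_\ell) + O(e_\ell)$, reduces the inductive step to an elementary polynomial inequality in $\tilde{s}_\ell \in (0, 1-1/n]$, with a $1/(n-1)$ versus $1/n$ discrepancy absorbed into $\eta$. Once the claim holds, $\tilde{s}(1-\tilde{s}) \leq 1/4$ gives $e_\ell \leq \ell/(2n)$ plus lower-order terms, which is exactly the form stated.

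The main obstacle is that this clean induction breaks down when $1-\tilde{s}_\ell$ becomes small, i.e.\ in the last one or two layers before $\tilde{d}^*$. There the forcing $\bar{s}(1-\bar{s})/n$ is no longer dominated by the contracted propagated profile $(1-\tilde{s})^2$. To handle this, I would split the range of $\ell$ at the first index where $1-\tilde{s}_\ell \leq n^{-1/2}$. Before that index the shape-preserving induction goes through. After it, at most $O(1)$ steps remain before $\tilde{s} > 1-1/n$, by the explicit solution of \Cref{lem:tilde_s_solution}, and in this regime the multiplier satisfies $2(1-\tilde{s}_\ell) \leq 2n^{-1/2}$. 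So each such step contributes at most $e_\ell\cdot O(n^{-1/2}) + 1/(4(n-1))$. This keeps $e_\ell \leq \ell/(2n) + O(\ell n^{-3/2})$, which is where the stated $O(\ell/n^{3/2})$ error term comes from. If the polynomial inequality in the first regime turns out too tight, the fallback is to track the error in the variable $r = 1-s$ via $\bar{r}_{\ell+1} = \bar{r}_\ell^2 - \bar{r}_\ell(1-\bar{r}_\ell)/(n-1)$, and bound relative errors multiplicatively, using $\prod(1+O(1/n)) = 1+O(\ell/n)$.
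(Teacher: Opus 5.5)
There is a genuine gap: the shape-preserving hypothesis $e_\ell\le\frac{2\ell}{n}\tilde s_\ell(1-\tilde s_\ell)(1+\eta_\ell)$ with small $\eta_\ell$ is false in the late part of your first regime, not only in the last layer or two. Your justification $2\le 2(2-\tilde s)\cdot\frac12$ says $2\le 2-\tilde s$, which is wrong. In fact $2(1-\tilde s)\cdot\tilde s(1-\tilde s)=2\tilde s(1-\tilde s)^2$ exceeds the propagated profile $\tilde s(2-\tilde s)(1-\tilde s)^2$. With $r=1-\tilde s_\ell$, your inductive step reduces to $4\ell r+1\le 2(\ell+1)r(1+r)$, i.e.\ $1+2(\ell-1)r\le 2(\ell+1)r^2$, which already fails at $r=\frac12$ for every $\ell\ge 2$.

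More decisively, the bound itself fails. Both terms of your exact recurrence are nonnegative, so $e_{\ell+1}\ge\bar s_\ell(1-\bar s_\ell)/(n-1)\approx\tilde s_\ell(1-\tilde s_\ell)/n$. Your target at layer $\ell+1$ is $\frac{2(\ell+1)}{n}\tilde s_\ell(2-\tilde s_\ell)(1-\tilde s_\ell)^2$. The forcing alone beats this as soon as $1-\tilde s_\ell<1/(4(\ell+1))$, i.e.\ while $1-\tilde s_\ell$ is still of order $1/\log n$, far above $n^{-1/2}$. For example, with $n=2^{40}$, $S_0=1$, $\ell=43$, one has $1-\tilde s_\ell\approx e^{-8}$, yet $n e_\ell/(\tilde s_\ell(1-\tilde s_\ell))\approx 236$ against your $2\ell=86$. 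Near saturation the error is dominated by the previous layer's forcing, $e_\ell\approx(1-\tilde s_{\ell-1})/n=\sqrt{1-\tilde s_\ell}/n$. No profile vanishing linearly in $1-\tilde s_\ell$ can dominate that, so the split at $1-\tilde s_\ell=n^{-1/2}$ does not isolate the difficulty.

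The fallback has a related flaw. In the $r$-variables the relative error is squared at every layer, and the per-step factor $1-\bar s_i/((n-1)\bar r_i)$ is only $1-O(n^{-1/2})$ near the end. So ``$\prod(1+O(1/n))=1+O(\ell/n)$'' is not the right bookkeeping.

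The repair is to change the profile: $e_\ell$ behaves like a relative error on $\tilde s_\ell$ and need not vanish at saturation. From your recurrence and $\bar s(1-\bar s)\le\tilde s(1-\tilde s)+e$ you get $e_{\ell+1}\le\bigl(2(1-\tilde s_\ell)+\frac{1}{n-1}\bigr)e_\ell+\frac{\tilde s_\ell(1-\tilde s_\ell)}{n-1}$. Take the hypothesis $e_\ell\le\frac{\ell\,\tilde s_\ell}{2(n-1)}P_\ell$ with $P_\ell=\prod_{j<\ell}\bigl(1+\frac{1}{2(n-1)(1-\tilde s_j)}\bigr)$. This propagates, because the step reduces to $2(1-\tilde s_\ell)\le 2-\tilde s_\ell$, which holds on all of $[0,1]$, so no splitting is needed. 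Since $1-\tilde s_{\ell-1}=\sqrt{1-\tilde s_\ell}\ge n^{-1/2}$ and the earlier factors shrink doubly exponentially, $P_\ell=1+O(n^{-1/2})$, which gives the lemma. This is equivalent to the paper's route. The paper drops the $-e_\ell^2$ term, unrolls the same linear recurrence, and sums it explicitly using the closed form $1-\tilde s_j=(1-S_0/n)^{2^j}$.
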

\begin{proof}
    See~\cref{pf:lem:finite_size_error}.
\end{proof}

In particular for the depths $\ell \in O(\log n)$ of interest, this error falls as $\widetilde{O}(1/n)$, where $\widetilde{O}$ hides logarithmic factors. On its own, this is not quite enough to argue that the ``hitting times" for $\bar{s}$ and $\tilde{s}$ are the same, since the required accuracy also falls as $1/n$. However, 
\begin{corollary} \label{cor:dbar_equals_dtilde}
    In the setting of~\cref{lem:finite_size_error}, let $\bar{d}^*$ be the $1-1/n$ hitting time of $\bar{s}$, and $\tilde{d}^*$ that for $\tilde{s}$. Then there exists some $N \in \bb{Z}_+$ such that for all $n \geq N$, $\bar{d}^* = \tilde{d}^*$ or $\bar{d}^* = \tilde{d}^* - 1$.
\end{corollary}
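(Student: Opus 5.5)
Two facts drive the argument: the ordering $\bar{s}_\ell \geq \tilde{s}_\ell$ from \cref{lem:tilde_s_solution}, and the $O(\ell/n)$ closeness from \cref{lem:finite_size_error}. Together with the doubly-exponential convergence of $\tilde{s}$ to $1$, they confine $\bar{d}^*$ to a window of width one.

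\textbf{Upper bound $\bar{d}^* \leq \tilde{d}^*$.} By \cref{lem:tilde_s_solution}, $\bar{s}_\ell \geq \tilde{s}_\ell$ for every $\ell$ before $\bar{s}$ first reaches $1$. If $\bar{s}$ reaches $1 - 1/n$ before step $\tilde{d}^*$, we are done. Otherwise $\bar{s}_{\tilde{d}^*} \geq \tilde{s}_{\tilde{d}^*} \geq 1 - 1/n$. Either way $\bar{d}^* \leq \tilde{d}^*$.

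\textbf{Lower bound $\bar{d}^* \geq \tilde{d}^* - 1$.} It suffices to show $\bar{s}_{\tilde{d}^*-2} < 1 - 1/n$ for large $n$; monotonicity of $\bar{s}$ then gives $\bar{s}_\ell < 1 - 1/n$ for all $\ell \leq \tilde{d}^* - 2$. Set $m = \tilde{d}^* - 2$. Since $\tilde{d}^*$ is the first hitting index, $\tilde{s}_{m+1} < 1 - 1/n$, so $\tilde{r}_{m+1} = \tilde{r}_m^2 > 1/n$ and hence $\tilde{r}_m > n^{-1/2}$. Thus $\tilde{s}_m < 1 - n^{-1/2}$, which leaves a gap of order $n^{-1/2}$ below $1 - 1/n$. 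Applying \cref{lem:finite_size_error} at index $m \leq \tilde{d}^* = O(\log n)$ (the hypothesis $\tilde{s}_m \leq 1 - 1/n$ holds),
\begin{equation*}
    \bar{s}_m \leq \tilde{s}_m + \frac{m}{2n} + O\left(\frac{m}{n^{3/2}}\right) < 1 - n^{-1/2} + O\left(\frac{\log n}{n}\right).
\end{equation*}
The right-hand side is below $1 - 1/n$ once $n^{-1/2} - 1/n$ exceeds the $O(\log n / n)$ error. That holds for all $n \geq N$ with $N$ fixed by the implicit constants in \cref{lem:finite_size_error} and in the estimate \eqref{eq:lightcone_transition_asymptotics}, $\tilde{d}^* = \log_2 n + O(\log\log n)$.

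\textbf{Main obstacle.} The delicate point is the resolution of \cref{lem:finite_size_error}. Its error $\widetilde{O}(1/n)$ is comparable to the target gap $1/n$ at the last step, so it cannot decide between $\tilde{d}^*-1$ and $\tilde{d}^*$. The fix is to step back two indices, where the squaring structure $\tilde{r}_{\ell+1} = \tilde{r}_\ell^2$ widens the gap to $n^{-1/2}$. This is exactly why the conclusion allows both $\bar{d}^* = \tilde{d}^*$ and $\bar{d}^* = \tilde{d}^* - 1$.
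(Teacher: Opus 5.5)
Your proof is correct and is essentially the paper's argument. The upper bound $\bar{d}^* \le \tilde{d}^*$ comes from the ordering in \cref{lem:tilde_s_solution}, and the lower bound combines the $O(\log n/n)$ closeness of \cref{lem:finite_size_error} with the squaring $\tilde{r}_{\ell+1} = \tilde{r}_\ell^2$. The only difference is direction: the paper works forward from $\bar{d}^*$, squaring $\tilde{r}_{\bar{d}^*} = O(\log n/n)$ once to get below $1/n$, while you run the contrapositive backward from $\tilde{d}^*$, taking a square root of $1/n$ to obtain an $n^{-1/2}$ gap; a small bonus of your version is that the lemma's hypothesis $\tilde{s}_m \le 1-1/n$ holds automatically at your index $m$.
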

\begin{proof}
    From \cref{lem:tilde_s_solution} we have $\bar{s} > \tilde{s}$ and hence $d^* \leq \tilde{d}^*$. On the other hand, consider the value of $\tilde{s}(\bar{d}^*)$, where by definition $\bar{s}(\bar{d}^*) \geq 1 - 1/n$ (we've changed indexing convention temporarily for legibility). By~\cref{lem:finite_size_error}, we then have $\tilde{s}(\bar{d}^*) = 1 - \tilde{r}(\bar{d}^*)$ with $\tilde{r}(\bar{d}^*) \in O(\log n/n)$. Hence, in the next step,
    \begin{equation}
        \Tilde{s}(\bar{d}^*+1) = (1- \tilde{r}(d^*))(1+\tilde{r}(\bar{d}^*)) = 1 - \tilde{r}(\bar{d}^*)^2 = 1 - O\left(\frac{\log^2 n}{n^2}\right).
    \end{equation}
    For some sufficiently large $n$, the big-$O$ term is bounded by $1/n$. By definition of $\tilde{d}^*$ being the \emph{smallest} such index, $\tilde{d}^* \leq d^*+1$. Altogether, we've shown $d^* \leq \tilde{d}^* \leq d^*+1$, for such $n$, completing the proof.
\end{proof}
\noindent Simple numerical simulations of $\tilde{s}$ and $\bar{s}$ suggest that even for small $n$, the case $\bar{d}^* = \tilde{d}^*$ is the typical one, and may potentially occur in all instances. Regardless, we may identify $\bar{d}^*$ with the asymptotic expression of Eq.~\eqref{eq:lightcone_transition_asymptotics}, within a factor of 1.

We now turn to analysis of $s_\ell \coloneqq S_\ell/n$. Recall that $\Bar{s}_\ell$ be the ``mean-evolved" process of Eq.~\eqref{eq:little_s_bar}. Suppose $s_0 = \bar{s}_0 = S_0/n$ with given $S_0 \in [n]$. Let $d^*$ be the $S = n$ hitting time, a stochastic variable which, as usual, we anticipate concentrates to $\tilde{d}^*$. Our first claim, to be justified without full rigor, is that
\begin{equation}
    \abs{s_\ell - \bar{s}_\ell} \leq \widetilde{O}\left(1/\sqrt{n}\right)
\end{equation}
for every $\ell \in [d^*]$ a.a.s., as law of large numbers suggest. Towards this we begin with a lemma giving a bound on the error at each step in terms of the previous.
\begin{lemma} \label{lem:random_err_update_rule}
    Let $\delta_{\ell} \coloneqq s_\ell - \bar{s}_\ell$. With probability at least $1- \epsilon$,
    \begin{equation} \label{eq:final_delta_before_linearize}
        \abs{\delta_{\ell+1}} \leq \left(2(1-\bar{s}_\ell) + \frac{1}{n-1}\right) \abs{\delta_\ell} + 2\left(1 + O(1/n)\right) \abs{\delta_\ell}^2 + 2 \sqrt{\frac{\ell}{\epsilon n}} \bar{s}_\ell(1-\bar{s}_\ell)
    \end{equation}
    for sufficiently large $n$.
\end{lemma}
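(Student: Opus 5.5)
We split the one-step change of $\delta$ into a deterministic part and a fluctuation part. By Eq.~\eqref{eq:causal_size_update}, $s_{\ell+1} = s_\ell + C(S_\ell)/n$. We write $C(S_\ell) = \bb{E}C(S_\ell) + \xi_\ell$, where $\xi_\ell$ is the zero-mean fluctuation conditional on $S_\ell$. Let $f(x) \coloneqq x\bigl(1 + \frac{1-x}{1-1/n}\bigr)$ be the map in Eq.~\eqref{eq:little_s_bar}. By \cref{lem:crossings_statistics}, $s_\ell + \bb{E}C(S_\ell)/n = f(s_\ell)$ exactly, so
\begin{equation*}
    \delta_{\ell+1} = \bigl(f(s_\ell) - f(\bar{s}_\ell)\bigr) + \xi_\ell/n .
\end{equation*}
The first term is handled exactly, since $f$ is quadratic. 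Expanding $f(\bar{s}_\ell + \delta_\ell) - f(\bar{s}_\ell)$ gives
\begin{equation*}
    f'(\bar{s}_\ell)\,\delta_\ell - \frac{\delta_\ell^2}{1-1/n}, \qquad f'(x) = 1 + \frac{1-2x}{1-1/n}.
\end{equation*}
The coefficient satisfies $\abs{f'(\bar{s}_\ell)} \leq 2(1-\bar{s}_\ell) + O(1/n)$. Writing $1 - 2x = 2(1-x) - 1$ and tracking the $1/(n-1)$ correction should reproduce the stated $\frac{1}{n-1}$ term, up to checking signs. We may need to use $\bar{s}_\ell \geq \tilde{s}_\ell$ from \cref{lem:tilde_s_solution} to confirm that $f'(\bar s_\ell)$ is nonnegative in the relevant regime, or otherwise take absolute values crudely. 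The quadratic coefficient $1/(1-1/n) = 1 + O(1/n)$ sits comfortably under the claimed $2(1+O(1/n))$. The factor 2 leaves slack, which we reserve for absorbing the cross term that arises when the variance is expanded around $\bar{s}_\ell$ rather than $s_\ell$.

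The fluctuation term is controlled by Chebyshev's inequality, which is the rigorous tool the paper commits to. Conditional on $S_\ell$, \cref{lem:crossings_statistics} gives $\Var(C(S_\ell)) \leq 2 S_\ell(n-S_\ell)\cdot \frac{S_\ell (n-S_\ell)}{n^2}(1+O(1/n))$. Hence $\Var(\xi_\ell/n) \lesssim 2 s_\ell^2(1-s_\ell)^2/n$, so with probability at least $1-\epsilon$,
\begin{equation*}
    \abs{\xi_\ell}/n \leq \sqrt{2/(\epsilon n)}\; s_\ell(1-s_\ell)(1+O(1/n)).
\end{equation*}
Next we replace $s_\ell(1-s_\ell)$ by $\bar{s}_\ell(1-\bar{s}_\ell)$. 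The difference is at most $\abs{\delta_\ell}(1 + \abs{\delta_\ell})$. Multiplied by $\sqrt{2/(\epsilon n)}$, this yields corrections of order $\abs{\delta_\ell}/\sqrt{\epsilon n}$ and $\abs{\delta_\ell}^2/\sqrt{\epsilon n}$. We fold these into the linear and quadratic terms for $n$ sufficiently large relative to $1/\epsilon$; this is where ``sufficiently large $n$'' enters.

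The factor $\sqrt{\ell}$ in the statement, which exceeds $\sqrt 2$ for $\ell\ge 2$, leaves room to spare. We read it as the allowance for a union bound, or as the paper's intended convention of failure probability $\epsilon/\ell$ per step when these estimates are chained over $\ell$ layers. We will simply note that the bound with $\sqrt{2}$ implies the stated one for $\ell \geq 2$. The case $\ell \leq 1$ is trivial, because $\delta_0 = 0$ and the variance at $\ell=0$ is tiny, although $\ell = 0$ with the $\sqrt{\ell}$ factor needs separate care and likely a convention.

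The main obstacle is not the algebra but the conditioning. Chebyshev gives a bound conditional on $S_\ell$, while the variance depends on the random $s_\ell$, not on $\bar{s}_\ell$. The plan is to apply Chebyshev conditionally and then average: the event has conditional probability at least $1-\epsilon$ for every value of $S_\ell$, hence unconditional probability at least $1-\epsilon$. After that, the only delicate point is bookkeeping the sign of $f'(\bar s_\ell)$ to obtain exactly the linear coefficient $2(1-\bar s_\ell) + \frac{1}{n-1}$.
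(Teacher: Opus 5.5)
Your skeleton matches the paper's. You expand the quadratic mean map exactly about $\bar s_\ell$, take absolute values, and bound the fluctuation by Chebyshev conditionally on $S_\ell$. The sign of $f'(\bar s_\ell)$ is a non-issue: the triangle inequality already gives $\abs{f'(\bar s_\ell)}\le 2(1-\bar s_\ell)+\frac{\abs{1-2\bar s_\ell}}{n-1}$, which is how the paper does it.

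The gap is in the step you add, replacing $s_\ell(1-s_\ell)$ by $\bar s_\ell(1-\bar s_\ell)$. Since $s(1-s)-\bar s(1-\bar s)=(1-2\bar s)\delta-\delta^2$, this costs $K\abs{1-2\bar s_\ell}\,\abs{\delta_\ell}$ with $K=\sqrt{2/(\epsilon n)}\,(1+O(1/n))$. That term cannot be folded into the linear term. The target coefficient $2(1-\bar s_\ell)+\frac{1}{n-1}$ exceeds $\abs{f'(\bar s_\ell)}$ by only $O(1/n)$, and the correction exceeds that slack unless $\abs{1-2\bar s_\ell}=O(\sqrt{\epsilon/n})$. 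Taking $n$ large widens this mismatch rather than closing it. The spare quadratic coefficient helps only when $\abs{\delta_\ell}\gtrsim K$. The spare noise allowance $(2\sqrt{\ell}-\sqrt2)\,\bar s_\ell(1-\bar s_\ell)/\sqrt{\epsilon n}$ helps only where $\bar s_\ell(1-\bar s_\ell)\gtrsim 1/\sqrt{\epsilon\ell n}$. Together they do not cover the start or the end of the growth.

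This is not just bookkeeping. Conditionally on $S_\ell$, the stated bound can fail with probability above $\epsilon$, so ``conditional probability at least $1-\epsilon$ for every value of $S_\ell$'' is false. Here is an instance:

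\begin{itemize}
\item Take a step with $1<n(1-\bar s_\ell)\le n^{1/4}$ and $\ell=O(\log n)$, and condition on $S_\ell=n-M$, with $M$ the even integer nearest $\sqrt n$.
\item After the next layer, the uncovered qubits are exactly those in the $N'$ gates internal to the $M$ uncovered ones. Hence $n\abs{\delta_{\ell+1}}=2N'+O(n^{-1/2})$, with $N'\to\mathrm{Poisson}(1/2)$.
\item Meanwhile $n$ times the claimed right-hand side is $2+o(1)$, dominated by $2n\delta_\ell^2$.
\item So the bound fails whenever $N'\ge 2$, which has probability about $1-\tfrac32 e^{-1/2}\approx 0.09$.
\end{itemize}

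A one-step conditional Chebyshev argument actually delivers the bound with $s_\ell(1-s_\ell)$ in the noise term and quadratic coefficient $1+O(1/n)$. That is exactly where the paper's computation ends; its last sentence then passes to $\bar s_\ell(1-\bar s_\ell)$ without justification. Reaching $\bar s_\ell$ needs a priori control of $\delta_\ell$ from earlier steps, or a weaker statement.

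Separately, $\ell=0$ is not trivial. The right-hand side is then $0$, but $\delta_1=(C(S_0)-\bb{E}C(S_0))/n\neq 0$ whenever $\bb{E}C(S_0)$ is not an integer (e.g.\ $S_0=2$). So the statement needs $\ell\ge 1$.
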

\begin{proof}
    \Cref{pf:lem:random_err_update_rule}.
\end{proof}

The quadratic term in the recurrence Eq.~\eqref{eq:final_delta_before_linearize} presents analytical difficulties. One way we could try to proceed is linearizing via the simple bound $\abs{\delta_\ell}^2 \leq \abs{\delta_\ell}$. Sadly, this fails to produce a bound that even converges in $n$. Making progress without additional assumptions seems difficult. To continue, we instead simply drop the quadratic term. This is plausibly accurate: if $\delta$ stays small under the dynamics of the linear part, the quadratic piece should always be subdominant. Numerical calculations provided below will support the validity of this approximation.

Proceeding on this basis, let $\delta_\ell^{(0)} \geq 0$ be a sequence with initial value $\delta_0^{(0)} = 0$ and satisfying the linearized recurrence inequality
\begin{equation} \label{eq:linearized_recurrence}
    \delta^{(0)}_{\ell+1} \leq a_\ell \delta^{(0)}_\ell + b_\ell,
\end{equation}
where 
\begin{equation}
    a_\ell = 2(1-\bar{s}_\ell) + \frac{1}{n-1}, \qquad b_\ell = 2 \sqrt{\frac{\ell}{\epsilon n}} \bar{s}_\ell (1- \bar{s}_\ell)
\end{equation}
are positive sequences of coefficients corresponding to the linear parts of Eq.~\eqref{eq:final_delta_before_linearize}.

\begin{lemma} \label{lem:linearized_error_bound}
    Any such sequence $\delta_\ell^{(0)}$ defined above satisfies
    \begin{equation}
        \delta^{(0)}_\ell \lesssim \sqrt{\frac{\ell^3}{\epsilon n}}(1 + O(1/\sqrt{n})).
    \end{equation}
\end{lemma}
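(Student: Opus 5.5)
Since the recurrence \eqref{eq:linearized_recurrence} is linear with positive coefficients and $\delta^{(0)}_0 = 0$, I unroll it. By induction,
\begin{equation*}
    \delta^{(0)}_\ell \leq \sum_{j=0}^{\ell-1} b_j \prod_{i=j+1}^{\ell-1} a_i .
\end{equation*}
The argument then reduces to bounding the products $\prod_{i=j+1}^{\ell-1} a_i$ and the terms $b_j$ along the deterministic trajectory $\bar{s}$. Neither should grow faster than the claimed $\sqrt{\ell^3/(\epsilon n)}$.

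The key step is to relate the product of the $a_i$ back to $\bar{s}$ itself. Write $a_i = 2(1-\bar{s}_i)(1 + O(1/n))$ for $\bar{s}_i \leq 1-1/n$. Over the relevant range $\ell \leq \tilde{d}^* = O(\log n)$, the factors $(1+O(1/n))^\ell$ contribute only $1+O(\log n / n)$, which is absorbed into the stated correction. For the main part, I use the update rule \eqref{eq:little_s_bar}. It gives $\bar{s}_{i+1}/\bar{s}_i = 1 + (1-\bar{s}_i)/(1-1/n)$, and since $2(1-\bar s_i) \le 2$, the crude comparison I would try first is
\begin{equation*}
    \prod_{i=j+1}^{\ell-1} 2(1-\bar{s}_i) \leq \frac{\bar{s}_\ell\,(1-\bar s_{\ell})}{\bar{s}_{j+1}(1-\bar s_{j+1})}\cdot(1+O(1/n))^{\ell}.
\end{equation*}
This is motivated by the solvable proxy of \Cref{lem:tilde_s_solution}. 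There, $\tilde r = 1-\tilde s$ satisfies $\tilde r_{i+1}=\tilde r_i^2$, so $f(\tilde s)=\tilde s(1-\tilde s)$ obeys $f(\tilde s_{i+1})/f(\tilde s_i) = (2-\tilde s_i)\tilde r_i \approx 2(1-\tilde s_i)$ up to the factor $(2-\tilde s_i)/2 \le 1$, so the product telescopes. I would prove the inequality for $\bar s$ directly by checking the one-step ratio $\bar s_{i+1}(1-\bar s_{i+1})/(\bar s_i(1-\bar s_i))$ against $a_i$, using \Cref{lem:finite_size_error} to control the $O(1/n)$ discrepancy. Each summand then satisfies
\begin{equation*}
    b_j\prod_{i>j} a_i \lesssim 2\sqrt{\frac{j}{\epsilon n}}\,\frac{\bar s_j(1-\bar s_j)}{\bar s_{j+1}(1-\bar s_{j+1})}\,\bar s_\ell(1-\bar s_\ell).
\end{equation*}
The ratio $f(\bar s_j)/f(\bar s_{j+1})$ is $O(1)$: it is about $1/2$ in the early phase and about $1/(2-\bar s)\le 1$ later. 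Since also $\bar s_\ell(1-\bar s_\ell)\le 1/4$, each summand is $O(\sqrt{\ell/(\epsilon n)})$.

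Summing the $\ell$ terms gives $\delta^{(0)}_\ell \lesssim \ell\sqrt{\ell/(\epsilon n)} = \sqrt{\ell^3/(\epsilon n)}$, with multiplicative corrections $1+O(1/\sqrt n)$ coming from \Cref{lem:finite_size_error} and the $1/(n-1)$ terms. The main obstacle is making the telescoping comparison rigorous: $a_i$ involves $2(1-\bar s_i)$ rather than the exact derivative $2(1-\bar s_i)\cdot\frac{2-\bar s_i}{2}$-type factor, so the inequality can fail by a bounded factor per step. Compounded over $O(\log n)$ steps, that could produce polynomial losses. If the direct comparison fails, I would fall back to splitting the trajectory. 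In the growth phase ($\bar s_i \le 1/2$), $a_i\le 2$ and $\bar s$ roughly doubles, so the product is bounded by $\bar s_\ell/\bar s_{j}$. In the saturation phase ($\bar s_i>1/2$), $a_i \le 1+O(1/n)$, so the product is $O(1)$. Either way each summand stays bounded by $O(\sqrt{\ell/(\epsilon n)})$ times a constant, which is all the $\lesssim$ needs.
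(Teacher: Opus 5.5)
\textbf{The telescoping comparison fails, but your fallback can be made to work.}

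As you suspected, the displayed telescoping comparison fails, and it fails in the wrong direction. For the proxy sequence one has exactly $f(\tilde s_{i+1})/f(\tilde s_i)=(2-\tilde s_i)(1-\tilde s_i)\le 2(1-\tilde s_i)$. Hence
\[
\prod_{i=j+1}^{\ell-1}2(1-\tilde s_i)=\frac{f(\tilde s_\ell)}{f(\tilde s_{j+1})}\prod_{i=j+1}^{\ell-1}\frac{2}{2-\tilde s_i},
\]
and the last product is not $(1+O(1/n))^{\ell}$. For $j=0$ it equals $2^{\ell-1}\tilde s_1/\tilde s_\ell$, which is $\Theta(\log n)$ for $\ell$ near $\tilde d^*$. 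The loss comes from the roughly $\log_2\log_2 n$ saturation steps, so it is logarithmic rather than polynomial. Either way, the one-step ratio check you propose cannot succeed.

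Two other steps also need correcting.
\begin{itemize}
\item In the saturation phase, $f(\bar s_j)/f(\bar s_{j+1})\approx 1/[(2-\bar s_j)(1-\bar s_j)]$. This is unbounded, not $\approx 1/(2-\bar s_j)\le 1$. Only its product with $f(\bar s_\ell)$ is bounded, and your estimate $f(\bar s_\ell)\le 1/4$ throws away exactly the factor $1-\bar s_\ell$ that compensates.
\item The relative correction in $a_i$ is $1/(2(n-1)(1-\bar s_i))$, which is not uniformly $O(1/n)$. Near saturation it is what produces the paper's $1+O(1/\sqrt n)$ factor.
\end{itemize}

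Your fallback is the right idea, but ``$a_i\le 2$ and $\bar s$ roughly doubles'' has to be made quantitative. Pairing $a_i\le 2$ with only $\bar s_{i+1}/\bar s_i\ge 3/2$ would lose $(4/3)^{\Theta(\log n)}$, a power of $n$. What makes it work, with no phase split at all, is the exact inequality
\[
\frac{\bar s_{i+1}}{\bar s_i}-a_i=1+\frac{n(1-\bar s_i)}{n-1}-2(1-\bar s_i)-\frac{1}{n-1}=\frac{(n-2)\,\bar s_i}{n-1}\ge 0 ,
\]
which holds for every $i$. It telescopes to $\prod_{i=j+1}^{\ell-1}a_i\le \bar s_\ell/\bar s_{j+1}$. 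Combining this with $\bar s_{j+1}\ge \bar s_j(2-\bar s_j)$ and $b_j\le 2\sqrt{\ell/(\epsilon n)}\,\bar s_j(1-\bar s_j)$ gives
\[
b_j\prod_{i=j+1}^{\ell-1}a_i\le 2\sqrt{\frac{\ell}{\epsilon n}}\;\frac{1-\bar s_j}{2-\bar s_j}\;\bar s_\ell\le \sqrt{\frac{\ell}{\epsilon n}} .
\]
Summing over the $\ell$ terms gives $\delta^{(0)}_\ell\le\sqrt{\ell^3/(\epsilon n)}$ outright, valid while $0<\bar s_i\le 1$, with no $1+O(1/\sqrt n)$ factor.

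The paper reaches the same per-term bound by a different route.
\begin{itemize}
\item It writes $a_j=2(1-\bar s_j)\bigl(1+\frac{1}{2(n-1)(1-\bar s_j)}\bigr)$.
\item It bounds the resulting correction product $P$ by $1+O(1/\sqrt n)$, as in the proof of Lemma~\ref{lem:finite_size_error}.
\item It transfers the remaining sum from $\bar s$ to $\tilde s$.
\item It uses the closed form $\prod_{j=i}^{\ell-1}(1-\tilde s_j)=(1-S_0/n)^{2^\ell-2^i}$ to show each of the $\ell$ terms is at most $2^{-\ell}$.
\end{itemize}
The exact ratio inequality above bypasses the closed form, the $\bar s$-to-$\tilde s$ transfer, and the analysis of $P$.
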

\begin{proof}
    \Cref{pf:lem:linearized_error_bound}
\end{proof}
\noindent Thus, for $\ell \in \polylog n$, and any fixed $\epsilon$, the linearized error sequence shrinks to zero asymptotically. While this does not conver the actual error sequence in a strict sense, below we perform direct numerical simulations and show that the lightcone variable $s$ indeed follows a roughly $O(1/\sqrt{n})$ convergence to the average-evolved variable $\bar{s}$.

Finally, we conclude with a theorem that, predicated on our derived error bounds, characterizes the lightcone growth asymptotically.
\begin{theorem} \label{thm:main_lightcone_coverage}
    In the setting and notation introduced, suppose that, 
    \begin{equation}
        \abs{s_\ell - \bar{s}_\ell} \in \widetilde{O}(1/\sqrt{n}). \qquad (a.a.s.)
    \end{equation}
    Then $\abs{d^* - \bar{d}^*} \leq 2$ a.a.s., and as a consequence,
    \begin{equation*}
        d^* = \log_2 n + \log_2 \log_2 n - \log_2 S_0 + O(1). \qquad (a.a.s)
    \end{equation*}
\end{theorem}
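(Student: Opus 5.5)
The plan is to sandwich the stochastic hitting time $d^*$ between hitting times of the deterministic sequence $\bar{s}$. The hypothesis says that a.a.s. $|\delta_\ell| \le \eta_n$ for every $\ell$ in the relevant range $\ell \lesssim \bar{d}^*+2 = O(\log n)$, where $\delta_\ell = s_\ell - \bar{s}_\ell$ and $\eta_n = \widetilde{O}(1/\sqrt{n})$. The difficulty is that this additive error is far too coarse to resolve the threshold $1-1/n$ directly. The gap $1-\bar{s}_\ell$ at that scale is $O(1/n)$, which is much smaller than $\eta_n$. So I will use the additive bound only to reach a point where the gap is $\widetilde{O}(1/\sqrt{n})$. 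From there I will finish with a separate argument exploiting the squaring of the gap $r_\ell = 1-s_\ell$, which is the same mechanism as in \cref{cor:dbar_equals_dtilde}.

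\textbf{Lower bound, $d^* \ge \bar{d}^*-2$.} Consider $\ell \le \bar{d}^*-3$. Since $\bar{s}$ has gap evolving essentially as $\bar{r}_{\ell+1}\approx \bar{r}_\ell^2$ (by \cref{lem:tilde_s_solution} and \cref{lem:finite_size_error}), and $\bar{r}_{\bar{d}^*-1} > 1/n$, we get $\bar{r}_{\bar{d}^*-2} \gtrsim n^{-1/2}$ and $\bar{r}_{\bar{d}^*-3} \gtrsim n^{-1/4}$. For the latter, $\bar{r}$ dominates $\eta_n$ by a polynomial factor, so a.a.s. $r_\ell \ge \bar{r}_\ell - \eta_n > 0$. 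Hence $S_\ell < n$ for all $\ell \le \bar{d}^*-3$. The polylog factors hidden in $\eta_n$ are absorbed because $n^{-1/4}\gg \widetilde{O}(n^{-1/2})$.

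\textbf{Upper bound, $d^* \le \bar{d}^*+2$.} At $\ell_0=\bar{d}^*-1$ we have $\bar{r}_{\ell_0} = O(\sqrt{(\log n)/n})$ or smaller. Adding the hypothesis gives $r_{\ell_0} \le \widetilde{O}(n^{-1/2})$ a.a.s., i.e. the number of uncovered qubits is $U_{\ell_0} = n - S_{\ell_0} = \widetilde{O}(\sqrt{n})$. Now I argue directly on the random matching. Given $U$ uncovered qubits, each remains uncovered only if it is matched to another uncovered qubit, which has probability $(U-1)/(n-1)$. So $\mathbb{E}[U_{\ell+1}\mid U_\ell] \le U_\ell^2/n$. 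Then $\mathbb{E}U_{\ell_0+1} = \widetilde{O}(1)$, and one more layer gives $\mathbb{E}[U_{\ell_0+2}\mid U_{\ell_0+1}] \le U_{\ell_0+1}^2/n$. By Markov's inequality, $\Pr(U_{\ell_0+2}\ge 1) \le \widetilde{O}(1)/n \to 0$. To be careful, I would first use Markov at step $\ell_0+1$ to get $U_{\ell_0+1} \le \log^c n$ a.a.s., then conditionally bound the next step. This yields $S_{\bar{d}^*+1} = n$ a.a.s., which is within the claimed window of $2$.

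Finally, combining $|d^*-\bar{d}^*|\le 2$ with \cref{cor:dbar_equals_dtilde} and Eq.~\eqref{eq:lightcone_transition_asymptotics} gives the asymptotic formula. The ceiling, the $-\log_2\log_2 e$ term and the $\pm 3$ shifts all go into the $O(1)$.

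\textbf{Main obstacle.} I expect the hardest step to be the interface at $\ell_0$. The hypothesis must hold uniformly at the random time near $\bar{d}^*$, and the gap estimates for $\bar{r}$ near the threshold must be quantified. I will handle the latter with \cref{lem:finite_size_error}, using $|\bar{s}-\tilde{s}| = O(\log n/n)$, which is negligible against $n^{-1/2}$. The union over $O(\log n)$ layers is harmless since the hypothesis is stated a.a.s. for the whole range.
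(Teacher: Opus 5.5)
Your proposal is correct and follows the same skeleton as the paper's proof. The additive hypothesis is used only to bring the gap $1-s$ down to $\widetilde{O}(n^{-1/2})$ near $\bar{d}^*$, and two further layers of gap squaring then finish the argument: stochastically for $s$ on one side, deterministically for $\bar{s}$ on the other. The differences are in the tools. For the stochastic steps, the paper uses Chebyshev with the variance from \Cref{lem:crossings_statistics}, while you use the exact first moment $\bb{E}[U_{\ell+1}\mid U_\ell]=U_\ell(U_\ell-1)/(n-1)$ and Markov on the integer count; starting this at $\bar{d}^*-1$ even gives $d^*\le\bar{d}^*+1$. For the lower bound, the paper iterates $\bar{s}$ forward from the random time $d^*$, while you argue contrapositively at the fixed layer $\bar{d}^*-3$, so you need the hypothesis only at deterministic layers.
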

\begin{proof}
    For brevity, we suppress the a.a.s. qualifiers in what follows. The claimed asymptotic scaling of $d^*$ follows from Eq.~\eqref{eq:lightcone_transition_asymptotics} and~\Cref{cor:dbar_equals_dtilde} provided that $\abs{d^* - \bar{d}^*} \leq 2$ is proved. If $d^* = \bar{d}^*$, this is trivial. Suppose $d^* > \bar{d}^*$. Then, by assumption, (raising the subscript $s_j \equiv s(j)$ for legibility) $s(\bar{d}^*) = 1 - \widetilde{O}(1/\sqrt{n})$. Let $r(\ell) = 1-s(\ell)$ so that $r(\bar{d}^*) \in \widetilde{O}(1/\sqrt{n})$. By Chebyshev's inequality,
    \begin{equation}
        r(\bar{d}^*+1) = r(\bar{d}^*) - \bb{E}c(s(\bar{d}^*)) + \widetilde{O}(1/n).
    \end{equation}
    But $\bb{E}c(s) = s(1-s)/(1 - o(1)) \geq s(1-s)$. Thus,
    \begin{align*}
        r(\bar{d}^*+1) &\leq r(\bar{d}^*) - r(\bar{d}^*) + r(\bar{d}^*)^2 + \widetilde{O}(1/n) \\
        &= \widetilde{O}(1/n).
    \end{align*}
    Performing the same iteration again gives $r(\bar{d}^*+2) \in \widetilde{O}(1/n^2)$. Because $s(\bar{d}^*+2) = S(\bar{d}^*+2)/n$, where $S$ is integer, we must conclude that $S(\bar{d}^* + 2)$ is in fact $1$, for sufficiently large $n$. 

    The case that $d^* < \bar{d}^*$ is handled in an entirely symmetric fashion: there we update $\bar{s}(d^*)$ at most twice more and find 
    \begin{equation*}
        \bar{s}(d^* + 2) = 1 - \widetilde{O}(1/n^2).
    \end{equation*}
    Taken together, we see in all cases that the hitting times for both variables are, asymptotically, within two steps of one another.
\end{proof}

\subsubsection*{Implications for circuit learning}

Having characterized lightcone growth in the circuit ensemble, we now turn to the existence of pivot gates, a prerequisite for our local inversion scheme. Roughly speaking, we will confirm the intuitive idea that, at large $n$, the learnable regime is exactly where the lightcone size $S_\ell$ remains below $n$.  

We begin by observing that, for a given qubit $q$ to admit some pivot gate, its lightcone must strictly grow across the outer layer. Thus $C(S) \neq 0$ at the outer layer is the requirement for a pivot to exist. Our first result, an easy consequence of~\Cref{thm:crossings_distribution}, is that unless $S = 0$ or $S = n$, this event becomes asymptotically very unlikely.
\begin{corollary} \label{cor:lightcone_grows}
    Let $S, n \in 2 \bb{Z}_+$ be positive even integers, with $S < n$ Then
    \begin{equation*}
        \Pr(C = 0) \leq \frac{1}{n-1}.
    \end{equation*}
\end{corollary}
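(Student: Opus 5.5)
The plan is to read off $\Pr(C=0)$ directly from the probability mass function in \Cref{thm:crossings_distribution} and then bound the resulting closed form by a monotonicity argument in $S$.

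First I will set $C=0$ in \Cref{thm:crossings_distribution}. Since $2^0=0!=1$, this gives
\begin{equation*}
    \Pr(C=0) = \binom{n}{S}^{-1}\frac{(n/2)!}{(S/2)!\,\left(\frac{n-S}{2}\right)!} = \frac{\binom{n/2}{S/2}}{\binom{n}{S}} \eqqcolon f(S).
\end{equation*}
This also has a direct combinatorial reading: no crossings means the pairing restricts to perfect pairings on $[S]$ and on its complement. The function $f$ is symmetric under $S \mapsto n-S$. The admissible range is $S \in \{2,4,\ldots,n-2\}$, since $S$ is even and $0<S<n$. At the endpoint $S=2$,
\begin{equation*}
    f(2) = \frac{n/2}{n(n-1)/2} = \frac{1}{n-1},
\end{equation*}
and by symmetry $f(n-2)=1/(n-1)$ as well. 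So the claimed bound is tight.

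Next I will show that $f$ is maximized at these endpoints by computing the ratio of consecutive terms:
\begin{equation*}
    \frac{f(S+2)}{f(S)} = \frac{(n-S)/2}{(S+2)/2}\cdot\frac{(S+1)(S+2)}{(n-S)(n-S-1)} = \frac{S+1}{n-S-1}.
\end{equation*}
This ratio is at most $1$ exactly when $S \leq n/2-1$. Hence $f$ is nonincreasing on the lower half of the range and nondecreasing on the upper half. On $\{2,\ldots,n-2\}$ it is therefore bounded by $\max\{f(2), f(n-2)\} = 1/(n-1)$, which proves the corollary.

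The main obstacle is simply getting the ratio algebra right. The only subtle point is that the formula of \Cref{thm:crossings_distribution} at $C=0$ requires both $S/2$ and $(n-S)/2$ to be integers; this holds because $S$ and $n$ are both even. The degenerate case $n=2$ has an empty admissible range and needs no argument.
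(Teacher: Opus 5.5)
Your proposal is correct and follows essentially the same route as the paper. The paper also specializes \Cref{thm:crossings_distribution} to $C=0$, shows the ratio $f(S+2)/f(S) = (S+1)/(n-S-1) \le 1$ for $S \le n/2-1$, uses the symmetry $S \leftrightarrow n-S$ to place the maximum at $S=2$ (or $n-2$), and evaluates it to $1/(n-1)$.
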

\begin{proof}
    Using the distribution of $C$ from~\cref{thm:crossings_distribution}, 
    \begin{equation}
        p(0) = \frac{(n/2)!}{n!} \frac{S! (n-S)!}{(S/2)! (\frac{n-S}{2})!}.
    \end{equation}
    Consider the $S$-dependent part
    \begin{align*}
        f(S)\coloneqq \frac{S! (n-S)!}{(S/2)! (\frac{n-S}{2})!}
    \end{align*}
    for even $S$. By reflectional symmetry $S \leftrightarrow n-S$, $f$ is determined by its values at $S \leq n/2$. Moreover, on this subinterval, it is nonincreasing, since
    \begin{align*}
        \frac{f(S+2)}{f(S)} &= \frac{(S+2)(S+1)}{(n-S)(n-S-1)} \times \frac{(S/2)! (\frac{n-S}{2})!}{(S/2 + 1)!(\frac{n-S}{2} - 1)!} \\
        &= \frac{S+1}{n-S -1} \\
        &\leq 1
    \end{align*}
    for all $S \leq n/2 - 1$. Thus $p(0)$ is maximized at the endpoint $S = 2$ (or $n-2$). After some algebra this is seen to be $1/(n-1)$, giving the claimed bound.
\end{proof}
\noindent While the bound is saturated for $S$ or $n-S$ at 2, $p(0)$ drops even more quickly for intermediate $S$. The point being at large $n$, crossings will always be available to a learning protocol under this model, provided full lightcone coverage has not been reached.

However, while the above captures some aspects of learnability, it is more important to understand the likelihood that \emph{every} outer gate $G$ is a pivot, rather than whether \emph{some} pivot exists. Let $C$ be an all-to-all, random, 2-local circuit of depth $d$, and let $G$ be some outer gate, with input qubits $q_1, q_2$. Define $Q^1, Q^2 \subseteq Q$ as the set of qubits whose lightcones contain $q_1$ or $q_2$, respectively, at layer $d-1$.
\begin{equation}
    Q^i \coloneqq \{ q\in Q \;\vert\; L_{d-1}(q) \ni q_i\}, \quad i = 1,2
\end{equation}
Then $G$ is a pivot gate precisely when the symmetric difference $Q^1 \Delta Q^2$ is nonempty. Observe by the definition of this circuit ensemble (\Cref{def:random_layered_architecture}) that $Q^1, Q^2$ are identically distributed random subsets of $Q$, and by symmetry, all subsets of the same size are equally likely. Thus, the joint distribution of $(Q^1, Q^2)$ can be captured by the size variables $(S^1, S^2)$, where $S^i = \abs{Q^i}$. A simple argument using indicator variables also shows that $\bb{E} S^i = S_{d-1}$, the lightcone size prior to the last layer. 

Let $S_\Delta \coloneqq \abs{Q^1\Delta Q^2}$. Then
\begin{equation}
    \Pr(S_\Delta = s) = \sum_{s^1, s^2 \in [n]} \Pr(S^1 = s^1, S^2 = s^2) \Pr(S_\Delta = s \vert S^1 = s^1, S^2 = s^2).
\end{equation}
Let us now make the approximation that $Q^1, Q^2$ are independent; such should be expected as the size of the circuit increases. From the relation
\begin{equation} \label{eq:symmdiff_to_int}
    \abs{Q^1 \Delta Q^2} = \abs{Q^1} + \abs{Q^2} - 2 \abs{Q^1 \cap Q^2}
\end{equation}
for general sets, it suffices to compute $S_\cap \coloneqq \abs{Q^1 \cap Q^2}$ for fixed $S^1, S^2$. The distribution for $S_\cap\vert(S^1, S^2)$ can be computed using sampling without replacement, treating $Q^1$ as a fixed subset and sampling $S^2$ elements of $Q$ independently. This is hypergeometric, with
\begin{equation*}
    \Pr(S_\cap = s \;\vert S^1, S^2) = \frac{\binom{S^1}{s} \binom{n-S^1}{S^2 - s}}{\binom{n}{S^2}}, \qquad \max(0, S^1 + S^2 - n) \leq s \leq \min (S^1, S^2). 
\end{equation*}
Thus, using Eq.~\eqref{eq:symmdiff_to_int} to relate the intersection to the symmetric difference,
\begin{equation} \label{eq:S_delta}
    \Pr(S_\Delta = s) = \sum_{S^1, S^2 \in I_s} \Pr(S^1, S^2) \frac{\binom{S^1}{\frac{S^1 + S^2 - s}{2}} \binom{n-S^1}{\frac{S^2-S^1 + s}{2}}}{\binom{n}{S^2}}.
\end{equation}
where $I_s$ is a subset of $[n]^2$ containing pairs $(S^1, S^2)$ satisfying
\begin{equation*}
    \frac{S^1 + S^2 - s}{2} \in \bb{Z}, \qquad \max(0, S^1 + S^2 - n) \leq \frac{S^1 + S^2 - s}{2} \leq \min(S^1, S^2).
\end{equation*}
To avoid excessive calculations, let us now assume that $S^1$, $S^2$ concentrate around their expectation value $S_{d-1}$. This can be established, similar to before, under the reasonable assumption that lightcones for distinct $q, q'$ are approximately pairwise independent. As such, we will be satisfied to evaluate $S^1 = S^2 = S_{d-1}$, leaving the effects of fluctuations unanalyzed. This reduces the sum to a single evaluation of the hypergeometric term, and gives
\begin{equation}
    \frac{\binom{S_{d-1}}{S_{d-1} - s/2} \binom{n - S_{d-1}}{s/2}}{\binom{n}{S_{d-1}}}.
\end{equation}
In particular, for the event $s = 0$ this equals 
\begin{equation*}
    \Pr(G \text{\;is pivot}) \approx \binom{n}{S_{d-1}}^{-1}.
\end{equation*}
If $S_{d-1} \neq n$, this is seen to fall with $n$, at least linearly if $S_{d-1} = n-1$, but much more rapidly for smaller $S_{d-1}$. Under these approximations, therefore, $G$ is a pivot gate a.a.s. assuming $S_{d-1} \neq n$, and by a union bound, \emph{all} $O(n)$ outer gates will be pivots a.a.s. 

This argument shows a direct correspondence between the depth $d^*$ of full lightcone coverage and the prevalence of pivot gates. With careful reasoning, the approximations of pairwise independence and stability of Eq.~\eqref{eq:S_delta} under concentration can likely be made rigorous.

\subsubsection*{Numerical Experiments} 

We conclude our analysis of random all-to-all circuit ligthcones with supplementary numerics. We begin by simulating the lightcone size variable $S_\ell$ and verifying that the average edge crossing sequence $\bar{S}_\ell$ serves as a good approximation at large $n$. Second, we perform direct simulations of local inversion protocols on samples of all-to-all random circuits, in order to see how forward-backward iterations can enhance the basic single-directional algorithm. 

We first discuss direct simulations of $S_\ell$ and $\bar{S}_\ell$ across various qubit sizes. With $S_0 = \bar{S}_0 = 1$ as initial value, we perform $N = 100$ trials at various qubit counts $n$ (increasing in powers of two), with update rule for $S$ and $\bar{S}$ given by Eq.~\eqref{eq:causal_size_update} and~\eqref{eq:big_s_bar_update}, respectively. We then record the ``max-of-max" error
\begin{equation}
   \max_{j \in [N]} \max_{\ell \in [d]} \abs{S_\ell^{(j)} - \bar{S}_\ell^{(j)}}/n
\end{equation}
as a worst-case upper bound on the relative error. \Cref{subfig:single_lightcone_sim} summarizes the results of these computations. Comparing with a plot of $1/\sqrt{n}$, we see qualitatively similar decay rate of the max-of-max error with respect to $n$. This appears to support the $\widetilde{O}(1/\sqrt{n})$ decay that was argued above, and assumed in~\cref{thm:main_lightcone_coverage}.

Next, we step back and perform direct numerical simulations of lightcones of circuits sampled from the all-to-all ensemble of~\cref{def:random_layered_architecture}, for various $n$. Besides providing a more direct route to verify learnability, these lightcone simulations also allow for  tests of iterative forward-backward learning, which is difficult to assess analytically. 

A more detailed description of these experiments as follows. For each $n$, we sweep through a range of depths $d$. At each value of the parameters $(n,d)$, we sample $N = 1000$ all-to-all circuit layouts, stored as a list of pairs of qubits. We then compress consecutive gates with the same qubit inputs into one, which occur frequently per~\Cref{prop:2_local_repeat}. From these layouts, we can calculate, for every qubit $q$ the forward and backwards lightcones $V(q)$ at the outer layers to identify pivot gates. We \emph{assume} these gates can be successfully removed by some tomographic protocol. We remove them and iterate on the updated circuit until there are no further pivot gates. If the remaining circuit is (not) empty, we count is as (not) learnable.

\Cref{subfig:full_learn_a} provides the results of such simulations for the full iterative learning protocol. Unsurprisingly, as circuit depth increases for fixed qubit count, we observe an eventual drop in the proportion of learnable circuits. As the number of qubits is increased in powers of two, this transition point moves almost linearly to the right, and the transition appears to get sharper. 
\begin{figure}
    \centering
    \begin{subfigure}[]{0.6\textwidth}
        \centering
        \includegraphics[width=\textwidth]{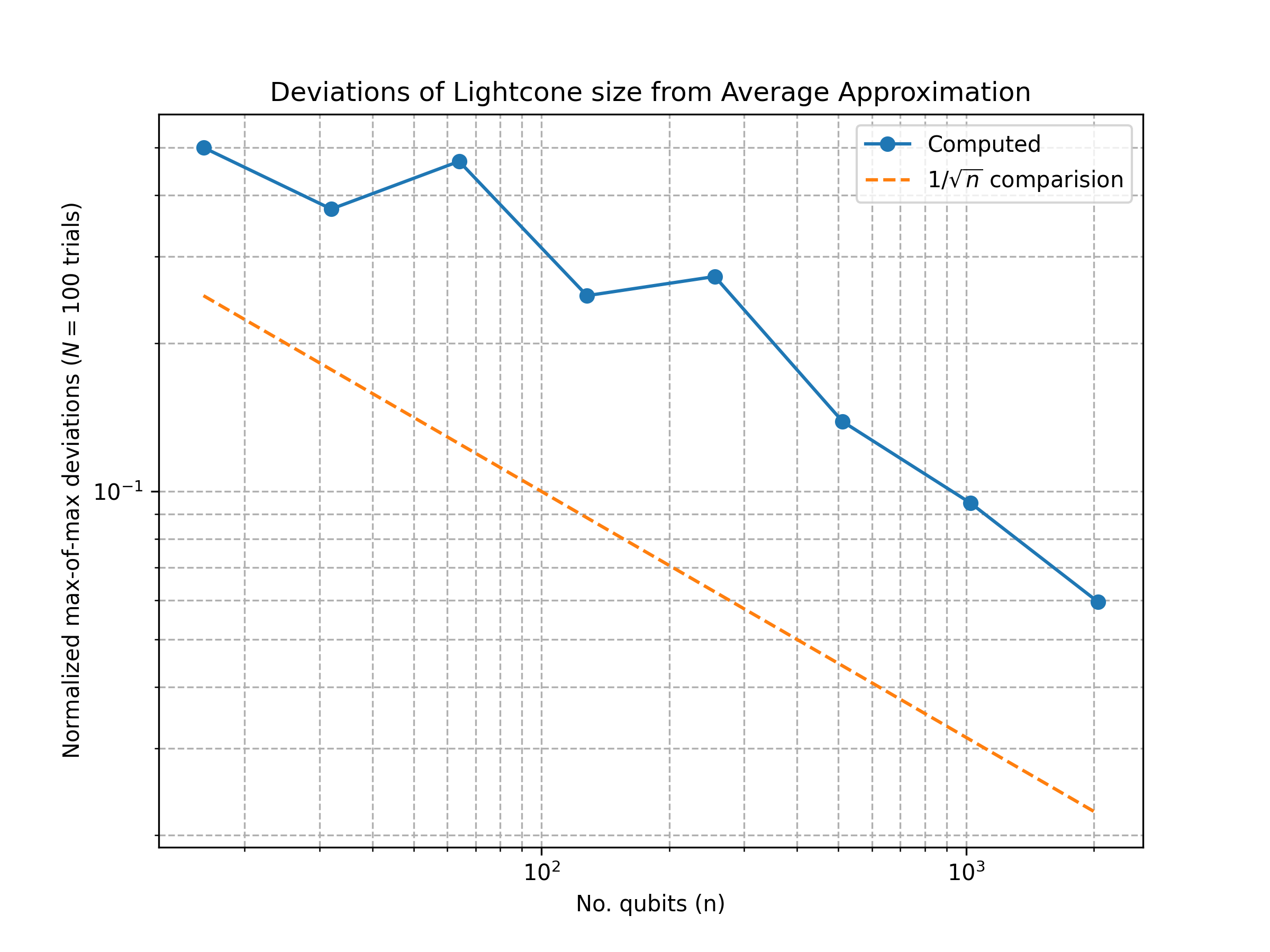}
        \subcaption{}
        \label{subfig:single_lightcone_sim}
    \end{subfigure}

    \centering
    \begin{subfigure}[]{0.45\textwidth}
        \centering
        \includegraphics[width=\textwidth]{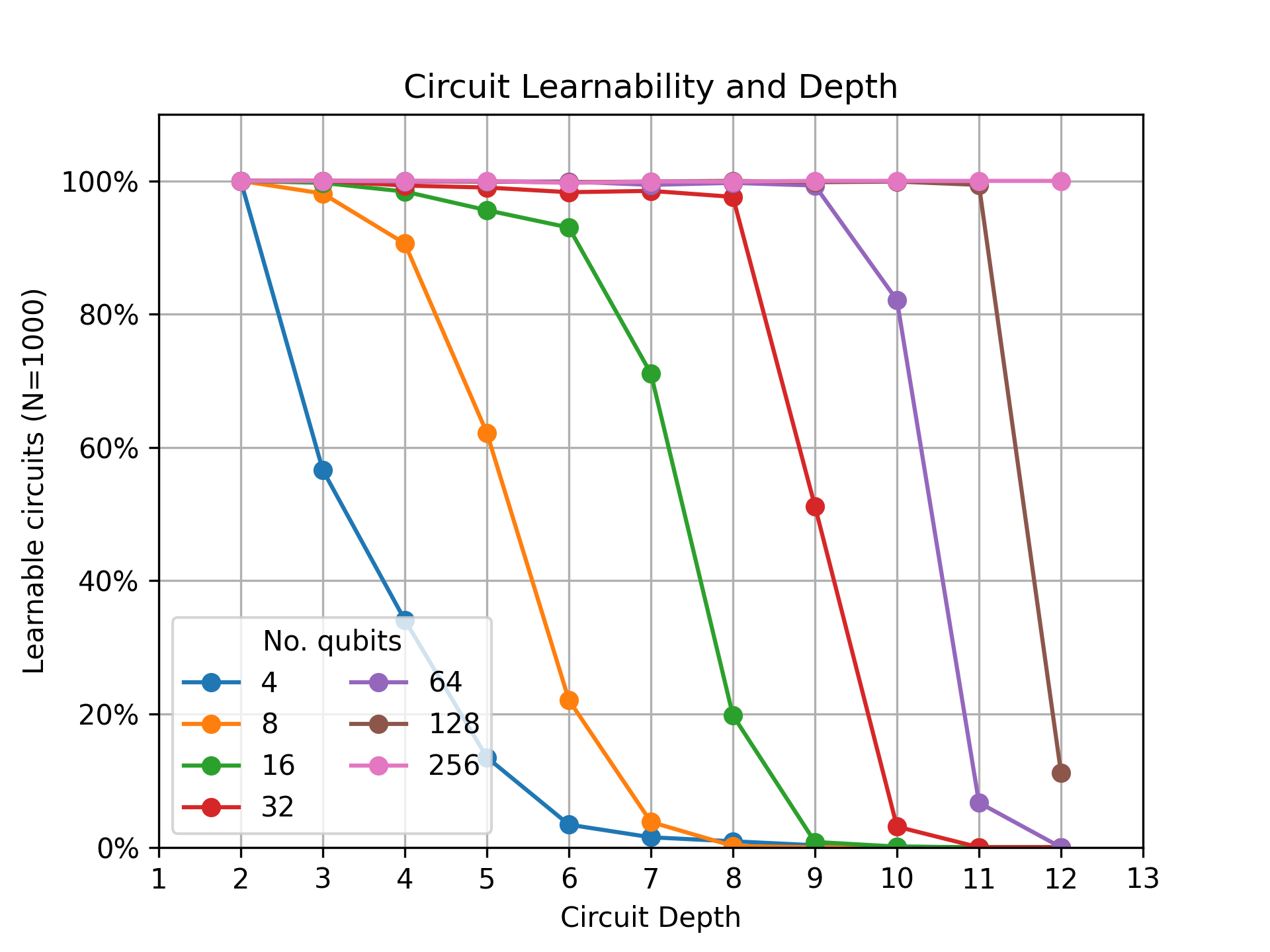}
        \subcaption{}
        \label{subfig:full_learn_a}
    \end{subfigure}
    \begin{subfigure}[]{0.45\textwidth}
        \centering
        \includegraphics[width=\textwidth]{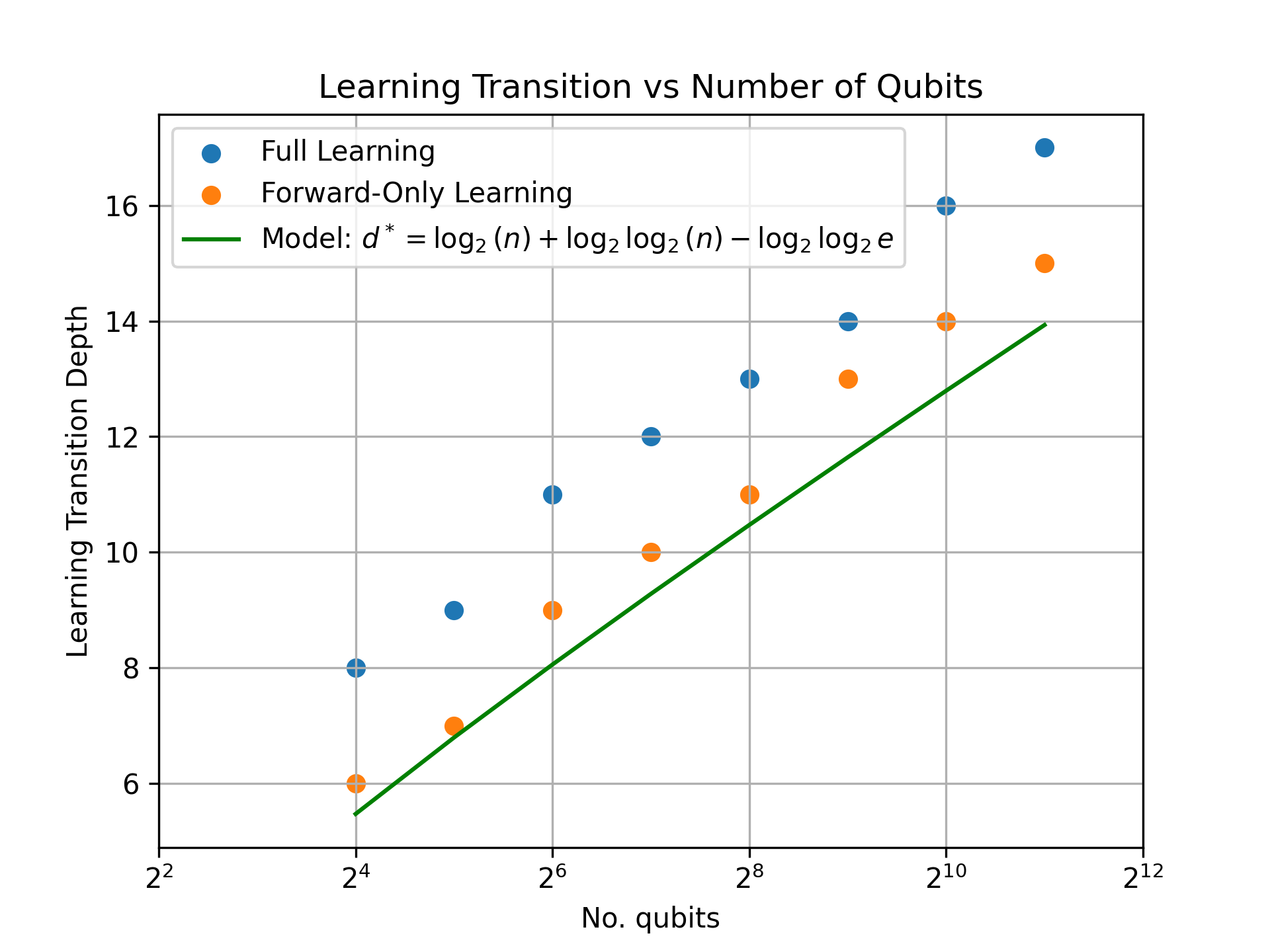}
        \subcaption{}
        \label{subfig:full_learn_b}
    \end{subfigure}
    \caption{Numerical simulations of light cone growth and local inversion algorithm for all-to-all random circuit architecture of~\cref{def:random_layered_architecture} ($k=2$). (a) Max-of-max deviation of $S_\ell$ compared with average variant $\Bar{S}_\ell$, normalized by $n$. Each data point is the maximum deviation, over 100 simulations, of the maximum deviation over all timesteps $\ell$. Comparison with $1/\sqrt{n}$ is provided as rough expected rate of error decay. (b) Plot of circuit learnability for various depths and qubit counts. Learnability here means ``good lightcones": at every stage there is a pivot gate, which we thereby assume can be removed via tomographic testing of causal relations. (c) Transition depth at which empirical learning probability drops below $1/2$ (100 shots per probability estimate), plotted against the number of qubits. At scales shown, the relationship for both full learning and forward-only is approximately linear (on the x-log plot), with an apparently constant learning advantage given to the full iterative method. Forward-only transition is compared with $\tilde{d}^*$ from Eq.~\eqref{eq:lightcone_transition_asymptotics} (green).}
    \label{fig:full_learning_statistics}
\end{figure}
To better observe this transition point, we next identify the first depth at which the empirical learning probability drops below $50\%$ for various $n$, both for full learning and forward-only. \Cref{subfig:full_learn_b} gives a plot of these transition depths with respect to qubit count, and overlays the expected transition depth predicted by Eq.~\eqref{eq:lightcone_transition_asymptotics}. For the problem scales we are able to access numerically, we see the expected, approximately linear relationship between $\log_2 n$ and the transition depth. That the orange data is systematically higher than the green model line can be explained as follows: the orange data give the first integer value above the learnability threshold predicted by $d^*$. Note that the data are always within 1 of the green line. Comparing the blue and orange data, it appears the full learning method enjoys, at most, a constant offset over the forward-only method for this given ensemble. This can be explained by the theoretical analysis above, which suggests that, asymptotically, the learnability in our sense coincides precisely with the onset of full lightcone coverage (i.e., for large $n$ and below full coverage, the lightcone is simply very likely to grow per~\Cref{cor:lightcone_grows}). 
    \section{Discussion} \label{sec:discussion}

This work explores shallow unitary circuit learning from a strict point of view, whereby the learner is asked to furnish a circuit with similar design to the original input circuit. We extend the methods introduced by~\cite{fefferman2024anti}, originally analyzed for geometrically local unitary circuits, to general $k$-local unitary circuits, and also provide a framework for learning that extends to broader scenarios. We provide meta algorithms that, while generally lacking rigorous guarantees, serve as plausible avenues for learning shallow circuits more broadly. We make these ideas concrete in the analysis of random ensembles of all-to-all 2-local Haar circuits, and show that, given suitable knowledge of gate layout, proper learning is achievable out to depth $d \sim \log_2 n + \log_2 \log_2 n$. We anticipate that many other gate ensembles will, if anything, be easier to learn than Haar random gates for a given depth, since these may be less scrambling. 

Some of the framework might extend beyond quantum circuits and into more general computational circuit models, though particular conclusions will vary with model and circuit family. For example, our analysis of all-to-all 2-local circuit architectures makes no assumptions on the kinds of operations, only the connectivity. However, signal propagation properties should look quite different in, say, the deterministic circuit setting, where input bit flips typically have large discrete effects on the output. 

Finally, it remains an open question to what extent these ``structured" circuit learning techniques can be applied beyond the query setting. Other interesting quantum circuit learning problems include the learning of state preparation circuits given copies $C \ket{0}^{\otimes n}$ of the output state, and learning circuits that produce a given list of measurements or expectation values. In these cases, probing causal structure is not as available as the query setting, and it may be interesting to consider how to utilize full or partial information on the circuit layout. Answers to these questions are interesting both from a fundamental point of view, and to understand the security of quantum cryptography based on various notions of circuit learning.   

\section*{Acknowledgments}
    The authors are grateful for discussions with Shouvanik Chakrabarti, which have helped us complete one of the proofs, and for his general leadership and support.  

\section*{Disclaimer}
    This paper was prepared for informational purposes by the Global Technology Applied Research center of JPMorgan Chase \& Co. This paper is not a product of the Research Department of JPMorgan Chase \& Co. or its affiliates. Neither JPMorgan Chase \& Co. nor any of its affiliates makes any explicit or implied representation or warranty and none of them accept any liability in connection with this paper, including, without limitation, with respect to the completeness, accuracy, or reliability of the information contained herein and the potential legal, compliance, tax, or accounting effects thereof. This document is not intended as investment research or investment advice, or as a recommendation, offer, or solicitation for the purchase or sale of any security, financial instrument, financial product or service, or to be used in any way for evaluating the merits of participating in any transaction.

\printbibliography

@inproceedings{bakshi2024structure,
  title={Structure learning of Hamiltonians from real-time evolution},
  author={Bakshi, Ainesh and Liu, Allen and Moitra, Ankur and Tang, Ewin},
  booktitle={2024 IEEE 65th Annual Symposium on Foundations of Computer Science (FOCS)},
  pages={1037--1050},
  year={2024},
  organization={IEEE}
}

@article{bairey2019learning,
  title={Learning a local Hamiltonian from local measurements},
  author={Bairey, Eyal and Arad, Itai and Lindner, Netanel H},
  journal={Physical review letters},
  volume={122},
  number={2},
  pages={020504},
  year={2019},
  publisher={APS}
}

@article{brown2012scrambling,
  title={Scrambling speed of random quantum circuits},
  author={Brown, Winton and Fawzi, Omar},
  journal={arXiv preprint arXiv:1210.6644},
  year={2012}
}

@Inbook{comtet1974sieve,
author="Comtet, Louis",
title="Sieve Formulas",
bookTitle="Advanced Combinatorics: The Art of Finite and Infinite Expansions",
year="1974",
publisher="Springer Netherlands",
address="Dordrecht",
pages="176--203",
isbn="978-94-010-2196-8",
doi="10.1007/978-94-010-2196-8_4",
url="https://doi.org/10.1007/978-94-010-2196-8_4"
}

@article{dalzell2022random,
  title={Random quantum circuits anticoncentrate in log depth},
  author={Dalzell, Alexander M and Hunter-Jones, Nicholas and Brand{\~a}o, Fernando GSL},
  journal={PRX Quantum},
  volume={3},
  number={1},
  pages={010333},
  year={2022},
  publisher={APS}
}

@article{fefferman2024anti,
  title={Anti-concentration for the unitary Haar measure and applications to random quantum circuits},
  author={Fefferman, Bill and Ghosh, Soumik and Zhan, Wei},
  journal={arXiv preprint arXiv:2407.19561},
  year={2024}
}

@article{fefferman2025hardness,
  title={The Hardness of Learning Quantum Circuits and its Cryptographic Applications},
  author={Fefferman, Bill and Ghosh, Soumik and Sinha, Makrand and Yuen, Henry},
  journal={arXiv preprint arXiv:2504.15343},
  year={2025}
}

@article{gu2024practical,
  title={Practical Hamiltonian learning with unitary dynamics and Gibbs states},
  author={Gu, Andi and Cincio, Lukasz and Coles, Patrick J},
  journal={Nature Communications},
  volume={15},
  number={1},
  pages={312},
  year={2024},
  publisher={Nature Publishing Group UK London}
}

@inproceedings{huang2024learning,
  title={Learning shallow quantum circuits},
  author={Huang, Hsin-Yuan and Liu, Yunchao and Broughton, Michael and Kim, Isaac and Anshu, Anurag and Landau, Zeph and McClean, Jarrod R},
  booktitle={Proceedings of the 56th Annual ACM Symposium on Theory of Computing},
  pages={1343--1351},
  year={2024}
}

@article{kim2024learning,
  title={Learning state preparation circuits for quantum phases of matter},
  author={Kim, Hyun-Soo and Kim, Isaac H and Ranard, Daniel},
  journal={arXiv preprint arXiv:2410.23544},
  year={2024}
}

@inproceedings{landau2025learning,
  title={Learning quantum states prepared by shallow circuits in polynomial time},
  author={Landau, Zeph and Liu, Yunchao},
  booktitle={Proceedings of the 57th Annual ACM Symposium on Theory of Computing},
  pages={1828--1838},
  year={2025}
}

@book{nielsen2010quantum,
  title={Quantum computation and quantum information},
  author={Nielsen, Michael A and Chuang, Isaac L},
  year={2010},
  publisher={Cambridge university press}
}

@article{niroula2026digital,
  title={Digital signatures with classical shadows on near-term quantum computers},
  author={Niroula, Pradeep and Liu, Minzhao and Omanakuttan, Sivaprasad and Amaro, David and Chakrabarti, Shouvanik and Ghosh, Soumik and He, Zichang and Jin, Yuwei and Kaleoglu, Fatih and Kordonowy, Steven and others},
  journal={arXiv preprint arXiv:2602.04859},
  year={2026}
}

\newpage

\begin{appendices}
    \section{Alternative causal boundaries} \label{app:other_lightcone_defs}

The definitions of causal boundaries provided in~\Cref{sec:causality} are only two of, perhaps, four natural choices when considering local inversion learning algorithms. Instead of varying the lightcone by incrementing its initial location $\ell$, one could also consider a lightcone at fixed layer, with changing depth. This leads to two alternative notions of causal boundary, for $d \in \bb{Z}_+$ (all with respect to given $q$).
\begin{enumerate}
    \item Forward modified: $\partial' \vec{L}_{\ell, d} \coloneqq \vec{L}_{\ell,d} \setminus \vec{L}_{\ell, d-1}$
    \item Backward modified: $\partial' \backvec{L}_{\ell,d} \coloneqq \backvec{L}_{\ell, d} \setminus \backvec{L}_{\ell, d-1}$
\end{enumerate}
This depth-increment approach is tacitly what is being considered in~\Cref{sec:random_circuit_ensemble}. From a simplicity standpoint, we can be grateful that, at least for layered circuits, only two of these notions provide independent pieces information. 
\begin{proposition} \label{prop:distinct_lightcone_notions}
    Let $C$ be a layered circuit with $d$ layers, and let $q, q' \in Q$ be qubits. The following statements are equivalent. 
    \begin{enumerate}
        \item[(a)] $q' \in \partial' \vec{L}_{0,d}(q) \iff q \in \partial \backvec{L}_{d,d}(q')$
        \item[(b)] $q \in \partial' \backvec{L}_{d,d}(q') \iff q' \in \partial \vec{L}_{1, d}(q)$ 
    \end{enumerate}
\end{proposition}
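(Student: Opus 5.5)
Although the statement says ``equivalent,'' what has to be shown is that each of the two biconditionals (a) and (b) holds. The plan is to translate both sides of each biconditional into one purely graph-theoretic condition about paths in $C$, in the spirit of \Cref{lem:causal_boundary_iff_gate_remove}.

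The translation rests on one observation about layered circuits. Every path from a wire at qubit layer $\ell$ to a wire at qubit layer $\ell'$ has length exactly $\ell'-\ell$, since each gate traversed advances the layer index by one. I will use it in two forms.
\begin{itemize}
  \item Membership $q'\in\vec{L}_{\ell,j}(q)$ holds iff some path runs from $q(\ell)$ to the wire $q'(\ell+j)$. If the path reaches $q'$ earlier, it can be continued along the $q'$ wire, because layers are complete or $q'$ idles there.
  \item Membership $q\in\backvec{L}_{\ell,j}(q')$ holds iff some path runs from $q(\ell-j)$ to $q'(\ell)$.
\end{itemize}
I would state this as a preliminary claim and check it against \Cref{def:LC_CS} and the indexing conventions of \Cref{sec:preliminaries}. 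The delicate points are the remainder qubit that idles in a layer, and wire extension at the two ends.

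For (a), the left side $q'\in\vec{L}_{0,d}(q)\setminus\vec{L}_{0,d-1}(q)$ becomes: there is a path $q(0)\to q'(d)$, but none $q(0)\to q'(d-1)$. The right side $q\in\backvec{L}_{d}(q')\setminus\backvec{L}_{d-1}(q')$, with full depth $d$, becomes: there is a path $q(0)\to q'(d)$, but none ending at $q'(d-1)$ that starts at an input wire of $q$. Since $q(0)$ is the only input wire of $q$, the two conditions coincide. Equivalently, in the language of \Cref{lem:causal_boundary_iff_gate_remove}, both say that $q$ connects to $q'$ and removing $G_{-1}(q')$ disconnects them.

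For (b) the roles swap. The condition $q'\in\vec{L}_{1,d}(q)\setminus\vec{L}_{0,d}(q)$ (read as the causal boundary of \Cref{def:causal_boundary}) becomes: there is a path $q(0)\to q'(d)$ but none $q(1)\to q'(d)$. The condition $q\in\backvec{L}_{d,d}(q')\setminus\backvec{L}_{d,d-1}(q')$ becomes: there is a path $q(0)\to q'(d)$ but none $q(1)\to q'(d)$. These are identical; both say that removing $G_1(q)$ disconnects $q$ from $q'$.

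The main obstacle is bookkeeping rather than ideas. The definitions mix subscripts $\ell$ and $d$ inconsistently; for instance, (b) as printed writes $\partial\vec{L}_{1,d}$, while \Cref{def:causal_boundary} uses $\vec{L}_{\ell-1}\setminus\vec{L}_\ell$. So the first thing I would do is fix an unambiguous reading of each of the four boundary sets in terms of ``exists a path from wire $x$ to wire $y$.'' After that, the layered-length observation makes each biconditional a tautology, and the remaining care is only in the idle-qubit and wire-extension edge cases.
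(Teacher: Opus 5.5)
Your proposal is correct and follows essentially the paper's argument. The paper proves only (a): it passes through the gate-removal criterion of \Cref{lem:causal_boundary_iff_gate_remove} for the backward side instead of translating it directly into path existence as you do, and it leaves (b) to the reader. One slip to fix: in (b) you wrote $\vec{L}_{1,d}(q)\setminus\vec{L}_{0,d}(q)$, which is always empty, whereas \Cref{def:causal_boundary} gives $\partial\vec{L}_{1,d}(q)=\vec{L}_{0,d}(q)\setminus\vec{L}_{1,d}(q)$, and that is the set your path translation (a path from $q(0)$ to $q'(d)$ but none from $q(1)$) actually describes.
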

\begin{proof}
    The proofs are simple and somewhat repetitive, so we only prove (a) and leave (b) to the reader. By~\Cref{lem:causal_boundary_iff_gate_remove}, it is enough to show $q' \in \partial' \vec{L}_{0,d}(q)$ is equivalent to the condition that $q,q'$ are path connected in $C$, but removing $G_{-1}(q')$ disconnects them.
    
    ($\implies$) If $q' \in \partial' \vec{L}_{0,d}(q)$, then by definition $q' \in \vec{L}_{0,d}(q)$ but not $\vec{L}_{0, d-1}(q)$. Thus, $q,q'$ are path connected, and all paths from $q$ to $q'$ in $C$ meet $q'$ at the final layer, and no sooner. Any such path must necessarily go through gate $G_{-1}(q')$ and not originate from $q'$ at input. Consequently, removing this gate will disconnect the two.

    ($\impliedby$) Now suppose $q$ path connects to $q'$, but not if $G_{-1}(q')$ is removed. By path connectedness in $C$, $q' \in \vec{L}_{0,d}$. However, by assumption, any path $p$ from $q$ to $q'$ cannot touch $q'(j)$ for $0\leq j < d$. Thus, any such path is length $d$. As there are no shorter paths, $q' \notin \vec{L}_{0,d-1}$, so altogether $q' \in \partial' \vec{L}_{0,d}(q)$.
\end{proof}
\noindent It may be possible to relax the requirement that $C$ be a layered circuit in the above, but showing this is complicated by the possibility of different depths for different qubits. To summarzie, forward and backward lightcones of a fixed type provide two independent means of observing causal structure in the circuit, but for layered circuits this is all that is available. 
    \section{Proofs for Section \ref{sec:random_circuit_ensemble}} \label{app:delayed_proofs}

This appendix includes all proofs omitted from~\Cref{sec:random_circuit_ensemble} of the main paper, in order of appearance.
        
\begin{proof}[Proof of~\Cref{prop:2_local_repeat}]\label{pf:prop:2_local_repeat}
    By uniformity in the distributions of $P$ and $Q$, it is equivalent to consider some fixed partition $P = \{B_1, \ldots B_m\}$ (with $m = n/k$), and only have $Q$ vary uniformly at random. We are interested in the event that $P$ and $Q$ have (at least) one block in common. Write the event $[P\cap Q \neq \emptyset]$ as $\bigcup_{i=1}^m E_i$, where $E_i$ is the event $[B_i \in Q]$. By the inclusion-exclusion principle~\cite{comtet1974sieve},
    \begin{equation} \label{eq:inclusion_exclusion_apply}
        p_k = \sum_{i=1}^m (-1)^{i-1} s_{ki}
    \end{equation}
    where 
    \begin{equation*}
        s_{ki} \coloneqq \sum_{J \in \binom{[m]}{i}} \Pr\big(\bigcap_{j\in J} E_j\big).
    \end{equation*}
    Here $\binom{[m]}{i}$ means subsets of $[m]$ of size $i$. By symmetry of the events $E_i$, $\Pr\big(\bigcap_{j\in J} E_j\big)$ is independent of $J$ at fixed $i$, and in particular,
    \begin{equation*}
        s_{ki} = \binom{m}{i} \Pr\big(\bigcap_{j=1}^i E_j\big).
    \end{equation*}
    Additionally, because $Q$ is generated uniformly at random, we can calculate $\Pr\big(\bigcap_{j=1}^i E_j\big)$ by counting the number of partitions that contain blocks $B_j$ for $j \in [i]$, then dividing by the total number of partitions. We start with the denominator for simplicity: let $\cal{N}_{n,k}$ be the number of $k$-regular partitions over $n$ entities. One can show through standard combinatorial arguments that
    \begin{equation} \label{eq:number_of_partitions}
        \cal{N}_{n,k} = \frac{n!}{m!(k!)^m}.
    \end{equation}
    For the numerator, consider the number of $k$-regular partitions of $n$ objects such that $i$ of the blocks are fixed (in this case, to $B_1, \ldots, B_i$). A moments reflection shows this is equivalent to counting partitions of the remaining $n-ik$ objects into sets of $k$: $\cal{N}_{n-i k,k}$. Thus, $\Pr\big(\bigcap_{j = 1}^i E_j\big) = \cal{N}_{n-ik,k}/\cal{N}_{n,k}$, and
    \begin{equation*}
        s_{ki} = \binom{m}{i} \frac{\cal{N}_{n-i k,k}}{\cal{N}_{n,k}} = \binom{m}{i} (k!)^i \frac{m!}{(m-i)!} \frac{(n-ik)!}{n!}.
    \end{equation*}
    Putting these into \eqref{eq:inclusion_exclusion_apply} gives an exact, albeit complicated formula for $p_k$.

    To understand the large $n$ behavior, we begin with a simple union bound, coming from $s_{k1}$. For fixed $k$,
    \begin{equation}
        s_{k1} = k! m^2 \frac{(mk-k)!}{(mk)!} = O(m^{2-k}).
    \end{equation}
    When $k > 2$, this tends to zero as $m \rightarrow \infty$. Thus we have that $\lim_{n\rightarrow\infty} p_k = 0$ for $k > 2$, completing one part of the proof. On the other hand, for $k = 2$, a direct calculation shows that
    \begin{equation}
        s_{21} = \frac12 \frac{m}{m-1/2}, \qquad  s_{22} = \frac18 \frac{m(m-1)}{(m-1/2)(m-3/2)}
    \end{equation}
    which tend to $1/2$ and $1/8$ as $m \rightarrow \infty$. By Bonferroni's inequalities~\cite{comtet1974sieve}, 
    \begin{equation*}
        p_k \leq s_{k1}, \qquad p_k \geq s_{k1} - s_{k2}.
    \end{equation*}
    This gives asymptotic upper and lower bounds on $p_2$ of $1/2$ and $3/8$, respectively. This simple calculation shows that $p_2$ is bounded away from both $0$ and $1$ at large $m$. 
    
    However, we can actually arrive at an exact expression by more careful considerations. First, we observe that
    \begin{equation}
        s_{2i} = \frac{2^i}{i!} \binom{2m}{m}^{-1} \binom{2(m-i)}{m-i}.
    \end{equation}
    Consider the case $i \in o(m)$. From the binomial asymptotics, we have
    \begin{equation}
        s_{2i} \sim \frac{2^i}{i!} \frac{\sqrt{m \pi}}{2^{2m}} \frac{2^{2(m-i)}}{\sqrt{(m-i)\pi}} \sim \frac{1}{2^i i!}.
    \end{equation} 
    Let $j(m) = \lceil \sqrt{m}\rceil$ be a sequence of indices in $[m]$ indexed by $m$. Split the inclusion-exclusion sum for $p_k$ as
    \begin{equation}
        p_k = \sum_{i=1}^{j(m)} (-1)^{i-1} s_{ki} + \sum_{i=j(m)+1}^m s_{ki}.
    \end{equation}
    Because $j \in o(m)$ and tends to infinity,
    \begin{equation}
        \sum_{i=1}^{j(m)} s_{ki}(-1)^{i-1} \sim \sum_{i=1}^\infty (-1)^{i-1} \frac{1}{2^i i!} = 1 - e^{-1/2}.
    \end{equation}
    Meanwhile, the ``remainder" part of the sum indeed goes to zero, since
    \begin{align}
    \begin{aligned}
        \sum_{i=j(m)+1}^\infty \frac{2^i}{i!} \binom{2m}{m}^{-1} \binom{2(m-i)}{m-1} &\leq m \frac{2^{\sqrt{m}}}{(\sqrt{m})!} \binom{2m}{m}^{-1} \binom{2(m-\sqrt{m})}{m-\sqrt{m}} \\
        &\sim m \frac{2^{\sqrt{m}}}{(\sqrt{m})!} \sqrt{\frac{m}{m-\sqrt{m}}} \frac{2^{2(m-\sqrt{m})}}{2^{2m}} \\
        &\sim m \frac{1}{(\sqrt{m})! 2^{\sqrt{m}}} \\
        &\rightarrow 0
    \end{aligned}
    \end{align}
    as $m\rightarrow \infty$. Thus, the limit of $p_2$ is as stated in the lemma.
\end{proof}

\begin{proof}[Proof of~\cref{lem:finite_size_error}]
\refstepcounter{proof}\label{pf:lem:finite_size_error}
    Let $\delta_\ell \coloneqq \bar{s}_\ell - \tilde{s}_\ell$, which is nonnegative by~\Cref{lem:tilde_s_solution}. Our approach is to upper bound $\delta_{\ell+1}$ in terms of $\delta_\ell$, then solve the recurrence inequality given $\delta_0 = 0$. We proceed by a ``perturbation" argument on the known sequence $\tilde{s}$. Plugging $\tilde{s}_\ell + \delta_\ell$ in for the update Eq.~\eqref{eq:little_s_bar},
    \begin{equation*}
        \tilde{s}_{\ell+1} + \delta_{\ell+1} = (\tilde{s}_\ell + \delta_\ell)(2-\tilde{s}_\ell - \delta_\ell) + \frac{(\tilde{s}_\ell +\delta_\ell)(1-\tilde{s}_\ell-\delta_\ell)}{n-1}.
    \end{equation*}
    Simplifying and regrouping in powers of $\delta_\ell$,
    \begin{align*}
        \delta_{\ell+1} &= \left[2(1-\tilde{s}_\ell) + \frac{1-2\tilde{s}_\ell}{n-1}\right]\delta_\ell + \frac{\tilde{s}_\ell(1-\tilde{s}_\ell)}{n-1} - \delta_\ell^2 \left(1 + \frac{1}{n-1}\right) \\
        &\leq \left[2(1-\tilde{s}_\ell) + \frac{1}{n-1}\right]\delta_\ell + \frac{\tilde{s}_\ell(1-\tilde{s}_\ell)}{n-1}
    \end{align*}
    where, in going to the 2nd line, we dropped the negative quadratic term for an upper bound, and used $1 - 2 \tilde{s}_\ell \leq 1$. We thus see that $\delta_\ell$ satisfies the linear recurrence inequality $\delta_{\ell+1} \leq a_\ell \delta_\ell + b_\ell$, where
    \begin{align}
        a_\ell \coloneqq 2(1-\Tilde{s}_\ell) + \frac{1}{n-1}, \qquad b_\ell \coloneqq \frac{\Tilde{s}_\ell(1-\Tilde{s}_\ell)}{n-1}
    \end{align}
    are known positive sequences. By induction, one can prove that the corresponding \emph{equality} recurrence upper bounds $\delta_\ell$ for such $\ell$. This linear recurrence can be solved exactly, and results in
    \begin{equation}
        \delta_\ell \leq \sum_{i=0}^{\ell-1} b_i \prod_{j = i+1}^{\ell-1} a_j.
    \end{equation}
    (with the convention $\prod_{i\in S} a_i = 1$ if $S$ is empty.) Writing out $a_j$ and selectively grouping terms for subsequent analysis,
    \begin{align} \label{eq:expand_product_delta}
    \begin{aligned}
        \delta_\ell &\leq \frac{1}{n-1} \sum_{i=0}^{\ell-1} \tilde{s}_i (1-\tilde{s}_i) \left[\prod_{j=i+1}^{\ell-1} 2(1-\tilde{s}_j)\left(1 + \frac{1}{2(n-1)(1-\tilde{s}_j)}\right)\right] \\
        &\leq \frac{2^{\ell-1}}{n-1} \left[\prod_{j=1}^{\ell-1}\left(1 + \frac{1}{2(n-1)(1-\tilde{s}_j)}\right)\right] \sum_{i=0}^{\ell-1} \frac{\tilde{s}_i}{2^i} \prod_{j=i}^{\ell-1} (1-\tilde{s}_j) .
    \end{aligned}
    \end{align}
    In the above, we distributed the product across the three terms, and upper bounded the last product according to the $i = 1$ term, which is largest. 
    
    We will eventually show that the bracketed product term is bounded as $1 + O(1/\sqrt{n})$, but as this is tedious and not so illuminating, we consider the other portions of the bound first. Let us write the remaining sum as
    \begin{align} \label{eq:reference_sum}
        \sum_{i=0}^{\ell-1} \frac{\tilde{s}_i}{2^i} \left[\prod_{j=i}^{\ell-1} (1-\tilde{s}_j)\right] &= \sum_{i=0}^{\ell-1} \frac{\left[1-(1-S_0/n)^{2^i}\right]}{2^i}(1-S_0/n)^{\sum_{j=i}^{\ell-1} 2^j}.
    \end{align}
    The finite geometric series $\sum_{j=i}^{\ell-1} 2^j$ evaluates to $2^\ell - 2^i$, hence
    \begin{align*}
         \frac{2^{\ell-1}}{n-1} \sum_{i=0}^{\ell-1} \frac{\tilde{s}_i}{2^i} \prod_{j=i}^{\ell-1} (1-\tilde{s}_j)&\leq \frac{2^{\ell-1}}{n-1} (1-S_0/n)^{2^\ell} \sum_{i=0}^{\ell-1}\frac{\left[1-(1-S_0/n)^{2^i}\right]}{2^i (1-S_0/n)^{2^i}} \\
        &= \frac{2^\ell}{2n} a^{-2^{\ell}} \sum_{i=0}^{\ell-1} \frac{a^{2^i} - 1}{2^i}
    \end{align*}
    where $a \coloneqq (1-S_0/n)^{-1}$ is greater than 1. Observe that the summand above is strictly increasing in $i$, so we upper bound it as
    \begin{align}
    \begin{aligned}
        \frac{2^\ell}{2n} a^{-2^{\ell}} \sum_{i=0}^{\ell-1} \frac{a^{2^i} - 1}{2^i} &\leq \frac{2^\ell}{2n} a^{-2^\ell} \ell \frac{a^{2^\ell} - 1}{2^\ell} \\
        &= \frac{\ell}{2n}.
    \end{aligned}
    \end{align}
    This gives the leading dependence.

    We now return to the square-bracketed product term of Eq.~\eqref{eq:expand_product_delta}, which we call $P$. Taking the logarithm and using the bound $\ln(1+x) < x$, for $x > 0$,
    \begin{align}\label{eq:bound_prod}
    \begin{aligned}
        \ln P = \sum_{j=1}^{\ell-1}\ln\left(1 + \frac{1}{2(n-1)(1-\tilde{s}_j)}\right) &\leq \sum_{j=1}^{\ell-1} \frac{1}{2(n-1)(1-\tilde{s}_j)} \\
        &= \frac{1}{2(n-1)} \sum_{j=0}^{\ell-1} \frac{1}{(1-S_0/n)^{2^j}}.
    \end{aligned}
    \end{align}
    For large $n$, $(1 - S_0/n) = e^{-S_0/n}(1+O(1/n^2))$. Hence,
    \begin{equation}
        \frac{1}{(1-S_0/n)^{2^j}} = e^{S_0 2^j/n}(1 + O(2^j/n^2)).
    \end{equation}
    Moreover, for $j \in [\tilde{d}^*]_0$, we have $O(2^j/n^2) = O(\log n/ n)$. Thus,
    \begin{equation}
        \ln P = \frac{1}{2n} \sum_{j=0}^{\ell-1} e^{S_0 2^j/n}(1+O(\log n/n)) \lesssim \frac{1}{2n} \sum_{j=0}^{\ell-1} e^{S_0 2^j/n}.
    \end{equation}
    This sum is dominated by the largest values of $j$, and to order by significance, we reindex the sum as $j \rightarrow \ell - 1 - j$. 
    \begin{equation*}
        \sum_{j=0}^{\ell-1} e^{S_0 2^j/n} = \sum_{j=0}^{\ell-1} e^{S_0 2^\ell/(2\cdot 2^i \cdot n)}.
    \end{equation*}
    Using that $2^\ell < (n \log_2 n)/(S_0 \log_2 e)$, from $\ell < \tilde{d}^*$,
    \begin{equation}
        e^{S_0 2^\ell / n} \leq e^{\ln n} = n.
    \end{equation}
    which implies
    \begin{align*}
        \ln P &\leq \frac{1}{2n}\sum_{i=0}^{\ell-1} n^{1/(2^{i+1})}(1+\widetilde{O}(1/n)) = \frac{1}{2\sqrt{n}} \sum_{i=0}^{\ell-1} \left(\frac{1}{\sqrt{n}}\right)^{1 - 1/2^i} \\
        &= \frac12 \left(\frac{1}{\sqrt{n}} + \frac{1}{n^{3/4}} + \ldots\right) \\
        &= O(1/\sqrt{n}).
    \end{align*}
    We note that the number of terms, which grows as $O(\log_2 n)$, cannot compensate the polynomial decay of each term, so that the $1/\sqrt{n}$ scaling is valid. Returning now to the original product $P$, this implies $P = 1 + O(1/\sqrt{n})$. Combining this with the above bounds gives the stated result of the lemma.
\end{proof}

\begin{proof}[Proof of~\cref{lem:random_err_update_rule}]\label{pf:lem:random_err_update_rule}
    Let $\delta_{\ell} \coloneqq s_\ell - \bar{s}_\ell$, be the signed error; note that $\delta_0 = 0$ by assumption. Let $v(s) \coloneqq c(s) - \bb{E}c(s)$ be the residual of $c$ with respect to its mean. The update rules are given by
    \begin{equation} \label{eq:delta_update}
        s_{\ell+1} = s_\ell + \bb{E}c(s_\ell) + v(s_\ell), \qquad \bar{s}_{\ell+1} = \bar{s}_\ell + \bb{E}c(\bar{s}_\ell).
    \end{equation}
    Hence,
    \begin{equation*}
        \delta_{\ell+1} = \delta_\ell + \bb{E}c(s_\ell) - \bb{E}c(\bar{s}_\ell) + v(s_\ell).
    \end{equation*}
    Consider first the difference of means $\bb{E}c(s_\ell) - \bb{E}c(\bar{s}_\ell)$. After some algebra, this can be written as (dropping indices for clarity)
    \begin{align*}
        \bb{E}c(s) - \bb{E}c(\bar{s}) &= \frac{s-\bar{s}}{1-1/n} (1 - s - \bar{s}) \\
        &= \frac{1 - 2 \bar{s}}{1-1/n}\delta - \frac{1}{1-1/n}\delta^2 
    \end{align*}
    Thus,
    \begin{align*}
        s - \bar{s} + \bb{E}c(s) - \bb{E}c(\bar{s}) &= \left(2 \frac{n}{n-1}(1-\bar{s}) - \frac{1}{n-1}\right)\delta - \frac{n}{n-1} \delta^2 \\
        &= \left(2(1-\bar{s}) + \frac{1-2\bar{s}}{n-1}\right) \delta - \frac{n}{n-1}\delta^2.
    \end{align*}
    After these manipulations, Eq.~\eqref{eq:delta_update} becomes
    \begin{equation} \label{eq:simplified_delta_update}
        \delta_{\ell+1} = \left(2(1-\bar{s}_\ell) + \frac{1-2\bar{s}_\ell}{n-1}\right) \delta_\ell - \frac{n}{n-1} \delta_\ell^2 + v(s_\ell).
    \end{equation}
    
    Dealing with quadratic recurrences is tricky from an analytical point of view, and unfortunately, we cannot simply drop the negative quadratic term in $\delta$ to get a bound for our update, because $\delta_\ell$ is signed. Instead, we take a triangle inequality to obtain a bounding recurrence on $\abs{\delta_\ell}$. This leads us to consider the random term $\abs{v(s_\ell)}$. By Chebyshev's inequality and~\cref{lem:crossings_statistics},
    \begin{equation*}
        \Pr(\abs{C(S) - \bb{E}C(S)} \geq t)\leq \frac{\Var(C(S))}{t^2} = \frac{1}{t^2}\frac{2 S(S-1)(n-S)(n-S-1)}{(n-3)(n-1)^2}
    \end{equation*}
    or, in terms of the relative quantities $s, c(s)$,
    \begin{align} \label{eq:Chebyshev_simplify}
    \begin{aligned}
        \Pr(\abs{v(s)} \geq t) &\leq \frac{2}{n t^2} \frac{n^4}{n(n-1)^2 (n-3)} s(s-1/n)(1-s)(1-s-1/n) \\
        &< \frac{2}{n t^2}[1+O(1/n)] s^2 (1-s)^2.
    \end{aligned}
    \end{align}
    To ensure the bound holds for each $j < \ell$ with probability at least $1-\epsilon$, we employ a union bound. It suffices that for every step $j$, the bound holds with probability at least $1 - \epsilon/\ell$. That is, it suffices to choose $t$ such that the last line of Eq.~\eqref{eq:Chebyshev_simplify} is bounded by $\epsilon/\ell$, or equivalently,
    \begin{align*}
        t &= \sqrt{\frac{2 \ell}{\epsilon n}} s(1-s) (1+O(1/n)) \\
        &\lesssim 2 \sqrt{\frac{\ell}{\epsilon n}}s(1-s)
    \end{align*}
    where we note that $1+O(1/n)\leq \sqrt{2}$ for sufficiently large $n$. Thus,
    \begin{equation}
        \abs{\delta_{\ell+1}} \leq \left(2(1-\bar{s}_\ell) + \frac{\abs{1-2\bar{s}_\ell}}{n-1}\right) \abs{\delta_\ell} + (1+O(1/n)) \abs{\delta_\ell}^2 + 2 \sqrt{\frac{\ell}{\epsilon n}} s_\ell(1-s_\ell)
    \end{equation}
    with probability at least $1-\epsilon$. Using $\abs{1 - 2 \bar{s}_\ell} < 1$ gives the bound from the lemma. 
\end{proof}

\begin{proof}[Proof of~\Cref{lem:linearized_error_bound}]\label{pf:lem:linearized_error_bound}
The proof is structurally nearly identical to that of~\cref{lem:finite_size_error}, the main difference being the expression for $b_\ell$. The sequence $\delta^{(0)}_\ell$ is bounded by the corresponding \emph{equality} version of Eq.~\eqref{eq:linearized_recurrence}. Solving this equation gives the bound
\begin{align*}
    \delta^{(0)}_\ell &\leq \sum_{i=0}^{\ell-1} b_i \prod_{j=i+1}^{\ell-1} a_j < 2\sqrt{\frac{\ell}{\epsilon n}}\sum_{i=0}^{\ell-1}\bar{s}_i(1-\bar{s}_i) \prod_{j=i+1}^{\ell-1} a_j \\
    &= 2 \sqrt{\frac{\ell}{\epsilon n}} \sum_{i=0}^{\ell-1} \bar{s}_i(1-\bar{s}_i) \prod_{j=i+1}^{\ell-1} \left[2(1-\bar{s}_j) + \frac{1}{n-1} \right] \\
    &= 2^\ell \sqrt{\frac{\ell}{\epsilon n}} \sum_{i=0}^{\ell-1} \frac{\bar{s}_i}{2^i} \left[\prod_{j=i}^{\ell-1} (1-\bar{s}_j)\right] \left[\prod_{j=i+1}^{\ell-1} 1 + \frac{1}{2(n-1)(1-\bar{s}_j)} \right] \\
    &\leq 2^\ell \sqrt{\frac{\ell}{\epsilon n}} \left[\prod_{j=1}^{\ell-1} \left(1 + \frac{1}{2(n-1)(1-\bar{s}_j)}\right)\right] \sum_{i=0}^{\ell-1} \frac{\bar{s}_i}{2^i} \prod_{j=i}^{\ell-1} (1-\bar{s}_j)\\
    &\leq P \times 2^\ell \sqrt{\frac{\ell}{\epsilon n}}  \sum_{i=0}^{\ell-1} \frac{\bar{s}_i}{2^i} \prod_{j=i}^{\ell-1} (1-\bar{s}_j)
\end{align*}
where 
\begin{equation} \label{eq:prod_term}
    P \coloneqq \prod_{j=1}^{\ell-1} \left(1 + \frac{1}{2(n-1)(1-\tilde{s}_j)}\right)
\end{equation}
was bounded in the proof of~\Cref{lem:finite_size_error} as $1 + O(1/\sqrt{n})$. Consider first the remaining sum. From~\cref{lem:finite_size_error}, we have $(1-\bar{s}_j) \leq (1-\tilde{s}_j)$ and $\bar{s}_j = \tilde{s}_j + O(\ell/n)$. Hence, 

\begin{equation}
    \sum_{i=0}^{\ell-1} \frac{\bar{s}_i}{2^i} \prod_{j=i}^{\ell-1} (1-\bar{s}_j) \leq \sum_{i=0}^{\ell-1} \frac{\tilde{s}_i}{2^i} \prod_{j=i}^{\ell-1} (1-\tilde{s}_j) (1 + O(\ell/n)).
\end{equation}
But this precise sum (besides the $O(\ell/n)$ part, which is subdominant), is also bounded in the proof of~\cref{lem:finite_size_error} (see Eq.~\eqref{eq:reference_sum}). Borrowing that exact reasoning gives
\begin{equation}
     2^\ell \sqrt{\frac{\ell}{\epsilon n}} \sum_{i=0}^{\ell-1}\frac{\tilde{s}_i}{2^i}\prod_{j=i}^{\ell-1} (1-\tilde{s}_j) \leq \sqrt{\frac{\ell^3}{\epsilon n}}. 
\end{equation}
Altogether,
\begin{equation*}
    \delta^{(0)}_\ell \leq \sqrt{\frac{\ell^3}{\epsilon n}}\big(1 + O(1/\sqrt{n})\big).
\end{equation*}
as claimed.
\end{proof} 
    \section{Concentration bounds on edge crossings} \label{app:concentration_misc}

Our results on full lightcone coverage for random, all-to-all circuits, discussed in Section~\ref{sec:random_circuit_ensemble}, rely on concentration of the edge crossings variable $C(S)$. To maintain as much rigor as possible across all relevant parameter values, we utilize a relatively weak Chebyshev inequality. However, this limits the strengths of the claims we can make regarding full lightcone coverage across \emph{every} qubit in the circuit $C$ as $n$ grows. To do so, a stronger bound, e.g., subexponential with variance dependence, would be desirable.

This appendix provides some arguments to suggest such bounds should hold in reality, despite our difficulty in generating a proof. 

\subsection{A central limit theorem for edge crossings} \label{app:central_limit}
     First, we consider a central limit approximation for the crossings distribution as given by~\cref{thm:crossings_distribution}. While this derivation might be promoted to ``theorem" with sufficient care to detail, we find this clutters the essential point and leave such analysis to the interested reader.

    Our derivation begins by taking the logarithm of $p(C)$, referring to~\cref{thm:crossings_distribution}.
    \begin{align*}
        \ln p(C) &= -\ln Z_n + \ln \Tilde{p}(C) \\
        \ln \Tilde{p}(C) &= C \ln 2 - \ln \Gamma(C+1) - \ln \Gamma\left(\frac{S-C}{2} + 1\right) - \ln \Gamma\left(\frac{n-S-C}{2} + 1\right) 
    \end{align*}
    Here $\ln Z_n$ is $C$-independent normalization and $\Gamma(x) = (x-1)!$ is the standard analytic extension of the factorial. We wish to consider a Taylor expansion of $\ln \Tilde{p}(C)$ about its mode. The first derivative being zero is given by
    \begin{equation}
        0 = \ln 2 - \psi(C^*+1) + \frac12 \psi\left(\frac{S-C^*}{2}+1\right) + \frac12 \psi\left(\frac{n-S-C^*}{2}+1\right).
    \end{equation}
    where $\psi$, the logarithmic derivative of $\Gamma$, is also known as the digamma function. Let us assume that $C$ is ``bounded away" from $S$ and $n-S$, so that at large $n$, and that $S$ is bounded away from $0$ and $n$, so that all arguments to $\psi$ grow as $\Omega(n)$. Using $\psi(x) = \ln x + O(1/x)$ at large $x$, 
    \begin{align*}
        \ln C^*  - \ln \sqrt{\frac{S-C^*}{2}} - \ln \sqrt{\frac{n-S-C^*}{2}} &= \ln 2 + O(1/n) \\
        \ln\left[\frac{C^*}{\sqrt{S-C^*}\sqrt{n-S-C^*}}\right] &= O(1/n) \\
        \frac{C^*}{\sqrt{S-C^*}\sqrt{n-S-C^*}} &= 1 + O(1/n).
    \end{align*}
    Let $C_0$ be the solution without the $O(1/n)$ correction. This can be solved for, and in fact $C_0 = S(n-S)/n$. Writing $C^* = C_0 +\epsilon$, we find that variations of size at most $\epsilon \in O(1)$ are allowable so that $C^*$ solves the equation of $C_0$ to within $O(1/n)$, we require. Thus, the mode is captured within a constant factor. While additional terms in the $\psi$ expansion would further specify this factor, we observe that this $n$-independent shift is small compared to the ``typical" fluctuations of $C - \bb{E}C$, which are of size $O(\sqrt{n})$. Of course, $C_0$ itself is within $O(1/n)$ of the true mean $\bb{E}C$. 

    Moving on to the second derivative, 
    \begin{align*}
        \frac{d^2}{dC^2} \ln \Tilde{p}(C) = - \psi'(C+1) - \frac14 \psi'\left(\frac{S-C}{2}+1\right) - \frac14 \psi'\left(\frac{n-S-C}{2}+1\right).
    \end{align*}
    To leading order at large $x$, $\psi'(x) = x^{-1} + O(x^{-2})$. Writing out the leading order at the approximate mode $C = C_0$,
    \begin{align*}
        \frac{d^2}{dC^2} \ln \Tilde{p}(C_0) &= - \frac{n}{S(n-S)} - \frac{n}{2S^2} - \frac{n}{2(n-S)^2} + O (1/n^2)\\
        &= -\frac{n^3}{2S^2(n-S)^2} + O (1/n^2)\\
        &= -\frac{1}{\Var(C)} + O(1/n^2).
    \end{align*}
    Note that, for variations $D = C - \bb{E}C$ of size $\Theta(\sqrt{n})$, 
    \begin{equation*}
        F'(C_0) D \in O(n^{-1/2}), \qquad F''(C_0) D^2 \in O(1)
    \end{equation*}
    where $F(C) \coloneqq \ln \Tilde{p}(C)$. Comparing, we see that the linear term vanishes with respect to the relevant fluctuations, while the quadratic term does not.

    Finally, we analyze the $k$th derivatives $F^{(k)}$ for $k \geq 3$. For conciseness, we limit our analysis to $k = 3$: all higher orders can be shown subdominant via the higher order polygamma functions $\psi^{(k)}$. We have
    \begin{equation}
        F'''(C_0) = - \psi''(C+1) + \frac18 \psi''\left(\frac{S-C}{2}+1\right) + \frac18 \psi''\left(\frac{n-S-C}{2}+1\right).
    \end{equation}
    To leading order, $\psi''(x) = - x^{-2} + O(x^{-3})$, so that
    \begin{align*}
        F'''(C_0) &= \frac{1}{C_0^2} - \frac12 \frac{1}{(S-C_0)^2} - \frac12 \frac{1}{(n-S-C_0)^2} + O (1/n^3) \\
        &= O(1/n^2). 
    \end{align*}
    Thus, $F'''(C_0) D^3 = O(1/\sqrt{n})$, the same order as the linear term. Higher order terms will fall as $F^{(k)}(C_0) D^k = O(n^{1-k/2})$.

    Overall, then, we find that for $D \in \Theta(\sqrt{n})$,
    \begin{equation}
        \ln \Tilde{p}(C) = \text{const} - \frac{1}{2 \Var(C)} (C - \bb{E}C)^2 + O(1/\sqrt{n}). 
    \end{equation}
    Taking exponentials, and reintroducing normalization, one obtains for such $C$ that
    \begin{equation}
        p(C)\times \frac{1}{2} \sim \frac{1}{2\pi \Var(C)} e^{-(C - \bb{E}C)^2/(2 \Var(C)^2)}(1 + O(n^{-1/2})).
    \end{equation}
    The $1/2$ factor on the left comes from the fact that $p(C) = 0$ for odd $C$.

\subsection{Exponential tail bounds using Azuma's inequality} \label{app:azuma}

This section provides rigorous subgaussian tail bounds via Azuma's inequality for martingales with bounded variation. The defect of this approach is that the bounds do not depend on the variance of $C(s)$, which is very small at small $S$ or $n-S$. We include this result as an appendix to provide further evidence of the conjecture that variance-dependent, exponential tail bounds on the edge crossings exist for all values of $S, n$.  

A sequence $(X_k)$ of discrete random variables is a \emph{martingale} if $\bb{E}[X_k| X_{k-1} \ldots X_1] = X_{k-1}$. The following concentration inequality on martingale sequences will prove useful in Section~\ref{sec:random_circuit_ensemble}, and can be seen as a natural generalization of similar bounds for sums of independently and identically distributed (i.i.d.) bounded random variables. 
\begin{theorem}[Azuma's inequality] \label{thm:azuma}
    Let $(X_k)_{k=0}^\infty$ be a martingale such $\abs{X_k - X_{k-1}} \leq d_k$ with probability one. Then for all $N \in \bb{N}$ and $t \in \bb{R}_+$,
    \begin{equation*}
        \Pr(\abs{X_N - X_0} \geq t) \leq 2 \exp\left(\frac{-\epsilon^2}{2 \sum_{k=1}^N d_k^2}\right).
    \end{equation*}
\end{theorem}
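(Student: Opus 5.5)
The plan is the standard exponential-moment (Chernoff) argument applied to the martingale difference sequence. Write $D_k \coloneqq X_k - X_{k-1}$, so that $X_N - X_0 = \sum_{k=1}^N D_k$. We have $\bb{E}[D_k \mid X_{k-1},\ldots,X_0] = 0$ and $\abs{D_k}\le d_k$ almost surely. (In the displayed bound, the $\epsilon$ in the exponent should be read as the deviation $t$.) By symmetry it suffices to bound the upper tail $\Pr(X_N - X_0 \geq t)$: applying the same argument to the martingale $(-X_k)$ handles the lower tail, and the two tails together produce the factor of $2$.

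The first step is a conditional version of Hoeffding's lemma. If $Y$ is a random variable with $\bb{E}[Y\mid\cal{F}]=0$ and $\abs{Y}\le d$, then for every $\lambda\in\bb{R}$,
\begin{equation*}
    \bb{E}[e^{\lambda Y}\mid \cal{F}] \le e^{\lambda^2 d^2/2}.
\end{equation*}
I would prove this using convexity of $y\mapsto e^{\lambda y}$ on $[-d,d]$: writing $y$ as a convex combination of the endpoints gives
\begin{equation*}
    e^{\lambda y}\le \frac{d-y}{2d}e^{-\lambda d}+\frac{d+y}{2d}e^{\lambda d}.
\end{equation*}
Taking conditional expectations kills the linear term, leaving $\cosh(\lambda d)$. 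Finally, comparing Taylor series term by term shows $\cosh x \le e^{x^2/2}$. I expect this lemma to be the only step with real content; everything else is bookkeeping.

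The second step peels off one increment at a time using the tower property. Conditioning on the first $N-1$ steps,
\begin{equation*}
    \bb{E}\big[e^{\lambda (X_N-X_0)}\big] = \bb{E}\Big[e^{\lambda(X_{N-1}-X_0)}\,\bb{E}\big[e^{\lambda D_N}\mid X_{N-1},\ldots,X_0\big]\Big] \le e^{\lambda^2 d_N^2/2}\,\bb{E}\big[e^{\lambda(X_{N-1}-X_0)}\big].
\end{equation*}
Inducting on $N$ then gives $\bb{E}[e^{\lambda(X_N-X_0)}]\le \exp\big(\lambda^2\sum_k d_k^2/2\big)$.

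The final step is Markov's inequality. For any $\lambda>0$,
\begin{equation*}
    \Pr(X_N-X_0\ge t)\le e^{-\lambda t}\exp\Big(\lambda^2\sum_k d_k^2/2\Big).
\end{equation*}
Optimizing at $\lambda = t/\sum_k d_k^2$ yields $\exp\big(-t^2/(2\sum_k d_k^2)\big)$. A union bound over the two tails completes the proof.
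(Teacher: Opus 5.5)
Your argument is correct and complete. It is the textbook proof of Azuma's inequality: a conditional Hoeffding lemma via convexity and $\cosh x \le e^{x^2/2}$, then peeling off increments with the tower property, then Chernoff optimization at $\lambda = t/\sum_k d_k^2$ and a union bound over the two tails. You also read the $\epsilon$ in the exponent correctly as $t$, which matches how the bound is applied in Lemma~\ref{lem:azuma_crossing_bound}.

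The paper gives no proof of its own here; it simply quotes the classical result, so there is no alternative route to compare. The only thing to add is the trivial edge case $d_k = 0$. There your convex-combination step divides by zero, but the increment vanishes almost surely, so the conditional Hoeffding bound holds immediately.
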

        
The following concentration bound is proven by showing that $C$ can be viewed as a natural martingale process. 
\begin{lemma} \label{lem:azuma_crossing_bound}
    Let $C(S)$ be the connections random variable for given partition size $S$ over $n$ objects, as discussed above Lemma \ref{lem:crossings_statistics}, with $n \in 2 \bb{Z}_+$ qubits. Then for any $t \in \bb{R}_+$,
    \begin{equation*}
        \Pr(|C(S) - \bb{E}C(S)| \geq t) \leq 2 e^{-t^2/4n}
    \end{equation*}
\end{lemma}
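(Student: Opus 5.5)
We will realize $C(S)$ as the endpoint of a Doob (exposure) martingale and then apply Azuma's inequality (\Cref{thm:azuma}). Fix the lightcone $[S]\subseteq[n]$ and sample the uniformly random perfect matching sequentially. At step $k$, take the smallest still-unmatched vertex and reveal its partner. This takes exactly $m=n/2$ steps. Let $\cal{F}_k$ be the $\sigma$-algebra generated by the first $k$ revealed pairs, and define
\begin{equation*}
    X_k \coloneqq \bb{E}[C(S)\mid \cal{F}_k], \qquad k=0,\ldots,m.
\end{equation*}
Then $X_0=\bb{E}C(S)$ and $X_m=C(S)$, and $(X_k)$ is a martingale by the tower property. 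The whole task is therefore to show the bounded-difference estimate $\abs{X_k-X_{k-1}}\le 2$ almost surely. Azuma then gives
\begin{equation*}
    \Pr(\abs{C-\bb{E}C}\ge t)\le 2\exp\!\left(-\frac{t^2}{2\cdot m\cdot 4}\right)=2e^{-t^2/(4n)},
\end{equation*}
since $2\sum_k d_k^2=2\cdot(n/2)\cdot 4=4n$. This is exactly the claimed bound.

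The main step is the difference bound, and we would prove it with a switching (coupling) argument. Condition on $\cal{F}_{k-1}$. At step $k$ the current vertex $v$ is matched uniformly to one of the remaining unmatched vertices. Let $u$ and $u'$ be two possible partners. There is a measure-preserving bijection between completions of the matching containing $\{v,u\}$ and completions containing $\{v,u'\}$. Given a completion with $\{v,u\}$ and $\{u',w\}$, swap these two pairs to $\{v,u'\}$ and $\{u,w\}$. The swap changes only two edges, so it changes the number of crossing edges by at most $2$. It follows that the conditional expectations $\bb{E}[C\mid\cal{F}_{k-1},\{v,u\}]$ and $\bb{E}[C\mid\cal{F}_{k-1},\{v,u'\}]$ differ by at most $2$. $X_{k-1}$ is the uniform average of these conditional expectations over $u$, and $X_k$ is one of them. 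Hence $\abs{X_k-X_{k-1}}\le 2$.

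The care needed is mostly in checking two things. First, the swap map is a bijection between the two families of completions. Second, uniformity of the random matching makes all completions equally likely within each family. Both hold because the uniform measure on perfect matchings of the remaining vertex set is invariant under transpositions $u\leftrightarrow u'$, and the swap is exactly the action of this transposition on the rest of the matching. With that, the difference bound $d_k=2$ and the Azuma computation above complete the proof. The constant is not sharp, since the one-sided bound $d_k\le 1$ may hold for crossings, but $2$ suffices for the stated exponent.
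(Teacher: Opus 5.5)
Your proof is correct, and it has the same skeleton as the paper's. Both expose the matching pair by pair to get a Doob martingale, then apply Azuma with $d_k=2$ and $2\sum_k d_k^2=4n$. The difference is how the bounded-difference step is obtained.

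\textbf{The paper's route.} It computes the conditional expectation after the $k$th pair explicitly in three cases: the pair crosses, both endpoints are outside, or both are inside. It applies the closed-form mean of \Cref{lem:crossings_statistics} to the residual uniform matching and checks that each case differs from $X_{k-1}$ by less than $2$.

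\textbf{Your route.} You get the bound from the transposition $u\leftrightarrow u'$, a measure-preserving bijection between completions that alters only two edges. This needs no algebra. It works for any functional that moves by a bounded amount under a two-edge switch. It also shows more than you use: given $\mathcal{F}_{k-1}$, the value of $X_k$ lies in an interval of length at most $2$. So the conditional-range form of the Azuma--Hoeffding inequality would already improve the bound to $2e^{-t^2/n}$.

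\textbf{What the explicit formulas retain.} Your coupling discards some finer information. Upward increments occur only in the crossing case and are at most $2/3$ (about $1/2$ for large $n'$). Increments near $-2$ arise only when the exposed pair lies entirely on the \emph{minority} side, which is unlikely. For example, both endpoints outside gives $-2S'(S'-1)/((n'-1)(n'+1))$, which equals $-2n'/(n'+1)$ at $S'=n'$ in the paper's notation. This asymmetry is what a variance-sensitive (Freedman- or Bernstein-type) martingale bound would need to exploit, and the paper says that is the improvement it wants. It also shows the two-sided constant $2$ is essentially sharp. So your aside about $d_k\le 1$ can hold only for upward steps.
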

\begin{proof}  
    Let $m = n/2$ be a postive integer, and without loss of generality, let $[n]$ be the qubit set and $[S]$ be the light cone. For the proof, we will find it convenient to explicitly index $C(S) = C_n(S)$ by the qubit number $n$. Recall that $C_n(S)$ is defined with respect to a uniformly random partition $P_2$ of block size $2$, i.e., a ``pairing." Generate this pairing via a sequence $E_1, E_2, \ldots E_m$ of individual pairs, where $E_k$ is uniformly chosen over the qubits not already in the pairs $E_j$ for $j < k$. We may then express $C_n(S)$ as the sum
    \begin{equation*}
        C_n(S) = \sum_{k=1}^m Y_k
    \end{equation*}
    where $Y_k \in \{0,1\}$ is an indicator variable for whether $E_k$ ``crosses" $[S]$, i.e., contains elements in both $[S]$ and $[n]\setminus[S]$. For each $k \in [m+1]_0$, define the conditional expectation
    \begin{equation*}
        M_k := \bb{E}[C_n(S) |(E_j)_{j=1}^k].
    \end{equation*}
    By the tower property of conditional expectation values, the sequence $(M_k)_{k=0}^m$ forms a martingale. Edge cases include $M_0 = \bb{E}C_n(S)$ and $M_m = C_n(S)$. 
    
    For each $k \in [m]$, we wish to bound martingale increment $D_k \coloneqq \abs{M_k - M_{k-1}}$. First, observe that
    \begin{equation*}
        M_k = \sum_{j=1}^k Y_j + \bb{E}[\sum_{j=k+1}^m Y_j| \cal{F}_k ]
    \end{equation*}
    where $\cal{F}_k \coloneqq \sigma(E_1,\ldots, E_k)$ is used as shorthand for conditioning on $E_1\ldots E_k$. Hence,
    \begin{align*}
        M_k - M_{k-1} &= Y_k + \bb{E}\big[\sum_{j=k+1}^mY_j | \cal{F}_k\big] - \bb{E}\big[\sum_{j=k}^mY_j | \cal{F}_{k-1}\big]\\
        &= Y_k + \bb{E}[C_n^{(k+1)}| \cal{F}_k] - \bb{E}[C_n^{(k)} | \cal{F}_{k-1}]\\
        &=\bb{E}[C_n^{(k)}| \cal{F}_k] - \bb{E}[C_n^{(k)}| \cal{F}_{k-1}] 
    \end{align*}
    where $C_n^{(k)} = \sum_{j=k}^m Y_j$ denotes the number of remaining crossings. 

    To bound $D_k$ with probability $1$, we consider all possible instantiation $E_j = e_j$ of the first $k-1$ pairs. The effect of revealing these pairs on $C_n^{(k)}(S)$ is to induce a connections random variable on the remaining $n-2k+2$ edges, with subset of size $S'$ depending on the qubits in $[S]$ not yet paired. More concretely, $[C_n^{(k)}(S)| e_1\ldots e_{k-1}] = C_{n-2k+2}(S')$, where $S' = \abs{[S]\setminus \supp(e_1,\ldots e_{k-1})}$. Hence, by~\cref{lem:crossings_statistics}
    \begin{equation} \label{eq:kminusone_conditional}
        \bb{E}[C_n^{(k)}(S)| e_1\ldots e_{k-1}] = \bb{E}[C_{n-2k+2}(S')] = S' \frac{n-2k+2 - S'}{n-2k+1}
    \end{equation}
    Consider now the expectation of $C_n^{(k)}(S)$ given, in addition, the $k$th pair $e_k$. Up to symmetry, there are only three distint possibilities: (a) $e_k$ crosses the $[S]$ partition, (b) $e_k$ connects outside $[S]$ and (c) $e_k$ connects within $[S]$. In the crossings case $(a)$,
    \begin{equation*}
        \bb{E}[C_n^{(k)}(S)| e_1\ldots e_k] = 1 + \bb{E}[C_{n-2k}(S'-1)] = 1 + \frac{(S'-1)(n-2k+1-S')}{n-2k-1}.
    \end{equation*}
    In case (b), we have
    \begin{equation*}
        \bb{E}[C_n^{(k)}(S)| e_1\ldots e_k] = \bb{E}[C_{n-2k}(S')] = \frac{S'(n-2k-S')}{n-2k-1}.
    \end{equation*}
    Finally, in case (c), we have
    \begin{equation*}
        \bb{E}[C_n^{(k)}(S)| e_1\ldots e_k] = \bb{E}[C_{n-2k}(S'-2)] = \frac{(S'-2)(n-2k-S'+2)}{n-2k-1}.
    \end{equation*}
    Note that $S' \geq 2$ for case $(c)$. In each of these cases, we will compare with the $k-1$ conditional expectation of Eq.~\eqref{eq:kminusone_conditional} to bound the distance $D_k$. Let $n' = n-2k$. Some straightforward but tedious algebra gives rise to the following relations for each case.
    \begin{align} 
        \abs{\bb{E}[C_{n-2k+2}(S')] - \bb{E}[C_n^{(k)}(S)| e_1\ldots e_k]} &=
        \abs{\frac{S'(n'+2-S')}{n'+1} - \left(1 + \frac{(S'-1)(n'-S'+1)}{n'-1} \right)}\nonumber\\
        &=\abs{2\frac{(S'-1)(n'-S'+1)}{(n'+1)(n'-1)}}\label{eq:case_a} \\[3mm]
        \abs{\bb{E}[C_{n-2k+2}(S')] - \bb{E}[C_{n-2k}(S')]} &=
        \abs{S'\frac{n'+2-S'}{n'+1} - \frac{S'(n'-S')}{n'-1}} \nonumber\\
        &= \abs{2 \frac{S'(S' - 1)}{(n' - 1) (n' + 1)}}.\label{eq:case_b} \\[3mm]
        \abs{\bb{E}[C_{n-2k+2}(S')] - \bb{E}[C_{n-2k}(S'-2)]} &=\abs{\frac{(S'-2)(n'-S'+2)}{n'-1} - S'\frac{n'+2-S'}{n'+1}} \nonumber\\
        &= \abs{2 \frac{(n'-S'+1)(n'-S' +2)}{(n' - 1) (n' + 1)}}.\label{eq:case_c}
    \end{align}
    \noindent Since  $0 \leq s' \leq n'$ in all cases, and $s' \geq 2$ in case (c), all of \cref{eq:case_a,eq:case_b,eq:case_c} are bounded above by 2. Thus, $D_k \leq d_k = 2$ with certainty. By Azuma's inequality for martingales (see \Cref{thm:azuma}),
    \begin{align*}
        \Pr(|C_n(S) - \bb{E}C_n(S)| >t) = \Pr(|M_{n/2} - M_0| > t) &\leq 2 \exp\left(\frac{-t^2}{2\sum_{k=1}^{n/2} d_k^2 } \right) \\
        &=2 \exp(-t^2/4n).
    \end{align*}
    where we've used that $\sum_{k=1}^{n/2} 2^2 = 2n$. 
\end{proof}
\noindent Observe that this bound effectively behaves analogously to a regular Hoeffding-type bound, exhibiting Gaussian falloff outside of a variance $\sigma^2 \in O(n)$ consistent with~\cref{lem:crossings_statistics}. While tighter bounds should be achievable for specific $s$, we will thankfully not require such fine-grained analysis.

In terms of the \emph{relative} connections size $c(s) \coloneqq C(n s)/n$, the region of variation about $\mu$ falls as $O(1/\sqrt{n})$ about the average
\begin{equation} \label{eq:normalized_crossings_mean}
    \mu(s) \coloneqq \bb{E} c(s) = \frac{s(1-s)}{1-1/n}
\end{equation}
for $s \in (0,1)$. We now express the bound of~\cref{lem:azuma_crossing_bound} in terms of the normalized quantities such as $s$ and $c$. 
\begin{corollary} \label{cor:bound_step_dev}
    Let $n \in 2 \bb{Z}_+$ and $s_\ell = S_\ell/n$ be the relative subset size at timestep $\ell\geq 0$. Let  $c(s) \coloneqq C(sn)/n$ be the update rule for the variable $s\in [0,1]$, so that $s_{\ell+1} = s_\ell + c(s_\ell)$. Then 
    \begin{equation*}
        \Pr(\abs{c(s) - \bb{E}c(s)} \geq t) \leq 2 e^{-n t^2/4}
    \end{equation*}
    where $\Pr_c, \bb{E}_c$ denotes the probability and expectation over $c(s)$ at fixed $s$.
\end{corollary}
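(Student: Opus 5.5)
This is a direct rescaling of \Cref{lem:azuma_crossing_bound}. Since $c(s) = C(sn)/n$ and $S = sn$ is the (even) integer subset size, linearity of expectation gives $\bb{E}c(s) = \bb{E}C(S)/n$. Hence for any $t \geq 0$ the events
\begin{equation*}
    \{\abs{c(s) - \bb{E}c(s)} \geq t\} \quad\text{and}\quad \{\abs{C(S) - \bb{E}C(S)} \geq nt\}
\end{equation*}
coincide, because the first is obtained from the second by multiplying both sides of the inequality by $n>0$.

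I would then invoke \Cref{lem:azuma_crossing_bound} with deviation parameter $t' = nt$. This gives
\begin{equation*}
    \Pr(\abs{C(S) - \bb{E}C(S)} \geq nt) \leq 2 e^{-(nt)^2/(4n)} = 2e^{-nt^2/4},
\end{equation*}
which is the claimed bound. The lemma's probability is over the uniformly random pairing at fixed $S$, which matches $\Pr_c$ at fixed $s$. I would also record the formula for $\mu(s)$ in Eq.~\eqref{eq:normalized_crossings_mean} by dividing \Cref{lem:crossings_statistics}'s mean $S(n-S)/(n-1)$ by $n$.

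There is no real obstacle here. The only point to watch is a mismatch between the lemma and its proof: the lemma is stated with $\geq t$, while the proof's final display uses $>t$. Azuma's inequality as stated in \Cref{thm:azuma} uses $\geq$, so the $\geq$ version holds and no adjustment is needed. I would also note that $s$ must be such that $sn$ is an even integer in $[n]$, which is the setting of the lightcone process for $\ell>0$. The case $s=0$ is trivial, since then $C = 0$ deterministically.
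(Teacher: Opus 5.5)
Your proposal is correct and is exactly the paper's argument: invoke \Cref{lem:azuma_crossing_bound} with $nt$ in place of $t$ and divide through by $n$, which gives $2e^{-(nt)^2/(4n)} = 2e^{-nt^2/4}$. One small remark is that you do not need $sn$ to be even, because the martingale argument behind that lemma works for any integer $S\in[0,n]$, so it also covers $\ell=0$ with an odd $S_0$.
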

\begin{proof}
    Apply the bound from~\cref{lem:azuma_crossing_bound} with $n t$ in place of $t$, then rearrange.
\end{proof}  
    \section{Difficulty of general k-local analysis} \label{app:k3_difficult}
    While ensembles of $k$-local all-to-all circuits are easy enough to define (see~\cref{def:random_layered_architecture}), we find that the lightcone structure of the $k = 2$ case is much simpler to analyze, and thus restrict the analysis of~\cref{sec:random_circuit_ensemble} to that case. This short appendix discusses how we might generalize the analysis.

    Let us begin by considering the crossings random variable $C(S)$ for existing lightcone $L\subseteq Q$ of size $S$. Unlike the 2-local case, we cannot count the crossings simply by summing the random variables $\{X_q\}_{q\in L}$ corresponding to the number of outside qubits connected to $q\in L$ via the current layer of gates. This is because, for $k > 2$, there may be qubits $q, q' \in L$ connected to a gate $G$ that also has qubit $q'' \notin L$ as input. Summing $X_q + X_{q'}$ would thus lead to double counting of $q''$. 

    An alternative and apparently more robust strategy is as follows: let $m = n/k$ and write $C(S) = \sum_{j=1}^m C_j$, where $C_j$ is now the number of crossings associated with a given \emph{block} of the partition. In particular, $C_j = 0$ if the $j$th partition is a subset of $L$ or $Q \setminus L$. Otherwise, $C_j = b$, if there are exactly $b$ qubits in the $j$th partition that are in $Q\setminus L$. The resulting random variable is almost hypergeometric with $k$ drawns, $n-s$ successful elements, and population $n$. However, the case of $k$ ``successes" actually gives $C_j = 0$. This modifies the expectation value as 
    \begin{equation}
        \bb{E}C_j = k \frac{n-S}{n} - k \Pr(k \;\text{``success"}) = k \frac{n-S}{n} - k \frac{\binom{n-S}{k}}{\binom{n}{k}}.
    \end{equation}
    Thus, for general $k$,
    \begin{align*}
        \bb{E}C(S) = \frac{n}{k} \bb{E}C_j &= n \left(\frac{n-S}{n} - \frac{(n-S)!(n-k)!}{n! (n-S-k)!}\right) \\
        &= n - S - \frac{(n-k)!}{(n-1)!} \frac{(n-S)!}{(n-S-k)!}.
    \end{align*}
    For $k = 2$, this expression can be cleanly simplified to give the same result from~\Cref{lem:crossings_statistics} of the main paper. However, things get increasingly complicated as $k$ increases. Essentially, we can see that $\bb{E}C(S)$ will become a $k$th order polynomial in $S$. Even if specific values of $k$ are handled on an individual basis, we are not sure how to proceed with an analysis of $k$ in general. We hope this discussion helps the interested reader in analyzing the more general case.  
\end{appendices}
\end{document}